\numberwithin{equation}{section}
\theoremstyle{plain}
\newtheorem{theorem}{Theorem}
\newtheorem{conjecture}{Conjecture}
\theoremstyle{definition}
\newtheorem{definition}{Definition}
\newcommand{\T}{\top}
\DeclareMathOperator{\Tr}{Tr}
\newcommand{\dd}{\mathrm{d}}
\newcommand{\DD}{\mathrm{D}}
\renewcommand{\Re}{\ensuremath{\operatorname{Re}}}
\renewcommand{\Im}{\ensuremath{\operatorname{Im}}}
\newcommand{\Gam}[1]{\ensuremath{\Gamma{\left({#1}\right)}}}
\newcommand{\GamF}[2]{\ensuremath{\Gamma_{\!\frac{#1}{#2}}}}
\newcommand{\floor}[1]{\ensuremath{\left\lfloor{#1}\right\rfloor}}
\newcommand{\ceil}[1]{\ensuremath{\left\lceil{#1}\right\rceil}}
\newcommand{\dbracket}[1]{\ensuremath{\llbracket {#1} \rrbracket}}
\newcommand{\bZ}{\mathbb{Z}}
\newcommand{\be}{\begin{equation}}
\newcommand{\ee}{\end{equation}}
\renewcommand{\ge}{\geqslant}
\renewcommand{\le}{\leqslant}
\newcommand{\pfq}[2]{\ensuremath{{}_{#1}F_{#2}}}
\mathchardef\pFcomma=\mathcode`, % keep a copy of the comma
\newcommand*\pFq[5]{%
  \begingroup
  \begingroup\lccode`~=`,
    \lowercase{\endgroup\def~}{\pFcomma\mkern6mu}%
  \mathcode`,=\string"8000
  {}_{#1}F_{#2}\!\biggl[\genfrac..{0pt}{}{#3}{#4};#5\biggr]%
  \endgroup
}
\begin{document}

\renewcommand*{\thefootnote}{\alph{footnote}}

\begin{titlepage}

\begin{center}

~\\[1cm]

{\Huge Orbifolds and Exact Solutions \\[5pt] of Strongly-Coupled Matrix Models}

~\\[1cm]

Clay C\'{o}rdova,$^1$\footnote{e-mail: {\tt claycordova@ias.edu}} Ben Heidenreich,$^{2}$\footnote{e-mail: {\tt bheidenreich@perimeterinstitute.ca}} Alexandr Popolitov,$^{3,4,5}$\footnote{e-mail: {\tt popolit@gmail.com}} and Shamil Shakirov$^{4,6}$\footnote{e-mail: {\tt shakirov@g.harvard.edu}}

~\\[0.1cm]

$^1$ {\it School of Natural Sciences, Institute for Advanced Study, Princeton, NJ 08540, USA}

$^2$ {\it Perimeter Institute for Theoretical Physics, Waterloo, Ontario, Canada N2L 2Y5}

$^3$ {\it Institute for Theoretical and Experimental Physics, Moscow 117218, Russia}

$^4$ {\it Institute for Information Transmission Problems, Moscow 127994, Russia}

$^5$ {\it Korteweg-de Vries Institute for Mathematics, University of Amsterdam, \\ P.O. Box 94248, 1090 GE Amsterdam, The Netherlands}

$^6$ {\it Society of Fellows, Harvard University, Cambridge, MA 20138, USA}

~\\[0.8cm]

\end{center}

\begin{center}
\textbf{Abstract}
\end{center}

We find an exact solution to strongly-coupled matrix models with a single-trace monomial potential.  Our solution yields closed form expressions for the partition function as well as averages of Schur functions.   The results are fully factorized into a product of terms linear in the rank of the matrix and the parameters of the model.  We extend our formulas to include both logarthmic and finite-difference deformations, thereby generalizing the celebrated Selberg and Kadell integrals.  We conjecture a formula for correlators of two Schur functions in these models, and explain how our results follow from a general orbifold-like procedure that can be applied to any one-matrix model with a single-trace potential.
\vfill

\begin{flushleft}
\today
\end{flushleft}

\end{titlepage}

\renewcommand*{\thefootnote}{\arabic{footnote}}

\setcounter{tocdepth}{2}
\tableofcontents

\section{Introduction}

In this paper we study a class of matrix models, those with single-trace potential of monomial form
\begin{equation}
 S[X]=\mathrm{Tr}(X^{r})~,
 \end{equation}
 and generalizations thereof.  When $r=2$ the model is quadratic and free, but for $r>2$ the models we study are interacting and can be viewed as the infinite coupling limit of more familiar Gaussian plus interaction potentials.

 We demonstrate that any such monomial matrix model is exactly solvable and provide a completely factorized form of the correlators.  The solution depends on a non-perturbative choice of contour of integration in the space of matrices, which introduces an additional hidden integral parameter $0\leq a<r$ into the model.  A taste of the type of formulas that we provide is the following expression for the partition function
\begin{equation}
Z_N^{(r,a)} = \frac{\delta_{r,a}(N)}{(2\pi)^N} \prod_{i=0}^{N-1}  \Gam{\floor{\frac{i}{r}}+1} \Gam{\floor{\frac{i-a}{r}}+\frac{a}{r}+1} ~,
\end{equation}
where $N$ is the rank of the matrix and $\delta_{r,a}(N) = 0,\pm 1$, depending on $N$, see~(\ref{eqn:deltaraN}).
In~\S\ref{sec:exactsolutions} we provide a similar formula for the expectation value of Schur polynomial insertions $s_{\lambda}(X)$ where $\lambda$ is a Young diagram of at most $N$ rows:
\begin{equation}  \label{eqn:schurpreview}
\langle s_{\lambda} (X) \rangle = \frac{\delta_r(\lambda)}{r^{|\lambda|/r}} \prod_{x \in \lambda} \frac{\llbracket N + c_\lambda(x) \rrbracket_{r,0}\, \llbracket N + c_\lambda(x) \rrbracket_{r,a}}{\llbracket h_\lambda(x) \rrbracket_{r,0}} \,.
\end{equation}
Here $c_{\lambda}$ and $h_{\lambda}$ are respectively the contents and hook length of the box $x\in\lambda$, $\delta_{r}(\lambda)= 0, \pm1$ depending on $\lambda$ (see Appendix~\ref{app:schurdivisibility}), and
\begin{equation}
\llbracket n \rrbracket_{r, a} = \begin{cases} n & n \equiv a \mod r \\ 1 & \text{otherwise} \end{cases} \,.
\end{equation}
These formulas---and their logarithmic and $q$-deformed (finite difference calculus) analogs that we obtain in \S\ref{sec:logmodels} and \S\ref{subsec:qanalogs}---are similar to the celebrated Selberg and Kadell integrals~\cite{Selberg:1941,Selberg:1944,Kadell:1988,Kadell:1997}, and reduce to them when $r=1$. To illustrate their simple and explicit nature, we show an example of applying~(\ref{eqn:schurpreview}) in Figure~\ref{fig:3foldCorr}.
\begin{figure}
  \centering
  \includegraphics[width=0.75\textwidth]{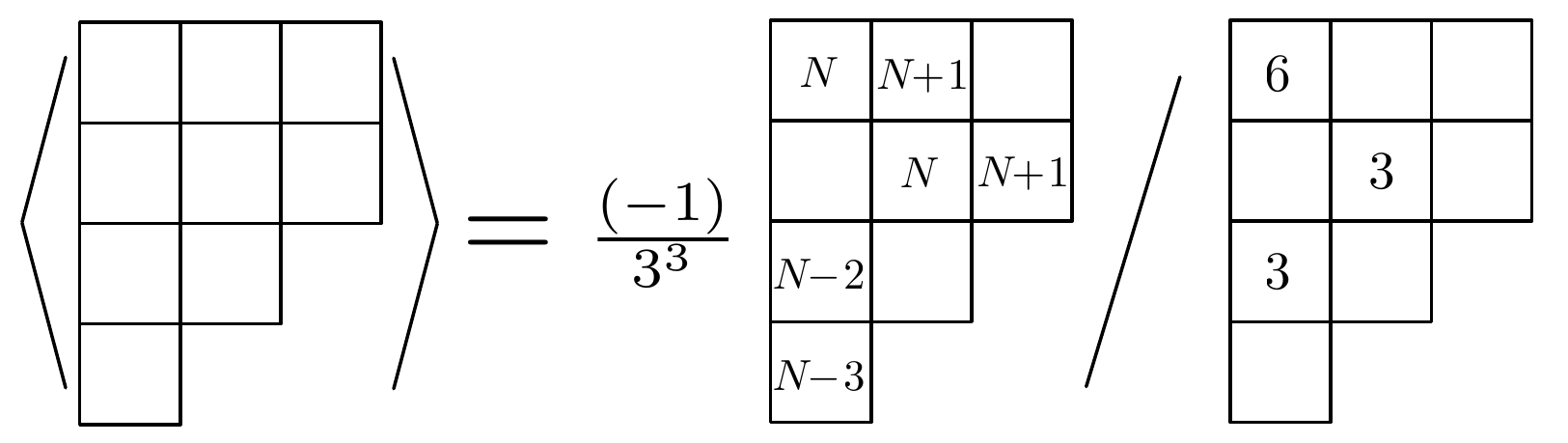}
  \caption{Evaluating a Schur polynomial average using~(\ref{eqn:schurpreview}) for the pure cubic potential $S[X] = \Tr (X^3)$ with $a=1$ and $N \equiv 0 \pmod 3$. In this example, we find $\langle s_{3,3,2,1}(X) \rangle = -\frac{N^2 (N+1)^2 (N-2) (N-3)}{2\times3^6}$ where the $3$-signature $\delta_3(\lambda)=-1$ can be found using one of several methods discussed in Appendix~\ref{app:schurdivisibility}.\label{fig:3foldCorr}}
\end{figure}

In \S\ref{sec:genconstruction} we show that our results are related to a general orbifold-like structure in matrix models, where the partition function and Schur polynomial averages for the potential $W(X^r)$ factor into $r$-fold products of those for the potential $W(X)$. This factorization is related to a combinatoric identity involving the Vandermonde, and occurs in an natural basis of complex integration contours. (For other integration contours the partition function and insertions can be expressed as a sum over factorized components.)

We emphasize that these results go far beyond the observation that one-matrix models with single-trace potentials are solvable, e.g., by the method of orthogonal polynomials. Generically, even after determining the moments of the eigenvalue potential, computing the partition function (with or without insertions) requires the evaluation of an $N\times N$ determinant, by diagonalization (as in the method of orthogonal polynomials) or otherwise. By contrast, in the examples we study these determinants can be evaluated in closed form for arbitrary $N$. One illustration of the simplicity of our results is the fact that correlation functions of $N$-independent operators are rational functions of $N$ (once discrete data such as $a$ and $(N \bmod r)$ are fixed). This suggests that these models fall into a class which is some matrix-model analog of exactly solvable quantum field theories, such as integrable systems, rational CFTs, and Liouville theory. (However, we know no precise definition of such a class.)

Our results have a variety of possible applications.  As suggested above, the simple potential
\begin{equation}
S[X]=\mathrm{Tr}(X^{2})+\lambda \mathrm{Tr}(X^{r})~,
\end{equation}
can be viewed as a toy model of variety of interacting quantum systems, where $\lambda$ controls the strength of the interactions.  When $\lambda$ is small perturbation theory can be applied, but the resulting perturbative expansion is not convergent, and advanced techniques are required to make sense of it (see, e.g.,~\cite{Dunne:2012ae} for a recent discussion).  On the other hand, when the coupling $\lambda$ is large, it is more appropriate to view the Gaussian potential $\Tr(X^2)$ as a perturbation around the monomial model that we solve exactly. As we demonstrate in \S\ref{expans}, the resulting strong coupling expansion converges for any $\lambda \ne 0$, and provides a different perspective on the interacting system.

Apart from their use as toy models, our monomial matrix models may also find applications in calculations of certain observables in ordinary quantum field theory.  For instance the case $r=1$ (the Selberg integral) is related to the integral expression for the superconformal index of a class of four-dimensional $\mathcal{N}=1$ gauge theories~\cite{vanDiejen:2001,Forrester:2007}.\footnote{Specifically, those with gauge group $Sp(2N)$, one chiral multiplet in the antisymmetric tensor representation, six in the fundamental representation, and no superpotential. These theories confine without chiral symmetry breaking~\cite{Cho:1996bi,Csaki:1996eu}.}  The fact that the integral can be evaluated in closed form is a reflection of $s$-confinement~\cite{Seiberg:1994bz,Seiberg:1994pq,Csaki:1996sm,Csaki:1996zb}: this theory has an alternative infrared description in terms of free fields.  The $r>1$ models we study here may be related to the superconformal index of the same class of $s$-confining gauge theories on a Lens space~\cite{Benini:2011nc,Razamat:2013opa,Spiridonov:2016uae}.  We comment further on this in \S\ref{subsec:index}.

Related matrix models ($r=3$) have also appeared in the study of five-dimensional gauge theory on the five-sphere~\cite{Minahan:2014hwa}.  In that case the quadratic piece of the model controls the Yang-Mills term and the cubic descends from the Chern-Simons interaction.  The pure monomial model describes the strongly-coupled pure Chern-Simons theory that arises at infinite Yang-Mills coupling.

Finally, another possible avenue of application for our results is described in \S\ref{subsec:instanton} concerns a generalization of the AGT conjecture \cite{Alday:2009aq,Wyllard:2009hg}.  In~\cite{Kimura:2011zf} the partition function of four-dimensional gauge theory on an orbifold $\mathbb{C}^{2}/\mathbb{Z}_{r}$ was studied resulting in matrix models similar to those considered here.  It would be interesting to determine the precise connection, and to understand if our results may be used to compute three-point functions in para-Liouville theory (see, e.g., \cite{Bonelli:2009zp,Belavin:2011pp,Nishioka:2011jk,Bonelli:2011jx,Bonelli:2011kv}), thus generalizing~\cite{Dotsenko:1984nm}.\footnote{There is also a possible connection between our results and the $(r+1)$-point function of ordinary Louiville theory for a special arrangements of the points.} We leave this as a potential direction for future research.  

\section{Matrix Model Review}

In this section we review general properties of random matrix models which are pertinent to our results.  For more comprehensive reviews, see, e.g., \cite{DiFrancesco:1993cyw,Marino:2004eq}. We focus on the relation between perturbative and non-perturbative approaches, the role of the integration contour in the latter and the role of reflection positivity.

A matrix model is an average over random matrices, of the schematic form:
\be \label{eqn:Zintegral}
Z = \frac{1}{\mathrm{Vol\ } G} \int \DD X\, e^{-S[X]} \,,
\ee
where $S[X]$ is an action that depends on one or more matrices $X$, $\DD X$ is some measure for integration over these matrices, and $G$ is a gauge group, i.e.\ a symmetry of the action whose singlet sector defines the set of observables. Observables are defined by inserting $X$-dependent operators into the partition function:
\be \label{eqn:ZOintegral}
Z[\mathcal{O}] \equiv \frac{1}{\mathrm{Vol\ } G} \int \DD X\, \mathcal{O}[X]\, e^{-S[X]} \,,
\ee
whereas normalized expectation values are obtained by dividing by the partition function
\be \label{eqn:ZOintegral-normalized}
\langle \mathcal{O} \rangle \equiv \frac{Z[\mathcal{O}]}{Z} \,.
\ee
In the case where $G$ is non-trivial, only $G$-invariant operators are permissible.

There is an obvious analogy between matrix models and quantum field theories; in essence, a matrix model is a quantum field theory in zero dimensions. We pursue this analogy in more detail below, as it will provide an interesting context for our main results.

As an example, Hermitian one-matrix models are defined by the measure
\be \label{eqn:Hermitianmeasure}
\DD X = \prod_{i=1}^N \dd X_{i i} \prod_{i<j}^N \dd X_{i j}\, \dd X_{i j}^\ast
\ee
where $X$ is a Hermitian $N\times N$ matrix and the independent real components are integrated from $-\infty$ to $\infty$. The measure is invariant under $U(N)$ transformations
\be
X \to U^\ast X U \,,
\ee
where $U^\ast$ denotes the Hermitian conjugate of $U$,
so it is natural to take $G = U(N)$.

Henceforward, we focus on such $U(N)$ Hermitian matrix models with single-trace potentials of the form $S[X] = \Tr W(X)$, where $W$ is any analytic function. Expanding about a critical point of $W$, we obtain
\be \label{eqn:interactingmodel}
S[X] = \frac{1}{2} \Tr X^2 +\sum_{p=3}^\infty \lambda_p \Tr X^p \,.
\ee
A Gaussian matrix model has $S[X] = \frac{1}{2} \Tr X^2$, and is free in the sense that $\Tr X^2 = \sum_{i, j} X_{i j} X_{j i} = \sum_{i, j} |X_{i j}|^2$, so that the integral factors into one-dimensional integrals. Models with $\lambda_p \ne 0$ for $p>2$ are interacting, in that the integral no longer factors in this way. Two examples of interacting models that we will use frequently are the cubic model, $S[X] = \frac{1}{2} \Tr X^2 + \lambda_3 \Tr X^3$, and the quartic model $S[X] = \frac{1}{2} \Tr X^2 +\lambda_4 \Tr X^4$.  Interacting theories can be studied perturbatively by splitting $S =  \frac{1}{2} \Tr X^2 + S_{\rm int}$ and expanding $e^{-S_{\rm int}}$ in a formal power series in the coupling constants.

\subsection{The loop equations}

An alternative approach to these matrix models is to systematically exploit integration by parts identities (a.k.a.\ Ward identities).  The resulting formulas are known as \emph{loop equations} (see e.g.~\cite{DiFrancesco:1993cyw}), and we review them below.

Let $M[X] = M_0(\Tr X^i) + M_1(\Tr X^i) X + M_2(\Tr X^i) X^2+\ldots$ be a polynomial function of $X$ and its traces, and introduce the matrix differential operator
\begin{equation}
(\partial_{X})_{i,j}=\frac{\partial}{\partial (X)_{j,i}}~.
\end{equation}
Now consider the total derivative:
\be \label{eqn:loopboundary}
0 = \frac{1}{\mathrm{Vol\ } G} \int \DD X\, \Tr \partial_X (M[X] e^{-S[X]}) \,,
\ee
where we assume that the corresponding boundary term vanishes. We then obtain:
\be \label{eqn:loopequation0}
  0 =
  Z\! \left[ \Tr(\partial_X M)  - \Tr (M \partial_X) S \right] \,.
\ee
In particular, taking $M = \mathcal{O} X^{k+1}$ for $k\ge -1$ for some gauge invariant operator $\mathcal{O} = \mathcal{O}(\Tr X^i)$ and using $\partial_X X^{n+1} = \sum_{i=0}^n (\Tr X^i) X^{n-i}$ and $\partial_X \Tr X^{n+1} = X^n$, we find:
\be \label{eqn:loopequation}
  0 =
  Z\! \left[  \mathcal{O} \sum_{i = 0}^k \Tr X^i \Tr X^{k - i} + \Tr (X^{k + 1} \partial_X) \mathcal{O} - \mathcal{O} \Tr (X^{k + 1} \partial_X) S \right] \,,
\ee
for $k \ge -1$.

These are the loop equations for the one-matrix model~(\ref{eqn:Zintegral}). Besides the action $S[X]$, the only information about the matrix integral that (\ref{eqn:loopequation}) encodes is the vanishing of the boundary term~(\ref{eqn:loopboundary}). Nonetheless, the loop equations are a powerful tool for solving the model. For instance, the Gaussian matrix model, $S = \frac{1}{2} \Tr X^2$, has the loop equations
\be \label{eqn:Gloopequation}
  \left\langle  \mathcal{O} \Tr X^{k + 2} \right\rangle =
  \sum_{i = 0}^k  \left\langle  \mathcal{O} \Tr X^i \Tr X^{k - i}\right\rangle + \left\langle \Tr (X^{k + 1} \partial_X) \mathcal{O} \right\rangle \,.
\ee
Suppose that $\mathcal{O}$ is a polynomial in $X$, and split it into monomials. Notice that the total power of $X$ on the left-hand side (LHS) of the equation is two greater than on the right-hand side (RHS). We can compute any polynomial correlator by iteratively replacing terms of the form $\langle  \mathcal{O} \Tr X^{k+2} \rangle$, $k \ge -1$, with the RHS of~(\ref{eqn:Gloopequation}). Since every non-trivial gauge-invariant monomial in $X$ takes this form, and the overall power of $X$ is strictly decreasing, this procedure reduces every polynomial correlator to $\langle 1 \rangle = 1$, allowing all such correlators to be computed.

Thus, the loop equations provide a nearly-complete solution to the Gaussian matrix model, fixing everything but the partition function itself (which requires a direct evaluation of the integrals). The loop equations also provide a perturbative solution to interacting models. The action~(\ref{eqn:interactingmodel}) gives the loop equations:
\be
 \left\langle  \mathcal{O} \Tr X^{k + 2} \right\rangle =
  \sum_{i = 0}^k  \left\langle  \mathcal{O} \Tr X^i \Tr X^{k - i}\right\rangle + \left\langle \Tr (X^{k + 1} \partial_X) \mathcal{O} \right\rangle - \sum_{p>2} p \lambda_p \left\langle  \mathcal{O} \Tr X^{k + p} \right\rangle \,.
\ee
Because of the last term, the largest power of $X$ on the RHS is now larger than on the LHS, and we cannot solve the model exactly using the same method as before. However, if we truncate at some fixed order $\lambda_p^{k_p}$ in perturbation theory with $\sum_p k_p$ finite, then the power of $X$ increases in at most $\sum_p k_p$ steps by at most $\sum_p (p-2) k_p$ in total, and the same iterative procedure as for the free theory terminates in a finite number of steps.

Thus, the loop equations fix the correlators of both the free theory and perturbatively in interacting theories.

\subsection{Contour dependence} \label{sec:contourdependence}

It is interesting to ask whether the loop equations likewise solve interacting theories \emph{non-perturbatively}. As an example, we consider the cubic matrix model $S = \frac{1}{2} \Tr X^2 + \lambda_3 \Tr X^3$. We rewrite the loop equations as
\be
3 \lambda_3 \left\langle \mathcal{O} \Tr X^{k + 3} \right\rangle = \sum_{i = 0}^k  \left\langle  \mathcal{O} \Tr X^i \Tr X^{k - i}\right\rangle + \left\langle \Tr (X^{k + 1} \partial_X) \mathcal{O} \right\rangle - \left\langle  \mathcal{O} \Tr X^{k + 2} \right\rangle \,,
\ee
so that the LHS has a power of $X$ one greater than the RHS. We can then iteratively apply these equations to simplify any polynomial insertion, just as for the Gaussian theory. As before, this procedure terminates in a finite number of steps due to the strictly decreasing maximum power of $X$. However, unlike in the Gaussian case, not every monomial insertion is of the form $\langle \mathcal{O} \Tr X^{k+3} \rangle$ for $k\ge -1$, hence the best we can do is to express every correlator in terms of the unknown averages $\langle (\Tr X)^p \rangle$ for $p > 0$, where $\langle (\Tr X)^0 \rangle = 1$.

This is not the full story, because for finite-dimensional $N\times N$ matrices there are operator equations\footnote{Recall that an operator equation $A = B$ holds if and only if $\langle A \mathcal{O} \rangle = \langle B \mathcal{O} \rangle$ for any operator $\mathcal{O}$.} relating $\Tr X^P$, $P>N$, to polynomials in $\Tr X^p$, $p\le N$. These trace relations exist because gauge-invariant polynomials in $X$ only depend on the $N$ eigenvalues of $X$, so there are only $N$ independent gauge-invariant operators. For instance,
\begin{align} \label{eqn:opeqnN}
\Tr X^{N+1} = \frac{(-1)^{N+1}}{N!} (\Tr X)^{N+1} + \ldots \,,
\end{align}
where the omitted terms involve at least one factor of $\Tr X^p$, $2\le p\le N$. Applying the loop equations to~(\ref{eqn:opeqnN}), we can express $\langle (\Tr X)^{N+1} \rangle$ in terms of $\langle (\Tr X)^p \rangle$, $p\le N$. Multiplying~(\ref{eqn:opeqnN}) by $(\Tr X)^k$ and applying the loop equations once more, we find that in general $\langle (\Tr X)^P \rangle$ for $P>N$ can be expressed in terms of the $N$ unknowns $\langle (\Tr X)^p \rangle$ for $1\le p\le N$.

Although it is not yet clear, one can show that there are no further relations between the unknowns $\langle (\Tr X)^p \rangle$, $1\le p\le N$. Thus, non-perturbatively the loop equations do not completely solve the cubic matrix model. Instead, the correlators depend on $N$ additional parameters which were invisible in perturbation theory.

To determine the nature of these parameters, we consider the case $N=1$, corresponding to the integral:
\be \label{eqn:cubicN1}
Z = \frac{1}{2 \pi} \int_{-\infty}^{\infty} \dd x \, e^{-\frac{1}{2} x^2 - \lambda x^3} \,.
\ee
However, we notice an immediate problem: this integral diverges unless $\Re \lambda = 0$! This problem, arising from the factor of $e^{-(\Re \lambda) x^3}$ at either $x \to \infty$ or $x \to -\infty$, is invisible in perturbation theory, but makes the Hermitian matrix model ill-defined non-perturbatively (unless $\lambda$ is imaginary).

Recall that the loop equations (hence also perturbation theory) are insensitive to the choice of integration contour. A natural way to define a non-perturbative completion of the model is to choose a different integration contour $C$ such that the integral~(\ref{eqn:cubicN1}) is finite. To avoid introducing boundary terms into the loop equations, the integrand must vanish on $\partial C$. This occurs asymptotically at $|x| \to \infty$ with $|\arg(\lambda x^3)| < \frac{\pi}{2}$ (or with $|\arg(\lambda x^3)| = \frac{\pi}{2}$ and $|\arg(x^2)| < \frac{\pi}{2}$). There are three such regions, centered on $\arg(\lambda^{1/3} x) = 0, \pm \frac{2 \pi}{3}$, so there are two linearly-independent closed contours $C_1, C_2$ connecting these regions, see Figure~\ref{fig:cubiccontours}.
\begin{figure}
  \centering
  \includegraphics[width=0.25\textwidth]{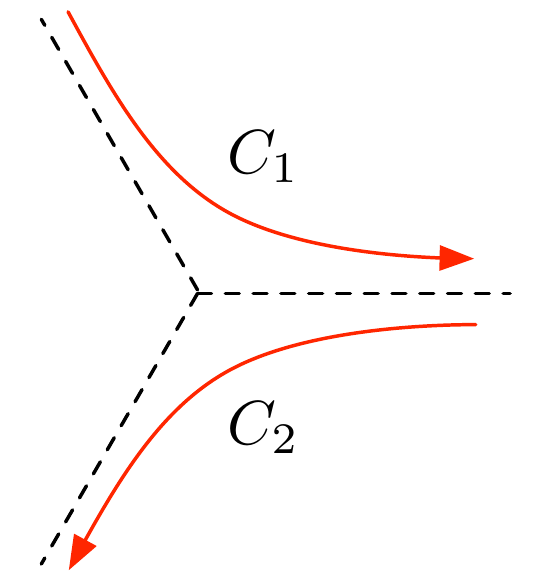}
  \caption{A basis of possible integration contours for a cubic potential.\label{fig:cubiccontours}}
\end{figure}
 For a general linear combination $C = c_1 C_1 + c_2 C_2$, one can check that the integrals $I_0 \equiv \int \dd x \, e^{-\frac{1}{2} x^2 - \lambda x^3}$ and $I_1 \equiv \int \dd x \, x e^{-\frac{1}{2} x^2 - \lambda x^3}$ depend on different linear combinations of $c_{1,2}$, hence a choice of $I_{0,1}$ is equivalent to a choice of contour. Since $Z = I_0 / 2 \pi$ and $\langle \Tr X \rangle = I_1 / I_0$, the information not specified by the loop equations is precisely the choice of integration contour.

Thus, we conclude that the choice of integration contour $C$ is a \emph{non-perturbative} ambiguity in the $N=1$ cubic matrix model, and this ambiguity precisely accounts for the missing information in the loop equations.  The need to introduce contours in field space to discuss the theory non-perturbatively

To extend this analysis to $N>1$, it is convenient to re-express the random matrix $X$ in terms of its $N$ eigenvalues $x_i$, $i=1,\ldots, N$. To do so, we apply the Faddeev-Popov gauge-fixing procedure to the $U(N)$ gauge symmetry of the Hermitian matrix model. The result is the eigenvalue model:
\be \label{eqn:Zeigenvalue}
Z = \frac{1}{N!} \int \prod_{i=1}^N \frac{\dd x_i}{2 \pi} \prod_{i<j} (x_i - x_j)^2 e^{- \sum_i W(x_i)} \,,
\ee
where $\prod_{i<j} (x_i - x_j)^2$ --- the square of the Vandermonde determinant $\det x_i^{j-1}$ --- is the gauge-fixing determinant, and we specialize to a single-trace potential $S[X] = \Tr W(X)$ for simplicity. The residual gauge-symmetry is $S_N \rtimes U(1)^N$, with volume $(2 \pi)^N N!$, where $U(1)^N$ acts trivially and $S_N$ permutes the eigenvalues. Gauge invariant operators are (sufficiently regular) symmetric functions of the $x_i$, the simplest class of which are symmetric polynomials, corresponding to the vector space generated by multitrace operators in the original matrix model.

The permissible integration contours for the cubic model $W(x) = \frac{1}{2} x^2 + \lambda x^3$ are the same as those for $N=1$ described above, except that the contour for each eigenvalue can be chosen separately. For $S_N$ invariant insertions, only the number of eigenvalues integrated along each contour will affect the answer, hence there is a basis $C_{N_1, N_2}$ of integration contours with $N_1$ eigenvalues integrated along $C_1$ and $N_2$ eigenvalues integrated along $C_2$. A general contour takes the form:
\be
C = \sum_{i=0}^N c_{i, N- i} C_{i, N-i} \,.
\ee
The insertions $I_p \equiv Z[(\Tr X)^p]$ for $0 \le p \le N$ will depend on linearly independent combinations of the $c_{i, N-i}$,\footnote{Assume the opposite. This implies the existence of a linear combination of contours such that $Z[(\Tr X)^p]=0$ for $0 \le p \le N$, hence by the loop equations $Z[\mathcal{O}]=0$ for any polynomial operator. In particular, $[1]=0$ and $Z[\Tr X^3] = 0$, the latter implying that $Z$ is independent of $\lambda$. This is a contradiction, because $Z[1] = 0$ at $\lambda = 0$ is incompatible with an analysis of the Gaussian.} so that the data not fixed by the loop equations exactly correspond to the choice of integration contour. As before, this is a non-perturbative ambiguity (invisible in perturbation theory).

Thus, the cubic eigenvalue model is sensitive to integration contour at the non-perturbative level, with $N+1$ independent possible contours. Except in the special case where $\lambda$ is purely imaginary, the real axis is not a possible choice of integration contour.

The quartic model $W(x) = \frac{1}{2} x^2 + \lambda x^4$ is an interesting counterpoint. In this case, the integral along the real axis converges for $\Re \lambda \ge 0$, hence there is a ``canonical'' choice of integration contour. Nonetheless, other integration contours are available --- a total of $\frac{(N+1)(N+2)}{2}$ are linearly independent --- with the same perturbation series and loop equations. In this sense, non-perturbative ambiguities persist, and the canonical resolution of these ambiguities is just one of many possibilities.

\subsection{Reflection positivity}

Nonetheless, the real axis is a distinguished contour, because only for this contour is the eigenvalue integral equivalent to an integral over \emph{Hermitian} matrices ($X^{\ast} = X$), with the measure~(\ref{eqn:Hermitianmeasure}). In principle, for other contours the eigenvalue integral can be written as a matrix integral of the holomorphic form
\be
\DD X = \bigwedge_{i,j=1}^N \dd X_{i j} \,,
\ee
over the cycle defined by the $U(N)$ orbit of the eigenvalue contours. This cycle can be defined in a $U(N)$-invariant way by equations of the form
\be
[X, X^\ast] = 0 \qquad, \qquad f_i(\Tr X, \ldots \Tr X^N) = 0 \,,
\ee
where the first equation specifies that $X$ is normal and the $f_i$ are $N$ real functions of $\Tr X, \ldots, \Tr X^N$ and their conjugates which specify the eigenvalue contour implicitly. For instance, the contour described by:\footnote{The constraints on $\Tr X^{2k}$, $k=1,\ldots,N$ can in principle be rewritten as constraints on $\Tr X, \ldots, \Tr X^N$.}
\be
[X, X^\ast] = 0 \qquad, \qquad \Im (\Tr X^{2k}) = 0\,, \qquad (k =1,\ldots, N)\,,
\ee
is relevant to the quartic model. Equivalently, this contour consists of normal matrices $X$ whose squares are Hermitian.

Despite the fact that $X \ne X^{\ast}$ on a general contour, the eigenvalue model --- and the related matrix integral --- retains some of the formal properties of a Hermitian matrix model. To illustrate this, we define an antilinear involution $\dag$ on the operator algbera by $X^{\dag} \equiv X$ (noting that in general $X^\dag \ne X^{\ast}$). The action of $\dag$ on an arbitrary operator is specified by antilinearity together with $(\Tr X^p)^\dag = (\Tr X^p)$.
 Provided that we choose a real potential and a contour satisfying $C = C^\ast$, this implies the formal property
\be
Z[\mathcal{O}^\dag] = Z[\mathcal{O}]^\ast \,,
\ee
for any operator $\mathcal{O}$.

To distinguish between eigenvalue models with these formal reality properties and an actual integral over Hermitian matrices, we note that the latter satisfies reflection positivity:
\be
Z[\mathcal{O}^\dag \mathcal{O}] > 0 \qquad \text{for any operator $\mathcal{O} \ne 0$.}
\ee
By contrast, the cubic model is in general \emph{not} reflection-positive.
Consider the $N=1$ model, for example. Reflection positivity requires that $\langle A_i^\dag A_j \rangle$ is a positive-definite matrix for any set of linearly-independent operators $\{ A_i \}$. Choosing the operators $\{1, \Tr X, \Tr X^2\}$ and applying the loop equations, we find that the matrix $\langle A_i^\dag A_j \rangle$ has at least one non-positive eigenvalue unless $|\lambda| < 2^{-1/2}\, 3^{-7/4} \simeq 0.1$. The constraint on $\lambda$ becomes successively tighter as we consider larger operators ($\Tr X^p$ for $p>2$) and indeed the non-polynomial operator:
\be
\mathcal{O} = (\Tr X) e^{\frac{1}{4} \Tr X^2}
\ee
satisfies $\langle \mathcal{O}^\dag\mathcal{O} \rangle = 0$, independent of $\lambda \ne 0$. Thus, the $N=1$ model is not reflection-positive, regardless of the choice of integration contour.\footnote{Technically, we could restrict the operator algebra to polynomial operators, eliminating this problematic operator, but for any fixed value of $\lambda \ne 0$, polynomial operators of finite degree will nonetheless violate reflection-positivity.} We expect the same to be true for $N>1$.

The cubic matrix model is therefore analogous to a non-unitary QFT. The quartic matrix model, by contrast, is manifestly reflection-positive when the integration contour is chosen to be the real line. A similar analysis to above shows that other contours are not reflection-positive, with increasingly large operators required to violate reflection-positivity as the integration contour approaches the real line. Thus, while the quartic integrated along the real line is analogous to a unitary QFT, other integration contours behave like non-unitary QFTs.

\subsection{The weak- and strong-coupling expansions}
\label{expans}

The existence of non-perturbative ambiguities is closely related to the divergence of perturbation theory, which typically defines only an asymptotic series near an essential singularity on the Riemann sphere. As an example, consider the $N=1$ quartic model, integrated along the real axis:
\be \label{eqn:quarticN1}
Z = \frac{1}{2\pi} \int_{-\infty}^{\infty} \dd x\, e^{-\frac{1}{2} x^2 - \lambda x^4} \,.
\ee
Expanding the partition function in powers of $\lambda$, we obtain formally:
\be \label{eqn:formalZseries}
Z = \frac{1}{2\pi} \sum_{p=0}^{\infty} \frac{(-1)^p}{p!} \lambda^p \int_{-\infty}^{\infty} \dd x\, x^{4 p} e^{-\frac{1}{2} x^2} = \frac{1}{\sqrt{2 \pi}} \sum_{p=0}^{\infty} (-1)^p \frac{(4p-1)!!}{p!} \lambda^p
\ee
where $n!! \equiv n (n-2) (n-4) \ldots$. Since
\be
\frac{(4p-1)!!}{p!} = \frac{2^{2p} \Gam{2p+\frac{1}{2}}}{\Gamma(p+1) \Gam{\frac12}} \sim p^p\,, \qquad (p \gg 1)\,,
\ee
we conclude that the radius of convergence of the formal perturbation series~(\ref{eqn:formalZseries}) is zero. Similar divergences appear with insertions and in normalized correlators.

Heuristically, perturbation theory diverges because for $|x| \gg 1/\sqrt{|\lambda|}$ the quartic coupling dominates the integral~(\ref{eqn:quarticN1}), hence a perturbative expansion in $\lambda$ is not justified. In particular, the integral diverges for $\Re \lambda < 0$, whereas it converges for $\Re \lambda > 0$, implying that $\lambda = 0$ is an essential singularity in the holomorphic function $Z(\lambda)$. This is similar to how contour dependence appears non-perturbatively. Since the quartic coupling dominates for $|x| \gg 1/\sqrt{|\lambda|}$, there are additional integration contours where the quadratic term $e^{-\frac{1}{2} x^2}$ diverges but the quartic term keeps the integral finite --- such as the imaginary axis --- and the choice of integration contour generates a non-perturbative ambiguity.\footnote{Resolving these non-perturbative ambiguities in unitary theories is an active research topic, see, e.g.,~\cite{Dunne:2015eaa}.}

Thus, these twin problems of perturbation theory --- divergence and insensitivity to non-perturbative physics --- are both linked to the dominance of interactions at large field values. A novel approach would be to instead expand the exponential in the quadratic coupling, keeping the interaction term fixed. To do so, we rescale $x \to x / \lambda^{1/4}$ to obtain the model
\be \label{eqn:quarticN1rescaled}
Z = \frac{1}{2\pi} \int_{-\infty}^{\infty} \dd x\, e^{-\frac{1}{2 g^2} x^2 - x^4} \,,
\ee
up to a normalizing factor for $Z$, where $g = \sqrt{\lambda}$. If we now expand the exponential about the \emph{strong-coupling limit}, $g \to \infty$, we obtain
\be
Z = \frac{1}{2\pi} \sum_{p=0}^\infty \frac{(-1)^p}{2^p p!} g^{-2 p} \int_{-\infty}^{\infty} \dd x\, x^{2 p} e^{- x^4} = \frac{1}{4 \pi} \sum_{p=0}^\infty \frac{(-1)^p \Gam{\frac{p}{2}+\frac{1}{4}}}{2^p p!} g^{-2 p} \,.
\ee
Using Stirling's approximation, we conclude that
\be
\frac{\Gam{\frac{p}{2}+\frac{1}{4}}}{2^p p!} \sim p^{-p/2} \,,
\ee
hence the perturbation series converges! The sum can be performed explicitly,
\be
Z = \frac{1}{4 \pi} \sum_{p=0}^\infty \frac{(-1)^p \Gam{\frac{p}{2}+\frac{1}{4}}}{2^p p!} g^{-2 p} = \frac{1}{\sqrt{32 \pi^2 g^2}}\, e^{\frac{1}{32 g^4}} K_{\frac{1}{4}}\!\left(\frac{1}{32 g^4}\right) \,,
\ee
where $K_{\nu}(z)$ is the modified Bessel function of the second kind, a result which can be verified by direct integration of~(\ref{eqn:quarticN1rescaled}).

Provided that a solution in the strong coupling limit, $g \to \infty$, is available, the strong-coupling expansion $g \gg 1$ has much better properties than the weak-coupling expansion $g \ll 1$. In particular, the value of $g \ne 0$ does not affect the convergence of the partition function, so we expect that $Z(1/g^2)$ is analytic at $1/g^2 = 0$, and the expansion in $g\gg 1$ should converge. For the same reason, there are no analogs of non-perturbative ambiguities. Indeed, the pure quartic model, $g = \infty$, depends on the same set of contours as at any intermediate coupling $g\ne 0$, so the ambiguities that went unresolved perturbatively at weak coupling are fixed at strong coupling, even before perturbing!

In the next section, we explore the feasibility of solving the cubic or quartic model in the strong coupling limit $g \to \infty$ for arbitrary finite $N$, which would enable a solution for any $g$ via the strong-coupling expansion described above.

\section{Exact Solutions at Strong Coupling} \label{sec:exactsolutions}

Motivated by the above considerations, we analyze one-matrix models in the limit where an $r$-point coupling blows up, $\lambda_r \to \infty$. After a field redefinition, these correspond to monomial matrix models $S[X] = \Tr X^r$. We begin by considering the quartic with a real (reflection-positive) integration contour, before generalizing to other contours and potentials.

\subsection{The real-line quartic} \label{sec:quartic}

We consider the pure quartic model:
\be
Z_N  = \frac{1}{N!} \int_{-\infty}^{\infty} \prod_{i=1}^N \frac{\dd x_i}{2 \pi} \, e^{-x_i^4} \prod_{i>j} (x_i - x_j)^2 \,.
\ee
We can rewrite the partition function as a determinant using a standard trick; note that
\be
\prod_{i>j} (x_i - x_j) = \det V_{i j} = \det x_i^{j-1} \,,
\ee
where $V_{i j} = x_i^{j-1}$ is the Vandermonde matrix. Fixing the $S_N$ permutation symmetry of the partition function, we obtain
\be
Z_N  = \int_{-\infty}^{\infty} \prod_{i=1}^N \frac{\dd x_i}{2 \pi} \, x_i^{i-1} e^{-x_i^4} \times (\det x_j^{k-1}) = \det_{N \times N} Z_1[x^{i+j-2}] \,,
\ee
where $\det_{N \times N}$ denotes the determinant of the upper-left $N\times N$ block, $i,j=1,\ldots,N$. Evaluating the integral directly, we find
\be \label{eqn:Rquartic1}
Z_1[x^p] = \int_{-\infty}^{\infty} \frac{\dd x}{2 \pi}\, x^p e^{-x^4} = \begin{cases} \frac{1}{4 \pi} \Gam{\frac{p+1}{4}} & p \in 2 \bZ \\ 0 & p \in 2 \bZ+1\end{cases} \,.
\ee
Thus,
\be \label{eqn:quarticdet}
Z_N = \frac{1}{(4 \pi)^N} \det_{N \times N} \frac{1+(-1)^{i+j}}{2}\cdot\Gam{\frac{i+j-1}{4}} \,.
\ee
By similar reasoning
\be
Z_N\!\left[\prod_{a=1}^n \Tr X^{p_a}\right] = \frac{1}{(4 \pi)^N} \sum_{k_1, \ldots, k_n = 1}^N \det_{N \times N} \frac{1+(-1)^{i+\sum_a p_a \delta_{i k_a}+j}}{2}\cdot\Gam{\frac{i+\sum_a p_a \delta_{i k_a}+j-1}{4}} \,.
\ee

These relatively simple results mask the complexity of the model inside a determinant. 
To illustrate this complexity, we evaluate the partition function for small values of $N$:
\begin{multline} \label{eqn:RquarticZn}
Z_{0,1,\ldots } = \Bigg\{1, \frac{\GamF{1}{4}}{4 \pi}, \frac{\GamF{1}{4} \GamF{3}{4}}{16 \pi^2}, \frac{\GamF{1}{4}^2 \GamF{3}{4} - 4\, \GamF{3}{4}^3}{256 \pi^3}, \frac{-\GamF{1}{4}^4 + 16\, \GamF{1}{4}^2 \GamF{3}{4}^2 - 48\, \GamF{3}{4}^4}{2^{14}\, \pi^4}, \frac{-\GamF{1}{4}^5+20\, \GamF{1}{4}^3 \GamF{3}{4}^2-96\, \GamF{1}{4} \GamF{3}{4}^4}{2^{18}\, \pi^5}, \\
\frac{-\GamF{1}{4}^5 \GamF{3}{4}+17\, \GamF{1}{4}^3 \GamF{3}{4}^3-72\, \GamF{1}{4} \GamF{3}{4}^5}{2^{20}\, \pi^6}, \frac{-5\, \GamF{1}{4}^6 \GamF{3}{4}+105\, \GamF{1}{4}^4 \GamF{3}{4}^3-684\, \GamF{1}{4}^2 \GamF{3}{4}^5+1296\, \GamF{3}{4}^7}{2^{26}\, \pi^7}, \ldots 
\Bigg\}
\end{multline}
where we use the shorthand $\Gamma_x \equiv \Gamma(x)$. Here we have simplified the determinant by reducing $\Gam{\frac{2k+1}{4}}$ to an rational number times either 
$\Gam{\frac{1}{4}}$ or $\Gam{\frac{3}{4}}$; since $\Gam{\frac{1}{4}}/\Gam{\frac{3}{4}} \simeq 2.958675$ is trancendental, there are no obvious further simplifications. The same complexity is evident in the normalized correlators, for instance
\be
\langle \Tr X^2 \rangle_{0,1,\ldots} = \bigg\{ 0 , \frac{\GamF{3}{4}}{\GamF{1}{4}}, \frac{\GamF{1}{4}^2 + 4\, \GamF{3}{4}^2}{4\, \GamF{1}{4} \GamF{3}{4}}, \frac{\GamF{1}{4}^3 + 4\, \GamF{1}{4}\GamF{3}{4}^2}{4\, \GamF{1}{4}^2 \GamF{3}{4}-16 \, \GamF{3}{4}^3},  \frac{16\, \GamF{1}{4}\GamF{3}{4}^3}{-\GamF{1}{4}^4+16\, \GamF{1}{4} \GamF{3}{4}-48 \, \GamF{3}{4}^4}, \ldots \bigg\} \,,
\ee
with increasingly complicated expressions for larger $N$.

Nonetheless, the result~(\ref{eqn:RquarticZn}) has some obvious structure. The partition function takes the general form:
\be \label{eqn:RquarticZsectors}
Z_N = \frac{1}{(2\pi)^N} \sum_{i=0}^N a_{i;N} \GamF{1}{4}^i \GamF{3}{4}^{N-i}\,,
\ee
where the coefficients $a_{i;N}$ are rational. 
More generally, for any polynomial operator $\mathcal{O}$
\be \label{eqn:Rquarticsectors}
Z_N[\mathcal{O}] = \frac{1}{(2\pi)^N} \sum_{i=0}^N a_{i;N}(\mathcal{O})\, \GamF{1}{4}^i \GamF{3}{4}^{N-i} \,,
\ee
for rational coefficients $a_{i;N}(\mathcal{O})$, since the eigenvalue integral can be evaluated by expanding the Vandermonde determinant times $\mathcal{O}$ into a sum of monomials and applying~(\ref{eqn:Rquartic1}).

It is natural to interpret~(\ref{eqn:Rquarticsectors}) as a sum over sectors, $Z[\mathcal{O}] = \sum_i Z_i[\mathcal{O}]$, in which case the complexity of the real-line quartic model can be partially ascribed to the increasing number of sectors --- $N+1$ for $Z_N$. One approach to solving the model is to solve each sector individually. After identifying the origin of these sectors, we will show that at least some of them admit exact solutions for all $N$.

\subsection{Pure and mixed phases} \label{sec:puremixed}

We now generalize to the potential:
\be
S[X] = \Tr X^r \,,
\ee
which includes the Gaussian $(r=2)$, cubic $(r=3)$, and quartic $(r=4)$ as special cases. The corresponding $N=1$ matrix model admits $r-1$ closed contours on which the integral converges, constructed as follows. Let $B_{r,j}$ denote the contour from $0$ to $\omega_r^j\cdot\infty$, where $\omega_r \equiv e^{2\pi i/r}$. These contours are open, but represent all possible asymptotics for which the integral converges. We form the ``Fourier-transformed'' contours:
\be \label{eq:contours}
C_{r,a} \equiv \sum_{j = 0}^{r - 1} \omega_r^{- j a} B_{r,j} \,.
\ee
We observe that $\partial C_{r,a} = 0$ for $a \not\equiv 0 \mod r$, hence $C_{r,1},\ldots, C_{r,r-1}$ form a basis of closed contours on which the integral converges. This basis, illustrated in Figure~\ref{fig:factorizedcontours}, is the eigenbasis of the $\bZ_r$ symmetry $X \to \omega_r X$, which maps $C_{r,a} \to \omega_r^{-a} C_{r,a}$
\begin{figure}
  \centering
  \includegraphics[width=0.95\textwidth]{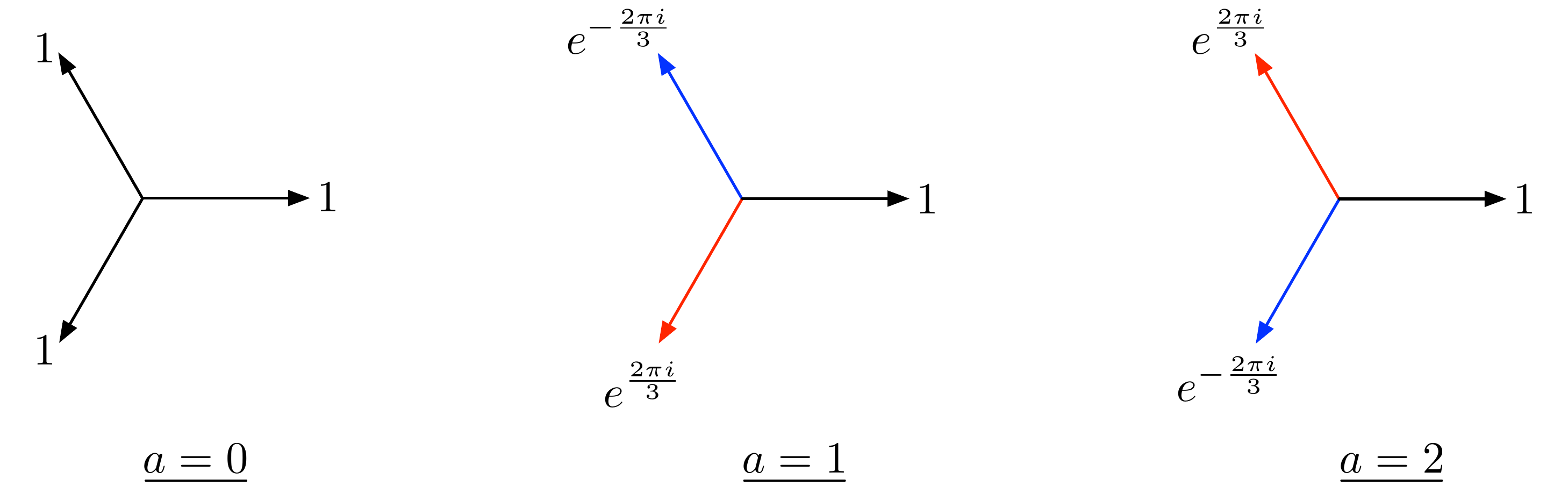}
  \caption{A natural basis of integration contours for a cubic ($r=3$) potential, where the values at the end of each ray (and the color of the ray) denote its weight within the contour. We include the case $a=0$, which is not a closed contour but occurs naturally once we include the $u$-deformation in \S\ref{sec:genconstruction}.\label{fig:factorizedcontours}}
\end{figure}

For $N>1$, a general contour can be written as a linear combination of
\be
C^{(r)}_{N_1, \ldots, N_{r-1}} \equiv C_{r,1}^{N_1}\times \ldots \times C_{r,r-1}^{N_{r-1}} \,, \qquad \, N = \sum_{a=1}^{r-1} N_a\,,
\ee
i.e.\ with $N_a$ eigenvalues integrated along the contour $C_{r,a}$. There are $\binom{N+r-2}{r-2}$ such contours. We refer to matrix models integrated over the contours $C_{r,a}^N$ as ``pure phases,'' and those integrated over $C_{N_1, \ldots, N_{r-1}}^{(r)}$ with $N_1,\ldots,N_r < N$ as ``mixed phases.''

The principle advantage of this contour basis is that it simplifies the moments:
\be \label{eqn:Cmoments}
\int_{C_a} x^p e^{-x^r} \dd x = \delta_{r|p+1-a}\, \Gam{\frac{p+1}{r}} \,,
\ee
where $\delta_{r|p} = 1$ when $r$ divides $p$, and vanishes otherwise. The non-vanishing moments are those for which the integrand together with the contour forms a $\bZ_r$ singlet, and each such moment is a rational prefactor times $\Gam{\frac{a}{r}}$.
As a consequence
\be
\int_{C_{N_1, \ldots, N_{r-1}}} \mathcal{O} \prod_i e^{-x_i^r} \sim \prod_a \GamF{a}{r}^{N_a} \,,
\ee
up to a rational prefactor, where $\mathcal{O}$ is any polynomial insertion. Comparing with~(\ref{eqn:Rquarticsectors}), we see that the sum over sectors previously identified in the real-line quartic is nothing but a sum over pure and mixed phases!

In particular, for $r=4$, the contour along the real axis is $R = \frac{1}{2} (C_{4,1} + C_{4,3})$, where
\be
R^N = \frac{1}{2^N} \sum_{i=0}^N \binom{N}{i} C_{4,1}^i C_{4,3}^{N-i} \,.
\ee
Not every phase mixture contributes to the partition function, as the $\bZ_r$ symmetry dictates that many insertions vanish. Suppose that $\mathcal{O}_p$ is a homogeneous polynomial in $X$ of degree $p$. A necessary condition for $Z_{N_1,\ldots,N_{r-1}}[\mathcal{O}_p]$ to be non-vanishing is for it to be a $\bZ_r$ singlet:
\be
p + N^2 - \sum_a a N_a \equiv 0 \mod r \,.
\ee
For instance, the partition function $(p=0)$ of a pure phases vanishes unless $N = k r$ or $N = k r+a$. For the $C_{4,1}, C_{4,3}$ two-phase mixture of the real-line quartic, we obtain the constraint
\be
N_1 \equiv \frac{N (N+1)}{2} \mod 2 \,,
\ee
for contributions to the partition function, which explains the non-vanishing terms in~(\ref{eqn:RquarticZn}).

Having identified the natural subsectors of the real-line quartic, the obvious question is whether these subsectors are solvable. Remarkably, as we now argue, the pure phases are exactly solvable for any $r$, $a$ and $N$.

\subsection{Summary of results for pure phases}

Before explaining how the pure phases can be solved, we present the solution in brief. We consider the pure-phase eigenvalue model
  \be \label{eqn:modeldef}
  Z_N^{(r,a)} = \frac{1}{N!} \int_{C_{r,a}} \prod_i \frac{\dd x_i}{2 \pi}\, e^{-x_i^r} \prod_{i>j} (x_i - x_j)^2 \,,
  \ee
  for integers $N \ge 0$, $r>0$, and $0 \le a <r$, with $C_{r,a}$ defined by (\ref{eq:contours}). As usual, insertions and expectation values are defined as in~(\ref{eqn:Zintegral}--\ref{eqn:ZOintegral-normalized}).

\begin{theorem} \label{thm:ZNra}
The partition function of the monomial matrix model~(\ref{eqn:modeldef}) is
\be \label{eqn:ZNra}
\boxed{Z_N^{(r,a)} = \frac{\delta_{r,a}(N)}{(2\pi)^N} \prod_{i=0}^{N-1}  \Gam{\floor{\frac{i}{r}}+1} \Gam{\floor{\frac{i-a}{r}}+\frac{a}{r}+1} \,,}
\ee
where $\delta_{r,a}(N) = 0,\pm 1$ is given by
\be \label{eqn:deltaraN}
\delta_{r,a}(N) = \begin{cases} (- 1)^{\ceil{\frac{N}{r}}  \frac{a (a - 1)}{2} +
  \left\lfloor \frac{N}{r} \right\rfloor  \frac{\tilde{a} (\tilde{a} - 1)}{2}} & N \equiv 0,a \mod r\\ 0 & \text{otherwise} \end{cases} \,,
\ee
and $\tilde{a} \equiv r-a$.
\end{theorem}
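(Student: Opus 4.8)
The plan is to evaluate the eigenvalue integral (\ref{eqn:modeldef}) directly by the Andr\'eief (Gram) identity, reducing $Z_N^{(r,a)}$ to the determinant of a single matrix of one‑dimensional moments and then exploiting the extreme sparsity forced on that matrix by the $\bZ_r$ structure of $C_{r,a}$. Writing $\prod_{i>j}(x_i-x_j)^2 = \det(x_i^{j-1})\det(x_i^{k-1})$ and applying Andr\'eief to absorb the $1/N!$, I obtain
\be
Z_N^{(r,a)} = \det M, \qquad M_{ij} = \frac{1}{2\pi}\int_{C_{r,a}} x^{i+j-2}\, e^{-x^r}\,\dd x \,.
\ee
The moment formula (\ref{eqn:Cmoments}) with $p=i+j-2$ then gives the explicit entries
\be
M_{ij} = \frac{1}{2\pi}\,\delta_{r\,|\,i+j-1-a}\,\Gam{\tfrac{i+j-1}{r}} \,,
\ee
so $M$ is supported only on the anti‑diagonals with $i+j\equiv a+1 \pmod r$.

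Next I would extract the combinatorics of this sparsity. A permutation $\sigma$ contributes to $\det M$ only if $\sigma(i)\equiv a+1-i \pmod r$ for all $i$, i.e.\ $\sigma$ must carry the residue class $c=i\bmod r$ bijectively onto the class $a+1-c \bmod r$. Grouping $\{1,\dots,N\}$ by residue, such a bijection exists iff classes paired by the reflection $\beta:c\mapsto a+1-c$ have equal size; a short interval argument shows this occurs precisely when $N\equiv 0,a \pmod r$, so that for all other residues no permutation contributes and $\det M=0$, reproducing the case $\delta_{r,a}(N)=0$. When the matching exists, reordering rows and columns by residue class turns $M$ into a block matrix whose nonzero blocks sit at positions $(c,\beta(c))$. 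Writing indices as $i=c+rm$, $j=c'+rn$ with $c+c'-1\equiv a$, each block entry becomes $\Gam{\tfrac{a}{r}+s+m+n}$ for a fixed integer shift $s$, so every block is a Hankel matrix in $m+n$ with Gamma‑function entries.

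I would evaluate each block with the standard Hankel evaluation (equivalently, the squared norms of Laguerre orthogonal polynomials)
\be
\det_{0\le m,n\le k-1}\!\big[\Gam{\alpha+m+n}\big] = \prod_{m=0}^{k-1} m!\,\Gam{\alpha+m} \,,
\ee
which already splits into a product of integer factorials and a product of shifted Gammas --- the two families $\Gam{\floor{i/r}+1}$ and $\Gam{\floor{(i-a)/r}+a/r+1}$ of (\ref{eqn:ZNra}). The remaining bookkeeping is to collect the block parameters $\alpha_c$ and sizes $k_c$ across all residue classes (the $\ceil{N/r}$‑element ``large'' classes $\{1,\dots,a\}$ and the $\floor{N/r}$‑element ``small'' classes $\{a+1,\dots,r\}$, the latter being where $\tilde a=r-a$ enters) and to verify that the resulting multiset of Gamma arguments coincides with $\{\floor{i/r}+1,\ \floor{(i-a)/r}+a/r+1 : 0\le i\le N-1\}$. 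This is a routine but slightly intricate reindexing.

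The main obstacle is the overall sign, which the raw determinant expansion obscures. Factoring each contributing $\sigma$ as a reference permutation $\pi$ realizing the residue matching times an intra‑block permutation, the Andr\'eief sum produces the positive Hankel blocks above, so $\delta_{r,a}(N)=\mathrm{sgn}(\pi)$. The key simplification is that the reflection $\beta$ factors as the reversal of the $a$ large classes together with the reversal of the $\tilde a$ small classes; since reversing $m$ blocks has signature $(-1)^{\binom{m}{2}}$ and a block‑reversal of blocks of size $k$ induces signature $(-1)^{k\binom{m}{2}}$ on the underlying indices, one finds
\be
\mathrm{sgn}(\pi) = (-1)^{\,\ceil{N/r}\frac{a(a-1)}{2}\,+\,\floor{N/r}\frac{\tilde a(\tilde a-1)}{2}} \,,
\ee
which is exactly (\ref{eqn:deltaraN}). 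I would finally verify the entire chain on the explicit small cases $r=3,\ a=1,\ N=3,6$ and $r=4,\ a=2,3$ with small $N$, where $M$ is a tiny sparse matrix and both the product and the sign can be read off directly.
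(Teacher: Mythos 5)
Your proposal is correct, and it reaches the result by a genuinely different organization than the paper. The paper's proof (\S\ref{subsec:orthopolys}) first performs a triangular change of basis in the Vandermonde to explicit orthogonal polynomials $p_{rk+i}(x)=x^i\hat L_k^{(\alpha)}(x^r)$ built from Laguerre polynomials, so that $Z_1[p_mp_n]=t_m\delta^{(r,a)}_{m,n}$ is block-\emph{anti}diagonal with known entries~(\ref{eqn:ti}); the determinant is then read off at once, with the sign packaged as $\det_{N\times N}\delta^{(r,a)}_{m,n}$ (whose evaluation to~(\ref{eqn:deltaraN}) is left implicit). You instead stay in the monomial basis, apply Andr\'eief to get the raw moment matrix, and extract everything from its sparsity: the residue-class matching argument correctly yields the vanishing condition $N\equiv 0,a\pmod r$, the block decomposition reduces each nonzero block to a Hankel determinant $\det[\Gam{\alpha+m+n}]=\prod_m m!\,\Gam{\alpha+m}$ (which is, as you note, Laguerre orthogonality in disguise --- the same $\Gam{\floor{i/r}+1}$ and $\Gam{\floor{(i-a)/r}+\frac{a}{r}+1}$ factors as~(\ref{eqn:ti})), and your sign computation $(-1)^{\ceil{N/r}\binom{a}{2}+\floor{N/r}\binom{\tilde a}{2}}$ from the two class reversals is correct (I checked that the interleaved reversal of $a$ classes of size $k$ has exactly $k\binom{a}{2}$ inversions, matching your block-reversal parity $k^2\binom{a}{2}$). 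What your route buys is self-containedness and transparency: the vanishing condition and the sign~(\ref{eqn:deltaraN}) emerge from explicit combinatorics rather than being encoded in a formal determinant, and you never need to guess the orthogonal polynomials. What the paper's route buys is reusability: the same polynomial basis and orthogonality relation~(\ref{eqn:orthocond}) immediately feed into the derivation of the Schur average formula (Theorem~\ref{thm:1schur}) via~(\ref{eqn:detformula})--(\ref{eqn:polyschurformula}), which your Hankel-block decomposition does not directly set up. The only deferred step in your argument is the reindexing that identifies the multiset of Gamma arguments across blocks with $\{\floor{i/r}+1,\ \floor{(i-a)/r}+\frac{a}{r}+1\}_{i=0}^{N-1}$; this is indeed routine and poses no obstruction.
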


Recall that Schur polynomials are multivariate symmetric polynomials defined as
\be \label{eqn:Schurdef}
s_\lambda(x) \equiv \frac{\det x_i^{N+\lambda_j-j}}{\det x_i^{N-j}} \,,
\ee
where $\lambda$ denotes a partition $\lambda_1 \ge \ldots \ge \lambda_N \ge 0$.

\begin{theorem} \label{thm:1schur}
The averages of Schur polynomials take the simple form:
\begin{equation} \label{eqn:1schur}
\boxed{ \langle s_{\lambda} (X) \rangle = \frac{\delta_r(\lambda)}{r^{|\lambda|/r}} \prod_{x \in \lambda} \frac{\llbracket N + c_\lambda(x) \rrbracket_{r,0}\, \llbracket N + c_\lambda(x) \rrbracket_{r,a}}{\llbracket h_\lambda(x) \rrbracket_{r,0}} \,,}
\end{equation}
where
\begin{equation} \label{eqn:bracketdef1}
\llbracket n \rrbracket_{r, a} = \begin{cases} n & n \equiv a \mod r \\ 1 & \text{otherwise} \end{cases} \,.
\end{equation}
\end{theorem}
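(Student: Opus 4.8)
The plan is to reduce the Schur average to a ratio of two moment determinants and then exploit the $\bZ_r$ selection rule in the moment formula~(\ref{eqn:Cmoments}) to factorize each determinant into $r$ independent blocks. First I would use the Schur definition~(\ref{eqn:Schurdef}) in the form $s_\lambda\cdot(\det x_i^{N-j})^2 = \det x_i^{N+\lambda_j-j}\cdot\det x_i^{N-j}$, so that inserting $s_\lambda$ converts the squared-Vandermonde measure $(\det x_i^{N-j})^2$ into a product of two determinants. The Andr\'eief (Heine) identity then collapses the eigenvalue integral, absorbing the $1/N!$, into a single $N\times N$ moment determinant,
\begin{equation}
Z_N^{(r,a)}[s_\lambda] = \det_{1\le j,k\le N}\left[\, m_{\mu_j+N-k}\,\right]\,,\qquad \mu_j \equiv N+\lambda_j-j\,,
\end{equation}
where $m_p \equiv \frac{1}{2\pi}\int_{C_{r,a}} x^p e^{-x^r}\,\dd x$, and $Z_N^{(r,a)}$ of Theorem~\ref{thm:ZNra} is the same determinant evaluated at $\lambda=0$. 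Dividing the two yields $\langle s_\lambda\rangle$ directly, and since Theorem~\ref{thm:ZNra} is already in hand I may use its closed form for the denominator.

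The next step uses the crucial fact that by~(\ref{eqn:Cmoments}) the moment $m_p$ is nonzero only when $p\equiv a-1 \pmod r$, in which case $m_p = \frac{1}{2\pi}\Gam{\frac{p+1}{r}}$. Hence the entry $m_{\mu_j+N-k}$ survives only when the column index $k$ falls in a fixed residue class mod $r$ determined by $\mu_j$. Sorting rows by the residue of $\mu_j$ and columns by the residue of $k$ renders the matrix block-diagonal: the determinant vanishes unless the multiplicities of the required residue classes among rows and columns agree, and otherwise factorizes into a product of $r$ sub-determinants times the sign of the sorting permutation. This sign, together with the vanishing condition, is precisely the origin of the signature $\delta_r(\lambda)\in\{0,\pm1\}$.

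On each block the surviving moments, written as $m_p = \frac{1}{2\pi}\Gam{s+\frac{a}{r}}$ with $s=(p+1-a)/r$, form a Gamma sequence along an arithmetic progression of unit step. Each block determinant is therefore a Selberg--Kadell type Gamma determinant --- effectively the $r=1$ problem --- and evaluates in closed form to a product of Gamma functions of the shape in~(\ref{eqn:ZNra}), with the floor functions $\floor{\frac{i}{r}}$ recording which block each index lands in. Taking the ratio of the numerator blocks to the corresponding denominator blocks, the common Gamma factors cancel and leave a product over the boxes of $\lambda$.

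The final and hardest step is reorganizing this leftover product into the hook--content form~(\ref{eqn:1schur}). I would show that after cancellation the numerator contributes the contents through $\llbracket N+c_\lambda(x)\rrbracket_{r,0}\,\llbracket N+c_\lambda(x)\rrbracket_{r,a}$ and the denominator the hook lengths through $\llbracket h_\lambda(x)\rrbracket_{r,0}$, the $r$-brackets appearing because only arguments in the correct residue class mod $r$ escape cancellation against the $\lambda=0$ determinant. This is the classical hook--content identity upgraded to the $r$-divisible setting, and identifying exactly which integers survive --- matching them box-by-box to contents and hooks --- is where the real difficulty lies; the permutation-sign bookkeeping that produces $\delta_r(\lambda)$ (deferred to Appendix~\ref{app:schurdivisibility}) is a secondary technical nuisance. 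Throughout, the $r=1$ specialization, where the brackets degenerate to ordinary integers and the formula must reduce to the known Kadell evaluation, provides a useful consistency check.
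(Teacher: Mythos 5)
Your proposal is correct in outline, but it takes a genuinely different route from the paper. You collapse the eigenvalue integral via Andr\'eief's identity into an $N\times N$ determinant of moments $m_{\mu_j+N-k}$ and then exploit the selection rule $m_p\ne 0 \Leftrightarrow p\equiv a-1\pmod r$ from~(\ref{eqn:Cmoments}) to sort rows and columns into residue classes and block-diagonalize. The paper instead constructs orthogonal polynomials (built from generalized Laguerre polynomials) satisfying the modified orthogonality relation~(\ref{eqn:orthocond}), uses the expansion of $\prod_j\det(z_j-X)$ in Schur functions~(\ref{eqn:detexpansion}) together with~(\ref{eqn:detformula}) to obtain $\langle s_\lambda\rangle$ as a $k\times k$ determinant ($k\ge\lambda_1$, independent of $N$) of the polynomial \emph{coefficients} $p_{N+i-1;\,N-\lambda_j^\T+j-1}$, and then evaluates that determinant with the $r$-quotient factorization lemma~(\ref{eqn:exDetFormula}) of Appendix~\ref{app:schurdivisibility}. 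The two representations are dual: yours is indexed by the rows of $\lambda$ and the rank $N$, the paper's by the columns of $\lambda$ only. What the paper's route buys is that the $N$-dependence enters only through the explicit ratio $\prod_j t_{N+j-1}/t_{N+j-1-\lambda_j^\T}$, which together with~(\ref{eqn:modfactorial}) hands you the content brackets almost for free, and the same coefficient-determinant formula~(\ref{eqn:polyschurformula}) is reused verbatim for the general orbifold Theorems~\ref{thm:genZFormula}--\ref{thm:genQanalog}. What your route buys is elementarity (no need to guess the Laguerre ansatz~(\ref{eqn:orthoSol}) or verify the antidiagonal-block orthogonality) and the fact that the unnormalized insertion $Z_N[s_\lambda]$ is computed directly even when $Z_N=0$, whereas~(\ref{eqn:detformula}) is only valid when $N\equiv 0,a\pmod r$. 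Two points you should make explicit to close the argument: first, your residue-sorting step is not independent of the appendix machinery --- the support condition $\mu_j+N-k\equiv a-1$ is exactly of the form $\delta_{p|(\lambda_I+J-I)}$ treated in~(\ref{eqn:genDetFormula}), and the claim that the sorting sign equals $\delta_r(\lambda)$ (after dividing out the $\lambda=0$ sign $\delta_{r,a}(N)$) is precisely the content of the $p$-signature construction there; second, the ``hardest step'' you defer --- matching surviving Gamma ratios box-by-box to contents and hooks --- requires the quantitative facts of Theorem~\ref{thm:rdivisible} (e.g.\ $\mathcal{C}_{r,i}(\lambda)=|\lambda|/r$ for all $i$ when $\lambda$ is $r$-divisible, and the correspondence between hooks divisible by $r$ and hooks of the $r$-quotient), without which the cancellation pattern and the prefactor $r^{-|\lambda|/r}$ cannot be established. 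With those ingredients supplied, your derivation goes through and reduces correctly to Kadell's formula at $r=1$.
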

\noindent Here we interpret $\lambda$ as a Young diagram with $|\lambda|$ boxes $x=(i,j)$, rows of length $\lambda_i$ and columns of length $\lambda_i^{\T}$. The contents $c_\lambda(x) \equiv j - i$ and hook-length $h_\lambda(x) \equiv \lambda_i + \lambda_j^{\T} - i - j +1$ are the same quantities which appear in the dimension formula for representations of the special linear group:
\begin{equation}
\dim_{SL(N)}(R_\lambda) = \prod_{x\in\lambda} \frac{N+c_\lambda(x)}{h_\lambda(x)} \,,
\end{equation}
which is similar to~(\ref{eqn:1schur}). Finally, the prefactor $\delta_r(\lambda) = 0,\pm1$, the ``$r$-signature'' of $\lambda$, is given explicitly by
\begin{equation} \label{eqn:deltarlambda}
\delta_r(\lambda) = \begin{cases} (-1)^{\frac{|\lambda|}{r}}\prod_{x\in\lambda} (-1)^{\floor{\frac{c_\lambda(x)}{r}}+\floor{\frac{h_\lambda(x)}{r}}} & \text{The $r$-core of $\lambda$ is trivial} \\ 0 & \text{otherwise} \end{cases} \,.
\end{equation}
Here the $r$-core of $\lambda$---the unique result of stripping all possible rim hooks of length $r$ from $\lambda$---generalizes the remainder upon division by $r$, so that Schur polynomials with non-zero averages correspond to Young diagrams which are ``divisible by $r$,'' or ``$r$-divisible.'' A necessary but insufficient condition for $r$-divisibility is that $|\lambda| \equiv 0 \mod r$. The relevant properties of $r$-cores, $r$-signatures, and $r$-divisible Young diagrams are reviewed in Appendix~\ref{app:schurdivisibility}.

We note in passing that the Schur average formula~(\ref{eqn:1schur}) obeys several non-trivial relations when treated as a formal analytic function of $N$ for fixed $N \bmod r$:
\begin{align} \label{eqn:schurRelns}
  \langle s_{\lambda} (X) \rangle_{N}^{(r, a)} &= \langle s_{\lambda} (X) \rangle_{N'}^{(r, r - a)} \,, &
  \langle s_{\lambda} (X) \rangle_{- N}^{(r, a)} &= (- 1)^{(r+1)\frac{|\lambda |}{r}}  \langle s_{\lambda^{\T}} (X) \rangle_N^{(r, r - a)} \,,
\end{align}
where $N'$ indicates the opposite value of $N \bmod r$. The first equation relates the two possible values of $N \bmod r$ for fixed $N$, whereas the second equation relates $N \to -N$ for fixed $N \bmod r$, similar to negative rank duality (see, e.g.,~\cite[Ch.\ 13]{Cvitanovic:2008zz}).

Since any symmetric polynomial in $x_i$, $i=1,\ldots,N$, can be expressed in terms of the Schur polynomials, in principle~(\ref{eqn:ZNra}) and~(\ref{eqn:1schur}) provide a complete solution to the matrix model~(\ref{eqn:modeldef}) for any $N$.\footnote{Our formulas for normalized correlators apply when $Z \ne 0$. Unnormalized insertions can still be finite when $Z = 0$, but besides a brief discussion in~\S\ref{subsec:refactorization}, we leave a thorough treatment of these cases to the future.} For instance, single-trace correlation functions are calculated explicitly in~\S\ref{sec:singletrace}, see (\ref{eqn:singletraceA}), (\ref{eqn:singletraceB}), (\ref{eqn:singletraceEx}) and (\ref{eqn:singletraceN}).

\subsection{Solution by orthogonal polynomials} \label{subsec:orthopolys}

We now prove Theorems \ref{thm:ZNra} and \ref{thm:1schur},
i.e.\ derive formulas \eqref{eqn:ZNra} and \eqref{eqn:1schur},
using orthogonal polynomials. Suppose that $p_n (x) = x^n + \ldots$ is some polynomial basis. The
Vandermonde determinant can be rewritten as $\det x_i^{j - 1} = \det p_{j - 1} (x_i)$ by a triangular change of basis,
so that
\begin{equation}
  Z_N = \det_{N\times N} Z_1[p_i p_j] \,.
\end{equation}
If we choose a polynomial basis for which the matrix $Z_1[p_i p_j]$ is sufficient sparse then the partition function is easily computed. The usual
approach is to choose $Z_1[p_i p_j] = t_i \delta_{i j}$ for some
normalization $t_i$, so that $Z_N = \prod_{i = 0}^{N - 1} t_i$.
This approach is well-suited to the Gaussian model ($r = 2$ and $a=1$), but is impossible in the pure phases for $r > 2$, because $Z_{N = k
r}$ and $Z_{N = k r + a}$ do not vanish, whereas $Z_N$ vanishes for other
values of $N$, implying that some of the $t_i$ vanish and others are infinite.

Instead, we consider orthogonal polynomials satisfying the orthogonality relation:
\begin{equation} \label{eqn:orthocond}
  Z_1[ p_m p_n ] = t_m \delta^{(r, a)}_{m, n} \,,
\end{equation}
where $\delta^{(r, a)}_{m, n}$ is the block-diagonal matrix $\diag(J_a, J_{r-a}, J_a, J_{r-a},\ldots)$ with $J_n$ the $n\times n$ antidiagonal permutation matrix. This condition is chosen so that $\delta^{(r,a)}_{m,n}$ is nonzero on a subset of the nonzero entries of $Z_1[x^m x^n]$, i.e., those satisfying $m+n \equiv a-1 \pmod r$.
The solution is
\begin{equation} \label{eqn:orthoSol}
p_{r k+i}(x) = \begin{cases}  x^i \hat{L}_k^{\left( \frac{a}{r} - 1 \right)}(x^r) & 0 \le i < a \\
  x^i \hat{L}_k^{\left( \frac{a}{r} \right)} ( x^r ) & a \le i < r\end{cases} \,,
\end{equation}
where $L_k^{(\alpha)}(x)$ denotes the generalized Laguerre polynomial
\begin{equation}
  L_k^{(\alpha)} (x) = \sum_{p = 0}^k (-1)^p  \begin{pmatrix} k + \alpha\\ k - p \end{pmatrix} \frac{x^p}{p!}\,,
\end{equation}
and $\hat{L}_k^{(\alpha)}(x) = (-1)^k k! L_k^{(\alpha)}(x) = x^k + \ldots$ is monic. The Laguerre polynomials satisfy the orthogonality relation:
\begin{equation} \label{eqn:orthoLag}
\int_0^\infty x^\alpha L_m^{(\alpha)}(x) L_n^{(\alpha)}(x) e^{-x} \dd x = \frac{\Gamma(n+\alpha+1)}{n!} \delta_{m n} \,.
\end{equation}
Together,~(\ref{eqn:orthoSol}) and~(\ref{eqn:orthoLag}) are sufficient to derive~(\ref{eqn:orthocond}), where the homogeneity property $p_n (\omega_r x) = \omega_r^n p_n (x)$ implies that the integral over $C_{r,a}$ either vanishes or is equivalent to an integral over $B_0=(0,\infty)$, which can be reduced to~(\ref{eqn:orthoLag}) by a change of variables $y = x^r$. We obtain
\begin{equation} \label{eqn:ti}
t_i = \frac{1}{2 \pi} \Gam{\floor{\frac{i}{r}}+1} \Gam{\floor{\frac{i-a}{r}}+\frac{a}{r}+1} \,.
\end{equation}
The partition function is therefore $Z_N = \delta_{r,a}(N) \prod_{i=0}^{N-1} t_i$, where $\delta_{r,a}(N) = \det_{N \times N} \delta^{(r, a)}_{m, n}$ is the determinant of upper-left $N\times N$ block of $\delta^{(r, a)}_{m, n}$. This matches~(\ref{eqn:ZNra}).

The Schur polynomial average~(\ref{eqn:1schur}) can also be derived using orthogonal polynomials, as we now show.\footnote{We follow a similar approach to~\cite{Dolivet:2006ii}.} Our starting point is the formula~\cite[p.\ 67]{Macdonald:2008zz}
\begin{equation} \label{eqn:detexpansion}
  \prod_{j = 1}^k \det (z_j - X) = \sum_{\underset{\lambda_1 \le k}{\lambda}} (- 1)^{| \lambda |}
  s_{\lambda} (X) s_{\tilde{\lambda}}(z) \,,
\end{equation}
where $\tilde{\lambda}$ is the partition $\tilde{\lambda_i} = (N - \lambda^{\T}_k, \ldots, N - \lambda^{\T}_1)$. It is straightforward to check using the orthogonality relation~(\ref{eqn:orthocond}) that:
\begin{equation} \label{eqn:detformula}
  \left\langle \prod_{j = 1}^k \det (z_j - X) \right\rangle = \frac{1}{\underset{1\le i,j \le k}{\det} z_i^{j-1}}\, {\det}
  {\begin{pmatrix}
    p_N (z_1) & \ldots & p_{N + k - 1} (z_1)\\
    \vdots &  & \vdots\\
    p_N (z_k) & \ldots & p_{N + k - 1} (z_k)
  \end{pmatrix}} \,.
\end{equation}
In general, this holds when $Z_1[p_i p_j]= 0$ for $i< N$ and $j \ge N$, which follows from~(\ref{eqn:orthocond}) when $Z_N\ne 0$, i.e., when $N=k r$ or $N=kr+a$.

Combing~(\ref{eqn:detexpansion}) with~(\ref{eqn:detformula}) and using $s_{\tilde{\lambda}}(z) = \det z_i^{N-\lambda^\T_j+j-1}/\det z_i^{j-1}$, we obtain
\begin{equation} \label{eqn:detp1}
\underset{1\le i,j \le k}{\det} p_{N+j-1}(z_i) = \sum_{\underset{\lambda_1 \le k}{\lambda}} (-1)^{|\lambda|} \langle s_\lambda(X) \rangle \underset{1\le i,j \le k}{\det}  z_i^{N- \lambda^\T_j+j-1} \,.
\end{equation}
For orthogonal polynomials of the general form $p_i = \sum_j p_{i; j} x^j$, we have
\begin{equation}
  \underset{1 \le i, j \le k}{\det} p_{N + i - 1} (z_j) =
  \sum_{p_1, \ldots, p_k} \left( \prod_{j = 1}^k z_j^{p_j} \right)
  \underset{1 \le i, j \le k}{\det} p_{N + i - 1;\, p_j} \,,
\end{equation}
so that
\begin{equation} \label{eqn:polyschurformula}
  \langle s_{\lambda} (X) \rangle = (- 1)^{| \lambda |} \underset{1 \leqslant
  i, j \leqslant k}{\det} p_{N + i - 1;\, N - \lambda_j^{\T}+ j - 1 } \,, \qquad k \ge \lambda_1\,.
\end{equation}

The general result~(\ref{eqn:polyschurformula}) can be applied to the case at hand by noting that
\be
p_{i;j} = \delta_{r|(i-j)} \frac{(-1)^{\frac{i-j}{r}}}{\bigl(\frac{i-j}{r}\bigr)!}\, \frac{t_i}{t_j} \,,
\ee
where the Kronecker delta enforces $\frac{i-j}{r} \in \mathbb{Z}$. We obtain
\begin{align}
 \langle s_\lambda(X) \rangle = (-1)^{|\lambda|} \left(\prod_{j = 1}^k \frac{t_{N + j -1}}{t_{N + j - 1 - \lambda^\T_j}} \right)
 \det_{1 \le i, j \le k} \delta_{r|(\lambda^\T_j + i - j)} \frac{(-1)^{\frac{\lambda^\T_j + i - j}{r}}}{(\frac{\lambda^\T_j + i - j}{r})!} \,,
\end{align}
for $k\ge \lambda_1$.

To evaluate the determinant, we use the results of \S\ref{sec:determinants}. We must have
\be
\underset{1\le i,j \le k}{\det} \frac{(-1)^{\lambda^\T_j + i - j}}{(\lambda^\T_j + i - j)!} = (-1)^{|\lambda|} \prod_{x \in \lambda} \frac{1}{h_\lambda(x)} \,,  \qquad k \ge \lambda_1 \,,
\ee
to reproduce the Kadell formula for $r=1$~\cite{Kadell:1988}. The hook-lengths in the various components of the $r$-quotient $\lambda/r^{(\mu)}$ are just $1/r$ times the hook lengths divisible by $r$ in $\lambda$, so we obtain
\be
\det_{1 \le i, j \le k} \delta_{r|(\lambda^\T_j + i - j)} \frac{(-1)^{\frac{\lambda^\T_j + i - j}{r}}}{(\frac{\lambda^\T_j + i - j}{r})!} = r^{|\lambda|/r} \delta_r(\lambda) \prod_{x \in \lambda} \frac{1}{\llbracket h_\lambda(x) \rrbracket_{r,0}} \,,
\ee
using~(\ref{eqn:exDetFormula}) and Theorem~\ref{thm:transpose}. Using (\ref{eqn:ti}) and the identity
\begin{equation} \label{eqn:modfactorial}
\frac{\Gam{\floor{\frac{n+I-a}{r}}+\frac{a}{r}+1}}{\Gam{\floor{\frac{n-a}{r}}+\frac{a}{r}+1}} = r^{-\left(\floor{\frac{n+I-a}{r}}-\floor{\frac{n-a}{r}}\right)} \prod_{i=1}^I \llbracket n + i \rrbracket_{r,a} \,,
\end{equation}
to simplify the product over $t$s,
 we obtain the Schur average formula~(\ref{eqn:1schur}).

\subsection{Single-trace correlators} \label{sec:singletrace}

Let $L^{(I,J)}$ denote the L-shaped partition with $I+1$ rows and $J+1$ columns, i.e.\ $L^{(I,J)}_1 = J+1$, $L^{(I,J)}_2 = \ldots = L^{(I,J)}_{I+1}=1$.
We have~\cite[p.\ 48]{Macdonald:2008zz}
\begin{equation}
\Tr X^p = \sum_{I=0}^{p-1} (-1)^I s_{L^{(I,p-1-I)}}(X) \,.
\end{equation}
Combining this with the Schur average formula~(\ref{eqn:1schur}), we compute the expectation values of single-trace operators.
Using
$\delta_r(L^{(I,J)})=\delta_{r|(I+J+1)} (-1)^{I+\floor{\frac{I}{r}}}$,
we obtain:
\begin{align}
\langle \Tr X^{q r} \rangle &=\frac{1}{q r^{q+1}} \sum_{J=0}^{qr-1} \frac{(-1)^{\floor{\frac{I}{r}}}}{\dbracket{I}_{r,0}! \dbracket{J}_{r,0}!} \prod_{i=-I}^{J} \dbracket{N+i}_{r,0} \dbracket{N+i}_{r,a} \,,
\end{align}
for $q>0$, where $I=q r-1-J$ and $\dbracket{n}_{r,a}! \equiv \prod_{i=1}^n \dbracket{i}_{r,a}$. By~(\ref{eqn:modfactorial}), this can be rewritten as
\begin{align}
\langle \Tr X^{q r} \rangle &=\frac{1}{q} \sum_{J=0}^{qr-1} \frac{(-1)^{\floor{\frac{I}{r}}} \Gam{\floor{\frac{N+J}{r}}+1}\Gam{\floor{\frac{N+J-a}{r}}+\frac{a}{r}+1}}{\Gam{\floor{\frac{N+J}{r}}-q+1}\Gam{\floor{\frac{N+J-a}{r}}+\frac{a}{r}-q+1} \floor{\frac{I}{r}}! \floor{\frac{J}{r}}!}\,.
\end{align}
Collecting terms, this takes the form of a sum of hypergeometric series
\begin{equation} \label{eqn:singletraceA}
\begin{aligned}
\langle \Tr X^{q r} \rangle_{N=kr} &= a f_{k,\frac{a}{r}}(q) +(r-a) f_{k,\frac{a}{r}+1}(q) \,,\\
\langle \Tr X^{q r} \rangle_{N=kr+a} &= a f_{k+1,\frac{a}{r}}(q)+ (r-a) f_{k,\frac{a}{r}+1}(q) \,,
\end{aligned}
\end{equation}
for $0<a<r$, where
\begin{equation}
f_{k,x}(q) \equiv \frac{(-1)^{q-1}}{q!} \sum_{j=0}^{q-1} \frac{\Gam{k+j+1} \Gam{k+j+x} (1-q)_j}{ \Gam{k+j-q+1} \Gam{k+j-q+x} j!} \,,
\end{equation}
and $(x)_n \equiv x (x+1) \ldots (x+n-1)$ denotes the rising factorial.
Using a pair of resummation identities for $\pfq{3}{2}$, this can be rewritten as\footnote{Note that the generating function $\sum_{q=0}^\infty \frac{\langle \Tr X^{q r} \rangle}{(x)_q}\, t^q$ can be written in terms of $\pfq{2}{1}$. However, the resulting expression is no easier to work with than (\ref{eqn:singletraceA}), (\ref{eqn:singletraceB}), and this curiosity will play no further role in our discussion.}
\begin{equation} \label{eqn:singletraceB}
f_{k,x}(q) =
 k\, (x)_q\; \pFq{3}{2}{1-k, 1+q, -q}{ x, 2}{1}\,.
\end{equation}

Combining (\ref{eqn:singletraceA}) and (\ref{eqn:singletraceB}), it is straightforward to compute any single-trace correlation function of interest. For instance,
\begin{equation} \label{eqn:singletraceEx}
\begin{aligned}
 r \langle \Tr X^r \rangle &= N^2 \,,\\
 r^2 \langle \Tr X^{2 r} \rangle &= 2 N^3 + a \tilde{a} N\, \\
 r^3 \langle \Tr X^{3 r} \rangle &= 5 N^4 + (r^2 + 6 a \tilde{a}) N^2 \pm a \tilde{a} (a - \tilde{a}) N\,, \\
 r^4 \langle \Tr X^{4 r} \rangle &= 14 N^5 + 10 (r^2 + 3 a \tilde{a}) N^3 \pm 10 a \tilde{a} (a - \tilde{a}) N^2 + 3 a \tilde{a} (2 r^2 - a \tilde{a}) N\,, \\
   & \;\;\vdots
\end{aligned}
\end{equation}
where the upper (lower) sign corresponds to $N=kr$ ($N=kr+a$).

\subsection{The large $N$ limit}

We briefly consider the large $N$ limit of these pure phase eigenvalue models. From~(\ref{eqn:singletraceEx}) we see that the contour dependence enters at $O(1/N^2)$ relative to the leading large $N$ behavior, and that there are subleading corrections suppressed by odd powers of $N$. This can be shown more generally by rewriting
\begin{equation}
f_{k,x}(q) = k \sum_{p = 0}^q \frac{(2 q - p)!}{p! (q - p)! (q - p + 1)!} (x + q - 1)^{(p)} (k - 1)^{(q - p)}\,,
\end{equation}
where $x^{(p)} \equiv x (x - 1) \ldots (x - p + 1) = \frac{\Gam{x +1}}{\Gam{x + 1 - p}}$ denotes the falling factorial. Expanding in $k \gg 1$ and retaining the first few terms, we obtain
\begin{multline} \label{eqn:singletraceN}
  r^q \langle \Tr X^{q r} \rangle = \frac{(2 q) !}{q! (q + 1) !} N^{q + 1}
  + \frac{(2 q - 2) !}{12 (q - 1) ! (q - 2) !}  (r^2  (q - 2) + 6 a\tilde{a}) N^{q - 1}\\
   \pm \frac{(2 q - 2) !}{12 (q - 1) ! (q - 3) !}  a\tilde{a} (a - \tilde{a}) N^{q - 2} + \ldots\,,
\end{multline}
where the upper (lower) sign corresponds to $N=kr$ ($N=kr+a$), as above.

Similar results hold for the free energy. We first rewrite the partition function~(\ref{eqn:ZNra}) in terms of the Barnes $G$-function, defined by the Weierstrass product
\begin{equation}
G(z+1) = e^{-\zeta'(-1)-\frac{z+(1+\gamma) z^2}{2}} \prod_{k=1}^\infty \biggl(1+\frac{z}{k}\biggr)^k e^{\frac{z^2}{2k}-z} \,,
\end{equation}
which can be shown to satisfy\footnote{A more common but ultimately less convenient convention is $G_2(z+1) = (2\pi)^{\frac{z}{2}} e^{\zeta'(-1)} G(z+1)$, which satisfies $G_2(z+1) = \Gam{z} G_2(z)$ and $G_2(1)=1$.}
\begin{align}
G(z+1) &= \frac{\Gam{z}}{\sqrt{2 \pi}} G(z) \,, & G(1) &= e^{-\zeta'(-1)}\,.
\end{align}
where $\zeta(s)$ is the Riemann zeta function and the normalization is chosen for future convenience.
We define
\begin{equation}
\mathcal{Z}_{r,a}(N) \equiv G\biggl(\frac{N}{r}+1\biggr)^r G\biggl(\frac{N+a}{r}\biggr)^a G\biggl(\frac{N+a}{r}+1\biggr)^{r-a} \,,
\end{equation}
so that
\begin{align}
Z_{N=kr}^{(r,a)} &= \delta_{r,a}(N) \frac{\mathcal{Z}_{r,a}(N)}{\mathcal{Z}_{r,a}(0)}  \,, &
Z_{N=kr+a}^{(r,a)} &= \delta_{r,a}(N) \frac{\mathcal{Z}_{r,\tilde{a}}(N)}{\mathcal{Z}_{r,a}(0)} \,.
\end{align}
We have the asymptotic expansion
\begin{equation}
\log G(n+1) = \frac{n^2}{2} \log n-\frac{3}{4}n^2 - \frac{1}{12} \log n+ \sum_{g=2}^\infty \frac{B_{2g}}{2g (2g-2)} n^{2-2g} \,,
\end{equation}
where $B_{2g}$ are the Bernoulli numbers. Thus,
\begin{equation}
\log \mathcal{Z}_{r,a}(N) = \biggl(\frac{N^2}{r} + \frac{a \tilde{a}}{2 r} - \frac{r}{6}\biggr) \log \frac{N}{r} - \frac{3 N^2}{2 r} + \frac{a \tilde{a} (a-\tilde{a})}{6 N r} - \frac{r^4 - 15 a^2 \tilde{a}^2}{120 r N^2} + \ldots \,,
\end{equation}
from which we obtain the large $N$ free energy:\footnote{Notice that the free energy satisfies $F_N^{(r,a)} = F_{N'}^{(r,r-a)}$ and $F_N^{(r,a)} = F_{-N}^{(r,r-a)}$, similar to~(\ref{eqn:schurRelns}).}
\begin{equation} \label{eqn:freeEnergyN}
F  =  F_0 + \biggl(\frac{N^2}{r} + \frac{a \tilde{a}}{2 r} - \frac{r}{6}\biggr) \log \frac{N}{r} - \frac{3 N^2}{2 r} \pm \frac{a \tilde{a} (a-\tilde{a})}{6 N r} - \frac{r^4 - 15 a^2 \tilde{a}^2}{120 r N^2} + \ldots
\end{equation}
where $F_0$ is an $N$-independent constant\footnote{We drop the prefactor $\delta_{r,a}(N)$ from the large $N$ expansion, as it is periodic in $N$ with period $2 r$, hence formally non-perturbative in $N$.} which depends on the normalization of the partition function and the the upper (lower) sign corresponds to $N=kr$ ($N=kr+a$). Ignoring the logs, the contour dependence again enters at $O(1/N^2)$ relative to the leading terms, and there are $O(1/N^3)$ corrections to the free energy.

Reproducing these results with a large $N$ analysis along the lines of~\cite{Dijkgraaf:2002fc,Dijkgraaf:2002vw,Dijkgraaf:2002dh} (see, e.g.,~\cite{Marino:2004eq} and references therein for a more comprehensive review of large $N$ techniques) is an interesting open problem. This calculation is non-trivial for several reasons. Firstly, the large $N$ analysis of~\cite{Dijkgraaf:2002fc,Dijkgraaf:2002vw,Dijkgraaf:2002dh} is naturally expressed in a contour basis of Lefschetz thimbles, corresponding to the saddle points of the potential. This basis is degenerate when the potential is monomial for $r>2$, as $r-1$ critical points coincide at the origin. Secondly, the genus expansion is organized in even powers of $N$, hence the appearance of $O(1/N^3)$ corrections is unexpected in a standard analysis.

These two issues may be linked. To pick out the pure-phase contour $C_{r,a}$, one can deform the potential by $\varepsilon \Tr X^2$ to resolve the $r-1$ critical points and then express $C_{r,a}$ in a thimble basis, later taking $\varepsilon \to 0$. The change of basis between $C_{r,a}$ and the thimbles gives a linear combination of the saddle points weighted by binomial coefficients (e.g., in the case $r=3$). The $O(N)$ terms in the sum over saddle points can change the large $N$ scaling, and we hypothesize that this gives rise to the unexpected odd powers of $N$.

In the language of topological recursion (see \cite{BEO} for a review), the appearance of subleading corrections suppressed by odd powers of $N$ means that these theories are not of \emph{topological type} (\cite{BEO}, Definition 3.6).
  Hence they are not described by the standard large-$N$ tools---spectral curve topological recursion \cite{EO,BE}---or even its more general ``blobbed'' version \cite{BS}.
 However, the difference from the topological type case is actually rather mild.
  The ward identities~(\ref{eqn:loopequation}) are not broken---they still admit solutions of topological type
  (in contrast to $\beta$-ensembles \cite{CEM}, where for generic $\beta$ solutions of topological type are forbidden).
  Rather, the unusual dependence on $N$ enters through initial conditions; for example, for $r=3$ the first non-topological correlator is $\langle (\Tr X)^3 \rangle = \pm N$.

  The large-$N$ behavior of these monomial matrix models is interesting for the following reason.
  If one computes the standard spectral curve (forgetting for now that pure phase correlators
  are not described by it), one gets $y \sim x^r$, which is the symplectic dual ($x \leftrightarrow y$)
  \cite{EO2} to the spectal curve for the r-Gelfand-Dickey hierarchy
  (see \cite{DBNOPS} Theorem 7.3).
  Since pure phases are very natural from the matrix model point of view
  one can expect that the relevant generalization of the topological recursion
  is also natural. Once available, it would immediately provide
  a generalization of the r-Gelfand-Dickey hierarchy
  (and, via a lift to cohomology, of Witten's r-spin class \cite{FSZ}).
  We defer this problem to future work.

\section{A General Orbifold Construction for Matrix Models} \label{sec:genconstruction}

The exact solutions found in the previous section arise from a more general construction, which we now explain. This construction is, roughly speaking, an orbifold, where a matrix model with a single-trace potential $W(X)$ is replaced by one with a single-trace potential $W(X^r)$.

\subsection{General results} \label{subsec:orbifoldresults}

\begin{definition} \label{defn:rfoldmodel}
For any one-matrix model
\be \label{eqn:parentmodel}
Z_{n, u} = \frac{1}{n!} \int_{C^n} \prod_{i = 1}^n \frac{d x_i}{2 \pi} x_i^u \prod_{i < j} (x_i
  - x_j)^2 e^{- \Tr W(X)} \,,
\ee
the (pure phase) \emph{$r$-fold matrix model} is defined as
\be
Z_{N, U}^{(r,a)} = \frac{1}{N!} \int_{C_{r,a}^N} \prod_{i = 1}^N \frac{d x_i}{2 \pi} (x_i^r)^\frac{U}{r} \prod_{i < j} (x_i
  - x_j)^2 e^{- \Tr W(X^r)} \,,
\ee
for any $r\in\mathbb{N}$ and choice of contour $0 \le a < r$, with
$C_{r, a} = \sum_{j = 0}^{r - 1} \omega_r^{- j a} C^{1 / r}_j$,
where $C^{1/r}$ is the principal $r$th root of $C$ and $C^{1/r}_j$ is $C^{1/r}$ rotated by $\omega_r^j$.
\end{definition}
 Here $W(X)$ is any (single-trace) potential and $C$ is any contour,\footnote{Note that $C^{1/r}$ will have additional discontinuities relative to $C$ if $C$ crosses the negative real axis, since the principal $r$th root is discontinuous there. The pieces of the contour are reconnected in the linear combination $C_{r,a}$, but the weight of each piece remains discontinuous except in the special case $a=0$.} where we isolate a contribution $-u \log X$ from the potential for future convenience.\footnote{The presence of the $u$ deformation is linked to the inclusion of the contour $a=0$. The additional logarithmic term in the potential adds a saddle point, hence (in the basis of Lefschetz thimbles) an additional integration contour with a boundary at the origin.} We assume that the solution to the parent matrix model is known, and use it to solve the corresponding $r$-fold model.

Let $p_m^{(u)}$ be a set of orthogonal polynomials for the parent model~(\ref{eqn:parentmodel}):
\begin{equation} \label{eqn:parentortho}
  Z_{1, u} [p_m^{(u)} p_n^{(u)}] = t_m^{(u)} \delta_{m, n} \,.
\end{equation}
Now consider the polynomials
\begin{equation}
  P_{k r + i}^{(U ; r, a)} (x) = \begin{cases}
    x^i p_k^{\left(\frac{U+a}{r} - 1 \right)} (x^r)\,, & 0 \leqslant i <
    a\,,\\
    x^i p_k^{\left(\frac{U+a}{r} \right)} (x^r)\,, & a \leqslant i < r \,.
  \end{cases}
\end{equation}
Using~(\ref{eqn:parentortho}), one can check that
\begin{align}  \label{eqn:rfoldortho}
  Z_{1, U}^{(r, a)} [P_m^{(U ; r, a)} P_n^{(U ; r, a)}] &= T_m^{(U ; r, a)}
  \delta_{m, n}^{(r, a)} \,, & T_{k r + i}^{(U ; r, a)} &= \begin{cases}
    t_k^{\left( \frac{U+a}{r} - 1 \right)}\,, & 0 \leqslant i < a\,,\\
    t_k^{\left( \frac{U+a}{r} \right)} \,, & a \leqslant i < r\,.
  \end{cases}
\end{align}
where $\delta_{m, n}^{(r, a)}$ is the same as in~(\ref{eqn:orthocond}).
In particular, (\ref{eqn:rfoldortho}) follows from the $\mathbb{Z}_r$ orbifold projection implied by the contour $C_{r,a}$ together with the change of variables
\be
r \int_{C^{1 / r}} x^{a-1} f(x^r)\, d x = \int_{C} y^{\frac{a}{r}-1} f(y)\, d y \,,
\ee
where $(x^r)^{1/r} = x$ for $-\frac{\pi}{r} <\arg x \le \frac{\pi}{r}$.

Using~(\ref{eqn:rfoldortho}), we can compute the partition function $Z_{N,U}^{(r,a)}$, much as in~\S\ref{subsec:orthopolys}:
\begin{equation}
  Z_{N, U}^{(r, a)} = \delta_{r, a} (N) \prod_{I = 0}^{N - 1} t_{\floor{\frac{I}{r}}}^{\left( \frac{U+a}{r} + \floor{\frac{I - a}{r}} -\floor{\frac{I}{r}} \right)} \,.
\end{equation}
Re-expressing this in terms of the partition function of the parent model, $Z_{n,u} = \prod_{i=0}^{n-1} t_i^{(u)}$, we find\footnote{Here we use the substitution $I = r i+(N-\mu-1) \bmod r$ for $i\ge 0$ and $0 \le \mu < r$. Simpler substitutions are possible, such as $I=r i +\mu$, but this particular form occurs naturally in the Schur average formula.}
\begin{theorem} \label{thm:genZFormula}
The partition function of an $r$-fold matrix model is a product of $r$ copies of the partition function of the parent model:
\begin{equation} \label{eqn:genZFormula}
  \boxed{Z_{N, U}^{(r, a)} = \delta_{r, a} (N) \prod_{\mu = 0}^{r - 1} Z_{n_{\mu},
  u_{\mu}}\,,}
\end{equation}
where
\begin{align} \label{eqn:nmuumu}
  n_{\mu} &= \floor{\frac{N - \mu - 1}{r}} + 1 \,, &
   u_{\mu} &= \frac{U+a}{r} + \floor{\frac{N - \mu - a - 1}{r}} - \floor{\frac{N - \mu - 1}{r}} \,.
\end{align}
\end{theorem}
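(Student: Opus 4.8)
The plan is to start from the orthogonal-polynomial formula for the $r$-fold partition function that has already been assembled just before the theorem statement, namely
\be
Z_{N, U}^{(r, a)} = \delta_{r, a} (N) \prod_{I = 0}^{N - 1} t_{\floor{\frac{I}{r}}}^{\left( \frac{U+a}{r} + \floor{\frac{I - a}{r}} -\floor{\frac{I}{r}} \right)} \,,
\ee
and to reorganize the single product over $I$ into $r$ separate products, one for each residue class of $I$ modulo $r$. Each of these $r$ subproducts should then be recognized as the partition function $Z_{n_\mu, u_\mu} = \prod_{i=0}^{n_\mu - 1} t_i^{(u_\mu)}$ of the parent model, for the appropriate $n_\mu$ and $u_\mu$. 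The entire content of the theorem is therefore a bookkeeping identity relating the index $I$ of the $r$-fold product to the pair $(\mu, i)$ labeling which parent copy and which term within that copy.

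The key step is to introduce the substitution indicated in the footnote, $I = r i + \bigl((N - \mu - 1) \bmod r\bigr)$ with $i \ge 0$ and $0 \le \mu < r$, and to verify that as $I$ ranges over $\{0, 1, \ldots, N-1\}$ this is a bijection onto the set of pairs $(\mu, i)$ with $0 \le \mu < r$ and $0 \le i < n_\mu$, where $n_\mu = \floor{\frac{N-\mu-1}{r}} + 1$ is exactly the number of valid $i$ for each fixed $\mu$. I would first check that $\sum_\mu n_\mu = N$ (a short floor-function identity), confirming the counts match, and then confirm injectivity. With the bijection in hand, the lower Laguerre-type index in the parent $t$ becomes $\floor{\frac{I}{r}} = i$ for each fixed $\mu$, which supplies the $\prod_{i=0}^{n_\mu-1}$ structure.

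The second ingredient is to show that the superscript $\frac{U+a}{r} + \floor{\frac{I-a}{r}} - \floor{\frac{I}{r}}$ is \emph{independent of $i$} once $\mu$ is fixed, so that the entire residue class contributes a single value $u_\mu$ to every factor in that parent copy. This is the crux of the argument: I would write $I = ri + s_\mu$ with $s_\mu = (N - \mu - 1) \bmod r \in \{0, \ldots, r-1\}$ and use the elementary identity $\floor{\frac{ri + s - a}{r}} - \floor{\frac{ri+s}{r}} = \floor{\frac{s-a}{r}}$, which removes the $ri$ entirely and leaves a quantity depending only on $s_\mu$ (hence only on $\mu$). Substituting $s_\mu$ back and simplifying the two floor terms should reproduce $u_\mu = \frac{U+a}{r} + \floor{\frac{N-\mu-a-1}{r}} - \floor{\frac{N-\mu-1}{r}}$ as claimed in~(\ref{eqn:nmuumu}).

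I expect the main obstacle to be purely the floor-function algebra in this last step: verifying that $\floor{\frac{I-a}{r}} - \floor{\frac{I}{r}}$ collapses to something $i$-independent requires care because $a$ need not be a multiple of $r$, and one must track whether $s_\mu - a$ crosses a multiple of $r$. There are no analytic or combinatorial subtleties beyond this; once the bijection and the $i$-independence of the exponent are established, the product factors as $\prod_{\mu=0}^{r-1} \prod_{i=0}^{n_\mu-1} t_i^{(u_\mu)} = \prod_{\mu=0}^{r-1} Z_{n_\mu, u_\mu}$, and pulling out the overall $\delta_{r,a}(N)$ gives~(\ref{eqn:genZFormula}) directly.
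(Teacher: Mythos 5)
Your proposal is correct and follows essentially the same route as the paper: the paper also starts from the orthogonal-polynomial expression $Z_{N,U}^{(r,a)}=\delta_{r,a}(N)\prod_{I=0}^{N-1}t_{\floor{I/r}}^{(\cdots)}$ and regroups the product by residue class via the substitution $I=ri+\bigl((N-\mu-1)\bmod r\bigr)$, which it relegates to a footnote. Your write-up simply makes explicit the bijection count $\sum_\mu n_\mu=N$ and the floor-function identity showing the superscript is $i$-independent, both of which check out.
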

Note that $\sum_{\mu} n_{\mu} = N$ and $\sum_{\mu} u_{\mu} = U$.

A similar factorized structure occurs in the Schur average formula. To derive it, we start with the general result~(\ref{eqn:polyschurformula}). Writing $p_n^{(u)}(x) = \sum_i p_{n; i}^{(u)}\, x^i$, we have
\be
 P_N^{(U ; r, a)} (x) = \sum_I \delta_{r| (N - I)}\, p_{\floor{\frac{N}{r}}; \floor{\frac{I}{r}}}^{\left( \frac{U+a}{r} + \floor{\frac{N - a}{r}} - \floor{\frac{N}{r}} \right)} x^I \,,
\ee
and so
\begin{equation}
  \langle s_{\lambda} (X) \rangle^{(r,a)}_{N,U} = (- 1)^{| \lambda |} \underset{1 \leqslant
  I, J \leqslant K}{\det}  \left[ \delta_{r| (\lambda_J^\T + I - J)}\, p_{\floor{\frac{N + I - 1}{r}}; \floor{ \frac{N + J - 1 - \lambda_J^\T}{r}}}^{\left( \frac{U+a}{r} + \floor{\frac{N + I - 1 - a}{r}} - \floor{\frac{N + I - 1}{r}} \right)} \right] \,, \qquad K \ge \lambda_1 \,.
\end{equation}
Choosing $K$ such that $K+N \equiv 0 \pmod r$ and applying~(\ref{eqn:genDetFormulaTranspose}), we obtain
\be
\langle s_{\lambda} (X) \rangle^{(r,a)}_{N,U} = \delta_r(\lambda) \prod_{\mu = 0}^{r-1} (-1)^{|\lambda/r^{(\mu)}|} \det_{1\le i,j \le k_\mu} p^{(u_\mu)}_{n_{\mu} + i -1;\, n_{\mu} + j - (\lambda/r^{(\mu)})_j^\T-1} \,.
\ee
where $k_\mu = \floor{\frac{K+\mu}{r}}$. From the definition of the $r$-quotient~(\ref{eqn:pquotient}), we find
\be
k_\mu \ge \floor{\frac{\lambda_1+\mu}{r}} \ge (\lambda/r^{(\mu)})_1 \,,
\ee
so that
\begin{theorem} \label{thm:genSchurFormula}
The average of Schur polynomials in an $r$-fold matrix model is the product of $r$ averages in the parent model:
\be \label{eqn:genSchurFormula}
\boxed{\langle s_{\lambda} (X) \rangle_{N, U}^{(r, a)} = \delta_r (\lambda)
  \prod_{\mu = 0}^{r - 1} \langle s_{\lambda / r^{(\mu)}} (X)
  \rangle_{n_{\mu}, u_{\mu}} \,, }
\ee
where $n_\mu$ and $u_\mu$ are defined in Theorem~\ref{thm:genZFormula}.
\end{theorem}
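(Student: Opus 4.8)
The plan is to reduce the $r$-fold Schur average to a single determinant of parent-polynomial coefficients, factor that determinant into $r$ residue-graded blocks, and identify each block with a parent Schur average. The starting point is the general orthogonal-polynomial identity~(\ref{eqn:polyschurformula}), which writes any Schur average as a determinant in the coefficients $P_{N+i-1;\,N+j-1-\lambda_j^\T}$ of the model's orthogonal polynomials. For the $r$-fold model this is valid because the block-antidiagonal orthogonality~(\ref{eqn:rfoldortho}) forces $Z_{1,U}^{(r,a)}[P_iP_j]=0$ whenever $i<N\le j$, in the non-vanishing sectors $N\equiv 0,a \pmod r$ guaranteed by Theorem~\ref{thm:genZFormula}.

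Next I would substitute the expansion of the $r$-fold polynomials in terms of the parent ones. Since $P_{kr+i}^{(U;r,a)}(x)=x^i\,p_k^{(\cdots)}(x^r)$, its coefficient at degree $I$ is nonzero only when $r\mid(N-I)$ and then equals a single parent coefficient $p^{(\cdots)}_{\floor{N/r};\,\floor{I/r}}$. Inserting this produces a $K\times K$ determinant whose entries carry a factor $\delta_{r|(\lambda_J^\T+I-J)}$ enforcing a residue-graded sparsity; choosing $K$ with $K+N\equiv 0\pmod r$ aligns the blocks cleanly.

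The heart of the argument---and the step I expect to be the main obstacle---is factoring this sparse $K\times K$ determinant into a product of $r$ smaller determinants. This is exactly the content of~(\ref{eqn:genDetFormulaTranspose}) and Theorem~\ref{thm:transpose}: the delta pattern sorts rows and columns by residue mod $r$, and after the induced permutation the determinant splits into blocks of sizes $k_\mu=\floor{(K+\mu)/r}$ whose column labels are governed by the transposed parts $(\lambda/r^{(\mu)})_j^\T$ of the $r$-quotient~(\ref{eqn:pquotient}). The sign of this permutation, combined with the overall $(-1)^{|\lambda|}$, must be shown to collapse to the $r$-signature $\delta_r(\lambda)$ of~(\ref{eqn:deltarlambda})---in particular vanishing precisely when the $r$-core of $\lambda$ is nontrivial. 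Tracking these signs while checking that the emergent column labels really are the $r$-quotient parts is the delicate combinatorial bookkeeping.

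Finally I would recognize each block. The $\mu$-th factor is $(-1)^{|\lambda/r^{(\mu)}|}\det_{1\le i,j\le k_\mu} p^{(u_\mu)}_{n_\mu+i-1;\,n_\mu+j-(\lambda/r^{(\mu)})_j^\T-1}$, which is again of the form~(\ref{eqn:polyschurformula}), now for the parent model with rank $n_\mu$ and deformation $u_\mu$ read off from~(\ref{eqn:nmuumu}). To close I would verify the index bound $k_\mu\ge(\lambda/r^{(\mu)})_1$, which follows from $k_\mu\ge\floor{(\lambda_1+\mu)/r}\ge(\lambda/r^{(\mu)})_1$ by the definition of the $r$-quotient; this lets~(\ref{eqn:polyschurformula}) apply to each block, so the $\mu$-th factor equals $\langle s_{\lambda/r^{(\mu)}}(X)\rangle_{n_\mu,u_\mu}$ and the boxed formula~(\ref{eqn:genSchurFormula}) follows.
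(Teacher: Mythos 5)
Your proposal is correct and follows essentially the same route as the paper: start from the determinant formula~(\ref{eqn:polyschurformula}), expand the $r$-fold orthogonal polynomial coefficients in terms of the parent ones to get a residue-sparse $K\times K$ determinant with $K+N\equiv 0\pmod r$, factor it via~(\ref{eqn:genDetFormulaTranspose}) with the signs collapsing to $\delta_r(\lambda)$, and recognize each block as a parent Schur average after checking $k_\mu\ge(\lambda/r^{(\mu)})_1$. The only difference is that you make explicit the justification that~(\ref{eqn:polyschurformula}) applies in the non-vanishing sectors, which the paper relegates to its earlier discussion of~(\ref{eqn:detformula}).
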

An example application of this theorem is shown in Figure~\ref{fig:quotientCorr}.
\begin{figure}
  \centering
  \includegraphics[width=0.75\textwidth]{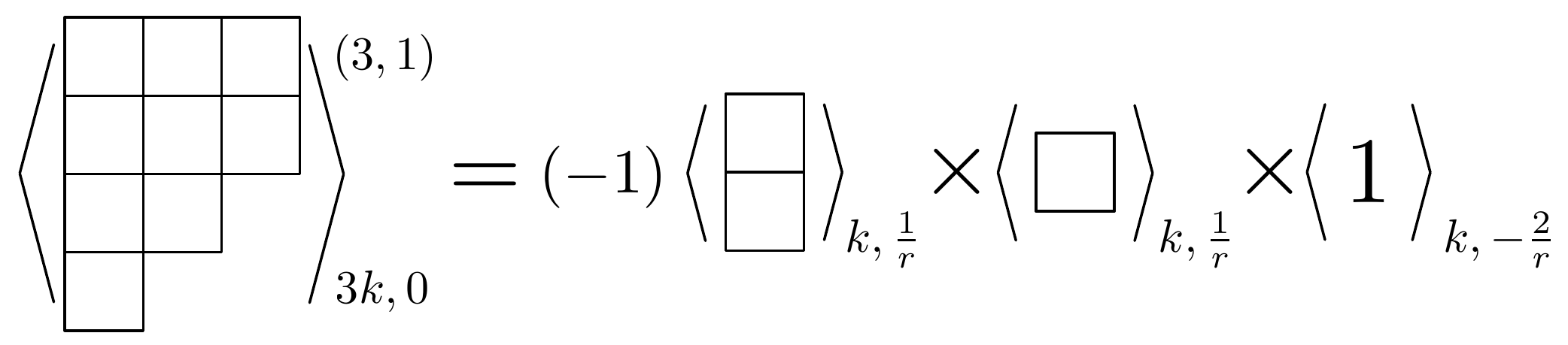}
  \caption{One example of applying Theorem~\ref{thm:genSchurFormula}, c.f.\ Figure~\ref{fig:3foldCorr}.\label{fig:quotientCorr}}
\end{figure}

\subsection{Example: logarithmic models} \label{sec:logmodels}

As an example of the general construction given in the previous section, we consider the eigenvalue model
\begin{equation} \label{eqn:logZmodel}
Z_N^{(r,a)} = \frac{1}{N!} \int_{C_{r,a}^N} \prod_i \frac{\dd x_i}{2 \pi}  \prod_{i < j} (x_i - x_j)^2  \prod_i (x_i^r)^u  (1 - x_i^r)^v \,,
\end{equation}
where
\begin{equation} \label{eqn:log-contours}
C_{r,a} = \sum_{j = 0}^{r - 1} \omega_r^{- j a} B_j \,,
\end{equation}
\smallskip\\
with $B_j$ now the finite segment $[0, \omega_r^j]$. This is an $r$-fold generalization of the $\beta = 1$ Selberg integral~\cite{Selberg:1941,Selberg:1944} (see also~\cite{Forrester:2007}):\footnote{One might hope to generalize the $\beta$-deformation $(x_i - x_j)^2 \to |x_i - x_j|^{2 \beta}$ to the exactly solvable $r>1$ models we study in this paper. However, this is challenging due to the non-analytic integrand that results for non-integral $\beta$, and even for integral $\beta$ we do not find any exact results besides the $r=2$, $a=1$ case with $u=0$, which is merely a special case of the Selberg integral.}
\begin{equation} \label{eqn:selberg}
\frac{1}{N!} \int_0^1 \prod_i \dd x_i  \prod_{i < j} (x_i - x_j)^2
  \prod_i x_i^u  (1 - x_i)^v = \prod_{i = 1}^{N} \frac{\Gamma (i + u) \Gamma (i + v) \Gamma (i)}{\Gamma (N + i + u + v)} \,.
\end{equation}
The model~(\ref{eqn:logZmodel}) is related to the exponential model~(\ref{eqn:modeldef}) by the scaling limit
\begin{equation}
\lim\limits_{v\to \infty} \biggl(1-\frac{1}{v} x^r \biggr)^v = e^{-x^r} \,,
\end{equation}
hence $\lim_{v\to \infty} S_{0,v}[X/v^{1/r}] = \Tr X^r$, where $S_{u,v}[X]$ is the logarithmic potential
\begin{equation} \label{eqn:Suv}
S_{u,v}[X] = -u \Tr \ln \big( X^r \big) - v \Tr \ln \big(1 - X^r\big) \,.
\end{equation}
This logarithmic generalization is natural, for example, from the perspective of conformal field theories, where free-field representations of correlation functions are typically given by generalized matrix models with logarithmic potentials \cite{Dotsenko:1984nm} sometimes referred to as Selberg integrals \cite{SelbergReview}.

Using Theorem~\ref{thm:genZFormula} and~(\ref{eqn:selberg}), we obtain the partition function
\begin{equation} \label{eqn:logZ}
\boxed{ \ \ Z_N^{(r,a)} = \frac{\delta_{r,a}(N)}{(2 \pi)^N}  \prod_{I =
  0}^{N - 1}  \frac{\Gam{\left\lfloor \frac{I - a}{r}
  \right\rfloor + \frac{a}{r} + u + 1} \Gam{\left\lfloor
  \frac{I}{r} \right\rfloor + v + 1} \Gam{\left\lfloor
  \frac{I}{r} \right\rfloor + 1}}{\Gam{\left\lfloor \frac{N+I - a}{r}
  \right\rfloor + \frac{a}{r} + u + v + 1}} \,,\ \ }
\end{equation}
where we use the substitution
\be
I = \begin{cases} i r - \mu - 1\,, & N=kr\,,\\ (i-1) r + \mu\,, & N = k r+a\,, \end{cases}
\ee
to simplify the product of $r$ Selberg integrals.

The expectation value of a Schur polynomial with the Selberg measure~(\ref{eqn:selberg}) is given by the Kadell formula~\cite{Kadell:1988}:
\begin{equation}
\langle s_{\lambda} (X) \rangle = \prod_{x \in \lambda} \frac{(N + c_\lambda(x)) (N + c_\lambda(x) + u)}{h_\lambda(x) (2N + c_\lambda(x)+ u+v)}\,.
\end{equation}
Using this together with Theorem~\ref{thm:genSchurFormula}, we can compute the average of a Schur polynomial in the $r$-fold model~(\ref{eqn:logZmodel}). To simplify the result, we rewrite the product over contents in the $r$-quotient as a product of columns using $\prod_{x \in \lambda} \Gamma(\alpha+c_\lambda(x)) = \prod_{j=1}^{\lambda_1} \frac{\Gamma(\alpha+j)}{\Gamma(\alpha+j-\lambda^\T_j)}$. Rearranging and applying a generalization of~(\ref{eqn:modfactorial}), the result can be expressed as a product over boxes of the parent Young diagram:
\begin{equation} \label{eqn:log1schur}
\boxed{\ \  \langle s_{\lambda} (X) \rangle = \delta_r(\lambda) \prod_{x \in \lambda} \frac{\llbracket N + c_\lambda(x) \rrbracket_{r,0}\, \llbracket N + c_\lambda(x); ru \rrbracket_{r,a}}{\llbracket h_\lambda(x) \rrbracket_{r,0} \, \llbracket 2N + c_\lambda(x); ru+rv \rrbracket_{r,a}}\,,\ \ }
\end{equation}
where the deformed bracket $\llbracket n; x\rrbracket_{r,a}$ generalizes~(\ref{eqn:bracketdef1})
\begin{equation} \label{eqn:bracketdef2}
\llbracket n; x \rrbracket_{r, a} = \begin{cases} n+x & n \equiv a \mod r \\ 1 & \text{otherwise} \end{cases} \,.
\end{equation}
It is straightforward to check that~(\ref{eqn:logZ}) and~(\ref{eqn:log1schur}) reduce to~(\ref{eqn:ZNra}) and~(\ref{eqn:1schur}) in the appropriate limit.

 These logarithmic models
 not only preserve all the solvability properties discussed so far; they also reveal additional ones not present in the original polynomial models. They enjoy exact formulas for the correlation function of two Schur polynomials (with an appropriate shift in the argument). This remarkable property has already been noted for $r=1$ models \cite{Kadell:1993, SelbergReview, Mironov:2010pi, Alba:2010qc}; in this paper we propose a generalization to $r > 1$, see \S\ref{subsec:2schur}.

\subsection{Some $q$ analogs} \label{subsec:qanalogs}

The orbifold-like construction of~\S\ref{subsec:orbifoldresults} can also be extended to $q$-analogs of random matrix models. $q$ calculus is based on the replacement
\be
\frac{d f}{d x} \longrightarrow \frac{d_q f}{d_q x} \equiv \frac{f(x) - f(q x)}{x- q x} \,,
\ee
for some $0<q<1$, where the limit $q\to 1$ takes $\frac{d_q f}{d_q x} \to \frac{d f}{d x}$. The inverse of the $q$ derivative is the Jackson integral
\be
\int f(x) d_q x \equiv (1-q) x \sum_{k=0}^\infty q^k f(q^k x) \,,
\ee
where definite integrals are defined by the fundamental theorem of calculus, $\int_a^b f(x) d_q x = F(b) - F(a)$ for $F(x) = \int f(x) d_q x$.

$q$ calculus satisfies a limited form of the chain rule
\be
d_{q^r} (a x^r) = a [r]_q x^{r-1} d_q x \,,
\ee
where $[n]_q$ denotes the $q$-number
\be
[n]_q \equiv \frac{1-q^n}{1-q} \,,
\ee
which satisfies $\lim_{q\to 1} [n]_q = n$. Such calculus appears naturally from several closely related perspectives. Physically, it has most recently attracted attention in the context of five-dimensional gauge theories, where the finite difference parameter $q$ is the exponentiated radius of the fifth dimension \cite{Nekrasov:2003af} in the spirit of Kaluza and Klein. Mathematically, $q$-numbers appear in enumerative geometry of symplectic resolutions as K-theory weights associated with fixed points of equivariant torus action \cite{Okounkov:2015spn}. From the perspective of integrability theory, $q$-numbers correspond to trigonometric integrable models, which occupy an intermediate level of complexity between rational (corresponding to usual numbers) and elliptic (corresponding to elliptic numbers, \cite{Frenkel1997}) models.

The reasoning of~\S\ref{subsec:orbifoldresults} now goes as follows.
\begin{definition}
For any $q$-deformed one-matrix model
\be \label{eqn:Qparentmodel}
Z_{n, u; q} = \frac{1}{n!} \int_{C^n} \prod_{i = 1}^n \frac{d_q x_i}{2 \pi} x_i^u \prod_{i < j} (x_i
  - x_j)^2 e^{- \Tr W(X;q)} \,,
\ee
the (pure phase) \emph{$r$-fold matrix model} is defined as
\be
Z_{N, U; q}^{(r,a)} = \frac{1}{N!} \int_{C^N_{r,a}} \prod_{i = 1}^N \frac{d_q x_i}{2 \pi} (x_i^r)^{\frac{U}{r}} \prod_{i < j} (x_i
  - x_j)^2 e^{- \Tr W(X^r;q^r)} \,,
\ee
with $C_{r,a}$ defined as in Definition~\ref{defn:rfoldmodel}.
\end{definition}
This definition is chosen so that the $q$-deformed measures of the parent and $r$-fold models are related by a change of variables. 
 We find the $r$-fold orthogonal polynomials
\begin{equation}
  P_{k r + i}^{(U ; r, a)} (x;q) = \begin{cases}
    x^i p_k^{\left(\frac{U+a}{r} - 1 \right)} (x^r;q^r)\,, & 0 \leqslant i <
    a\,,\\
    x^i p_k^{\left(\frac{U+a}{r} \right)} (x^r;q^r)\,, & a \leqslant i < r \,,
  \end{cases}
\end{equation}
with
\begin{align}
  Z_{1, U;q}^{(r, a)} [P_m^{(U ; r, a)} P_n^{(U ; r, a)}] &= T_{m;q}^{(U ; r, a)}
  \delta_{m, n}^{(r, a)} \,, & T_{k r + i;q}^{(U ; r, a)} &= \frac{r}{[r]_q} \cdot \begin{cases}
    t_{k;q^r}^{\left( \frac{U+a}{r} - 1 \right)}\,, & 0 \leqslant i < a\,,\\
    t_{k;q^r}^{\left( \frac{U+a}{r} \right)} \,, & a \leqslant i < r\,.
  \end{cases}
\end{align}
By the same reasoning as above, we conclude that
\begin{theorem} \label{thm:genQanalog}
The partition function and Schur averages of a $q$-deformed $r$-fold matrix model factor:
\be \label{eqn:genQanalog}
  \boxed{\begin{aligned}Z_{N, U;q}^{(r, a)} &= \delta_{r, a} (N) \left(\frac{r}{[r]_q}\right)^N\, \prod_{\mu = 0}^{r - 1} Z_{n_{\mu},
  u_{\mu};q^r}\,, & \langle s_{\lambda} (X) \rangle_{N, U;q}^{(r, a)} &= \delta_r (\lambda)
  \prod_{\mu = 0}^{r - 1} \langle s_{\lambda / r^{(\mu)}} (X)
  \rangle_{n_{\mu}, u_{\mu};q^r} \,, \end{aligned}}
\ee
as in Theorems~\ref{thm:genZFormula} and~\ref{thm:genSchurFormula}.
\end{theorem}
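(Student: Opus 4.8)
The plan is to reproduce, essentially line by line, the derivations of Theorems~\ref{thm:genZFormula} and~\ref{thm:genSchurFormula}, since the algebraic skeleton of those arguments---the $\mathbb{Z}_r$ orbifold projection enforced by the contour $C_{r,a}$, the block-antidiagonal orthogonality structure $\delta_{m,n}^{(r,a)}$, and the $r$-quotient determinant factorization underlying~(\ref{eqn:polyschurformula})---is purely combinatorial and never refers to the analytic nature of the integral. The single genuinely new ingredient is a $q$-analog of the change of variables $r\int_{C^{1/r}} x^{a-1} f(x^r)\, dx = \int_C y^{\frac{a}{r}-1} f(y)\, dy$ used in~\S\ref{subsec:orbifoldresults}. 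First I would establish this $q$-version directly from the Jackson integral: writing $\int_0^c x^{a-1} f(x^r)\, d_q x = (1-q)\, c^a \sum_{k\ge 0} q^{ka} f(q^{rk} c^r)$ and comparing with $\int_0^{c^r} y^{\frac{a}{r}-1} f(y)\, d_{q^r} y = (1-q^r)\, c^a \sum_{k\ge 0} q^{ka} f(q^{rk} c^r)$, the two sums coincide and the prefactors differ only by $\frac{1-q}{1-q^r} = \frac{1}{[r]_q}$, so that
\be
r \int_{C^{1/r}} x^{a-1} f(x^r)\, d_q x = \frac{r}{[r]_q} \int_C y^{\frac{a}{r}-1} f(y)\, d_{q^r} y \,.
\ee
This is precisely the origin of the factor $\frac{r}{[r]_q}$ in the statement.

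Next I would feed this identity into the orthogonality computation. Exactly as in~(\ref{eqn:rfoldortho}), the homogeneity $P_{kr+i}^{(U;r,a)}(\omega_r x;q) = \omega_r^{kr+i} P_{kr+i}^{(U;r,a)}(x;q)$ forces the Jackson integral over $C_{r,a}$ to vanish unless the combined degree lies in the residue class selected by $\delta_{m,n}^{(r,a)}$, in which case it collapses to a single $q$-integral over a rotation of $C^{1/r}$. Applying the change of variables above converts this into the parent normalization defined by~(\ref{eqn:parentortho}) with $q \to q^r$, times the universal factor $\frac{r}{[r]_q}$. This yields $T_{kr+i;q}^{(U;r,a)} = \frac{r}{[r]_q}\, t_{k;q^r}^{(\frac{U+a}{r}-1)}$ for $0\le i<a$ and $T_{kr+i;q}^{(U;r,a)} = \frac{r}{[r]_q}\, t_{k;q^r}^{(\frac{U+a}{r})}$ for $a\le i<r$, as quoted.

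With the normalizations in hand, the partition function follows as $Z_{N,U;q}^{(r,a)} = \delta_{r,a}(N) \prod_{I=0}^{N-1} T_{I;q}^{(U;r,a)}$, where each of the $N$ factors contributes one power of $\frac{r}{[r]_q}$---giving the prefactor $\bigl(\frac{r}{[r]_q}\bigr)^N$---and the remaining product of parent normalizations reorganizes into $\prod_\mu Z_{n_\mu, u_\mu; q^r}$ under the same index substitution $I = ri + (N-\mu-1)\bmod r$ used in Theorem~\ref{thm:genZFormula}. For the Schur average I would note that~(\ref{eqn:polyschurformula}) expresses $\langle s_\lambda \rangle$ solely through the coefficients of the \emph{monic} orthogonal polynomials, which here are just the coefficients of the parent polynomials $p_k^{(\cdot)}(\,\cdot\,;q^r)$ repositioned; the normalizations $T$---and hence every factor of $\frac{r}{[r]_q}$---drop out of the normalized average. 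The $r$-quotient determinant factorization~(\ref{eqn:genDetFormulaTranspose}) then applies verbatim, producing $\prod_\mu \langle s_{\lambda/r^{(\mu)}} \rangle_{n_\mu, u_\mu; q^r}$ with no extra $q$-prefactor, as claimed.

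The main obstacle is concentrated entirely in the first step: verifying the $q$-change of variables and confirming that the Jackson-integral orbifold projection behaves exactly as its continuous counterpart, so that the factor $\frac{r}{[r]_q}$ is correctly pinned down and enters only through the normalizations (explaining why it survives in $Z$ but cancels in normalized Schur averages). Once this is settled, the remainder is a mechanical transcription of the $q=1$ proofs, since none of the determinant or $r$-quotient manipulations are sensitive to the replacement of $dx$ by $d_q x$.
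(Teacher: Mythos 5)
Your proposal is correct and takes essentially the same route as the paper: the paper likewise builds the $r$-fold orthogonal polynomials $P^{(U;r,a)}_{kr+i}(x;q)$ from the parent ones evaluated at $(x^r;q^r)$, obtains $T^{(U;r,a)}_{kr+i;q} = \frac{r}{[r]_q}\, t^{(\cdot)}_{k;q^r}$ from the change of variables relating the two $q$-deformed measures, and then declares the rest to follow ``by the same reasoning'' as in Theorems~\ref{thm:genZFormula} and~\ref{thm:genSchurFormula}. Your explicit Jackson-integral verification that $r\int x^{a-1}f(x^r)\,d_qx = \frac{r}{[r]_q}\int y^{\frac{a}{r}-1}f(y)\,d_{q^r}y$, and your observation that the $\frac{r}{[r]_q}$ factors enter only through the normalizations (hence survive in $Z$ but cancel in normalized Schur averages), supply precisely the details the paper leaves implicit.
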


As an example, we consider the $q$-Selberg integral
\begin{equation} \label{eqn:qselberg}
\frac{1}{N!} \int_0^1 \prod_i \dd_q x_i  \prod_{i < j} (x_i - x_j)^2
  \prod_i x_i^u  (q x_i; q)_v = q^{(u+1) \binom{N}{2}+2 \binom{N}{3}  } \prod_{i = 1}^{N} \frac{\Gamma_q (i + u) \Gamma_q (i + v) \Gamma_q (i)}{\Gamma_q (N + i + u + v)} \,.
\end{equation}
Here
\begin{align}
(x;q)_\infty &\equiv \prod_{i=0}^{\infty} (1-q^i x) \,, & (x;q)_n &\equiv \frac{(x;q)_\infty}{(x q^n;q)_\infty}\,,
\end{align}
is the $q$-Pochhammer symbol and $\Gamma_q(x) \equiv (1-q)^{1-x} (q;q)_{x-1}$ is the $q$-gamma function, which satisfies $\Gamma_q(x+1) = [x]_q \Gamma_q(x)$, $\Gamma_q(1) = 1$, and $\lim_{q\to1} \Gamma_q(x) = \Gamma(x)$.

The $r$-fold generalization of the $q$-Selberg integral is the eigenvalue model:
\be
Z_{N}^{(r,a)} = \frac{1}{N!} \int_{C^N_{r,a}} \prod_{i = 1}^N \frac{\dd_q x_i}{2 \pi} (x_i^r)^{u} (q^r x_i^r; q^r)_{v} \prod_{i < j} (x_i
  - x_j)^2  \,.
 \ee
Using~(\ref{eqn:genQanalog}), we obtain
\begin{multline}
 Z_N^{(r,a)} = \frac{\delta_{r,a}(N)}{(2 \pi)^N}  (q^r)^{n_r(N,u)} \left(\frac{r}{[r]_q}\right)^N\, \\
 \times  \prod_{I = 0}^{N - 1}  \frac{\Gamma_{q^r}\bigl(\floor{\frac{I - a}{r}} + \frac{a}{r}+u + 1\bigr) \Gamma_{q^r}\bigl(\floor{\frac{I}{r}} + v + 1\bigr) \Gamma_{q^r}\bigl(\floor{\frac{I}{r}} + 1\bigr)}{\Gamma_{q^r}\bigl(\floor{\frac{N+I - a}{r} } + \frac{a}{r}+u+v + 1\bigr)} \,,
\end{multline}
where $n_r(N,u)\equiv \sum_{\mu = 0}^{r-1} \left[(u_\mu+1) \binom{n_\mu}{2}+2 \binom{n_\mu}{3}\right]$.\footnote{One can check that $n_r(N,u) = r (u+1) \binom{\floor{N/r}}{2}+2 r \binom{\floor{N/r}}{3}+\floor{N/r} (N \bmod r) (u+N/r-1)$, hence it does not depend separately on $a$.}
Apart from the $q$-dependent prefactor, this is a straightforward $q$-analog of~(\ref{eqn:logZ}).

Likewise, using the $q$-Selberg Schur average formula~\cite{Kadell:1988}
\be
\langle s_{\lambda} (X) \rangle = q^{n(\lambda)} \prod_{x \in \lambda} \frac{[N + c_\lambda(x)]_q [N + c_\lambda(x) + u]_q}{[h_\lambda(x)]_q [2N + c_\lambda(x)+ u+v]_q} \,,
\ee
with $n(\lambda) \equiv \sum_i (i-1)\lambda_i$, we obtain
\begin{equation}
\langle s_{\lambda} (X) \rangle^{(r,a)} = \delta_r(\lambda)\, (q^r)^{n_r(\lambda)} \prod_{x \in \lambda} \frac{\llbracket N + c_\lambda(x) \rrbracket_{r,0}^{(q)}\, \llbracket N + c_\lambda(x); ru \rrbracket_{r,a}^{(q)}}{\llbracket h_\lambda(x) \rrbracket_{r,0}^{(q)} \, \llbracket 2N + c_\lambda(x); ru+rv \rrbracket_{r,a}^{(q)}}\,,
\end{equation}
where $\llbracket n; x \rrbracket_{r,a}^{(q)}$ is given by the obvious $q$-deformation $n+x \to [n+x]_q$ of~(\ref{eqn:bracketdef2}) and $n_r(\lambda) \equiv \sum_\mu n(\lambda/r^{(\mu)})$. This is a $q$-analog of~(\ref{eqn:log1schur}).

\paragraph{The $q \to \omega_r$ limit}

There is a another class of $q$-analogs of the $r$-fold models discussed in this paper. To see this, consider a Jackson integral
\be
\int_{C} e^{-W(x;q)} d_q x \,,
\ee
where for simplicity we take $C$ to be an interval along the positive real axis. Whereas in the limit $q\to 1$ we obtain an ordinary integral along $C$, in the limit $q \to \omega_r$, we find
\be
\lim_{q \to \omega_r} [r]_q \int_a^b e^{-W(x;q)} \dd_q x = \int_{C_{r,0}} e^{-W_r(x)} \dd x\,,
\ee
where $W_r(x) = W(x;\omega_r)$ and $C_{r,0}$ is the $a=0$ $r$-fold contour of Definition~\ref{defn:rfoldmodel}. Other $r$-fold contours can be obtained by inserting appropriate branch cuts (in the form $x^{-a} (x^r)^{a/r}$) into the integrand.

Thus, for a given $r$-fold potential $W(x^r)$, any $q$-deformed potential $W(x;q)$ satisfying $W(x;\omega_r)=W(x^r)$ defines a $q$-analog which reduces to the $r$-fold model in the $q \to \omega_r$ limit. For example, in the $q$-Selberg measure we have
\be
\lim_{q \to \omega_r} (q x_i; q)_V = (1-x_i^r)^{V/r} \,,
\ee
provided that $V \in r \mathbb{Z}$. Thus, the $r=1$ $q$-Selberg integral is in some sense a $q$-analog of the $r$-fold logarithmic model discussed in~\S\ref{sec:logmodels}. Using this connection and analytic continuation off the integers, one obtains an alternate proof of~(\ref{eqn:logZ}) and~(\ref{eqn:log1schur}) from the $q$-Selberg integral.

The limit $q\to \omega_r$ has an interesting connection to orbifolds (see, e.g., \cite{Kimura:2011zf}), which plays a role in several possible physical applications for $r$-fold matrix models, as discussed in~\S\ref{sec:future}.

\subsection{Refactorization of the Vandermonde} \label{subsec:refactorization}

The results of~\S\ref{subsec:orbifoldresults} can be reduced to a combinatoric identity of the integrand, as follows. Define the projection operator
\be
P_a f(x) = \frac{1}{r} \sum_{j=0}^{r-1} \omega_r^{-j a} f(\omega_r^j x) \,,
\ee
and notate
\be
P_{a_1, \ldots, a_N} f(x_1,\ldots, x_N) = P_{a_1}^{(x_1)} \ldots P_{a_N}^{(x_N)} f(x_1,\ldots, x_N) \,.
\ee
We will show that
\begin{theorem} \label{thm:VanProjection}
 The projection of the Vandermonde factors
\begin{equation} \label{eqn:VanProjection}
\boxed{
P_{a_1, \ldots, a_N} \prod_{I>J} (x_I - x_J) = \delta_{a_1,\ldots,a_N} \prod_I x_I^{a_I \bmod r} \prod_{\mu=0}^{r-1} \prod_{i>j} (x_{I_{\mu,i}}^r-x_{I_{\mu,j}}^r) \,,}
\end{equation}
where $I_{\mu,i}$ denotes the $i$th value of $I$ for which $a_I \equiv \mu \pmod r$ and $\delta_{a_1, \ldots, a_N} \in \{0,\pm1\}$ is defined as follows: consider the function $\sigma: [1,N] \to \mathbb{N}$
\be
\sigma(I_{\mu,i}) = 1+r (i-1) + \mu \,.
\ee
If $\sigma$ is a permutation on $[1,N]$, then $\delta_{a_1, \ldots, a_N}=(-1)^\sigma$. Otherwise $\delta_{a_1, \ldots, a_N}=0$.
\end{theorem}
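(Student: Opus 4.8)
The plan is to expand the Vandermonde into a signed sum of monomials and exploit the fact that each single-variable projection $P_a$ acts diagonally in this basis. A direct computation gives $P_a x^p = \frac{1}{r}\sum_{j=0}^{r-1}\omega_r^{j(p-a)}\,x^p$, which equals $x^p$ when $p\equiv a \pmod r$ and vanishes otherwise; hence $P_{a_1,\ldots,a_N}$ is simply the projector onto monomials $\prod_I x_I^{p_I}$ with $p_I\equiv a_I\pmod r$ for every $I$. Writing the Vandermonde as a determinant, $\prod_{I>J}(x_I-x_J)=\det(x_I^{J-1})=\sum_{\pi\in S_N}(-1)^{\pi}\prod_I x_I^{\pi(I)-1}$, the projection therefore retains exactly the terms with $\pi(I)-1\equiv a_I\pmod r$, so that
\[
P_{a_1,\ldots,a_N}\prod_{I>J}(x_I-x_J)=\sum_{\pi}(-1)^{\pi}\prod_I x_I^{\pi(I)-1}\,,
\]
where the sum runs over $\pi\in S_N$ satisfying $\pi(I)\equiv a_I+1\pmod r$ for all $I$.

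Next I would organize the admissible permutations by residue class. Writing $n_\mu$ for the number of indices $I$ with $a_I\equiv\mu\pmod r$, an admissible $\pi$ must map the $n_\mu$ indices in block $\mu$ bijectively onto the values of $\{1,\ldots,N\}$ congruent to $\mu+1\pmod r$. The map $\sigma(I_{\mu,i})=1+r(i-1)+\mu$ of the theorem is precisely the candidate listing those target values in increasing order within each block; it is a genuine bijection of $[1,N]$ exactly when the block sizes match the residue-class counts, which is the very condition for admissible $\pi$ to exist. When $\sigma$ is not a permutation the residue matching is impossible, the sum is empty, and both sides vanish, in agreement with $\delta_{a_1,\ldots,a_N}=0$.

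Assuming $\sigma$ is a permutation, the key step is to factor every admissible $\pi$ as $\pi=\sigma\circ\beta$, where $\beta$ is block-diagonal, permuting only within each block via $\beta(I_{\mu,i})=I_{\mu,\beta_\mu(i)}$ for some $\beta_\mu\in S_{n_\mu}$. This yields $(-1)^{\pi}=(-1)^{\sigma}\prod_\mu(-1)^{\beta_\mu}$, and using $\sigma(I_{\mu,j})-1=\mu+r(j-1)$ the monomial splits as $\prod_I x_I^{\pi(I)-1}=\bigl(\prod_I x_I^{a_I\bmod r}\bigr)\prod_\mu\prod_i(x_{I_{\mu,i}}^r)^{\beta_\mu(i)-1}$, peeling off the advertised prefactor. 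The sum over $\{\beta_\mu\}$ then factorizes across $\mu$, and each block sum is recognized as a Vandermonde in the $r$th-powered variables:
\[
\sum_{\beta_\mu\in S_{n_\mu}}(-1)^{\beta_\mu}\prod_{i=1}^{n_\mu}(x_{I_{\mu,i}}^r)^{\beta_\mu(i)-1}=\det_{1\le i,j\le n_\mu}(x_{I_{\mu,i}}^r)^{j-1}=\prod_{i>j}(x_{I_{\mu,i}}^r-x_{I_{\mu,j}}^r)\,.
\]
Collecting the sign $(-1)^{\sigma}=\delta_{a_1,\ldots,a_N}$, the prefactor, and the product of block Vandermondes reproduces the claimed identity.

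The main obstacle I anticipate is the sign bookkeeping: confirming that $\pi\mapsto(\sigma,\beta)$ is a bijection onto the admissible set and that the sign multiplies correctly requires fixing the index/value conventions once and for all, carefully distinguishing the domain indices $I_{\mu,i}$ from the target values $1+r(i-1)+\mu$. Everything else---the reduction of the Vandermonde determinant to a signed sum and the recognition of the block sums as $r$th-power Vandermondes---is routine.
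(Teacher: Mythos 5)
Your proof is correct, and it takes a genuinely different route from the paper's. The paper argues abstractly: it uses the homogeneity of the projected polynomial to peel off the prefactor $\prod_I x_I^{a_I \bmod r}$, uses antisymmetry within each residue block to extract the block Vandermondes, then shows by a degree count that the remaining factor $\delta$ is a constant precisely when $N_\mu = n_\mu$ (and forced to vanish otherwise), and finally fixes that constant by comparing the coefficient of a single monomial and permuting labels. You instead expand $\prod_{I>J}(x_I-x_J)=\det(x_I^{J-1})$ as a signed sum over permutations, observe that $P_{a_1,\ldots,a_N}$ acts diagonally on monomials and kills all terms except those with $\pi(I)\equiv a_I+1 \pmod r$, and then resum the survivors by factoring each admissible $\pi$ as $\sigma\circ\beta$ with $\beta$ block-diagonal. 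Your route makes the sign $(-1)^\sigma$ and the vanishing condition completely mechanical (the admissible set is empty exactly when $\sigma$ fails to be a permutation of $[1,N]$, since $1+r(i-1)+\mu$ lists the values of $[1,N]$ congruent to $\mu+1\pmod r$ and there are $n_\mu$ of them), at the cost of the permutation bookkeeping you flag; the paper's route avoids the explicit sum but requires the separate optimization argument that the degree of $\delta$ is maximized at $N_\mu=n_\mu$ where it equals zero, and its determination of the sign by ``permuting the labels'' is terser than your explicit factorization $(-1)^\pi=(-1)^\sigma\prod_\mu(-1)^{\beta_\mu}$. The one step you should spell out when writing this up is the bijectivity of $\pi\mapsto(\sigma,\{\beta_\mu\})$ onto the admissible permutations, but this is the standard Laplace-type decomposition of a determinant with checkerboard sparsity and presents no real difficulty.
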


\begin{proof}
Call the polynomial on the left hand side $f(x_1,\ldots,x_N)$. The homogeneity conditions $f(\ldots,\omega_r x_I,\ldots) = \omega_r^{a_I} f(\ldots, x_I,\ldots)$ imply that $f(x_1,\ldots,x_N) = \bigl(\prod_I x_I^{a_I \bmod r}\bigr) g(x_1^r, \ldots, x_N^r)$ for some polynomial $g(y_1, \ldots, y_N)$. Since $f$ is alternating within each sub-block $I_{\mu, i}$ for fixed $\mu$, so too is $g$, hence $g$ contains Vandermonde factors, and~(\ref{eqn:VanProjection}) holds for some polynomial $\delta = \delta(x_1^r, \ldots, x_N^r)$. The total degree of $\delta$ in the $x$ variables is
\be \label{eqn:degreedelta}
\sum_{\mu = 0}^{r-1} \biggl(\mu N_\mu +\frac{N_\mu (N_\mu-1)}{2} \biggr) -\frac{N(N-1)}{2} \,,
\ee
where $N_\mu$ denotes the number of variables $x_I$ for which $a_I \equiv \mu \pmod r$. Maximizing~(\ref{eqn:degreedelta}) at fixed $N$, we find the conditions $N_\mu - N_\nu \in \{0,1\}$ for $\mu \le \nu$, thus the unique maximum satisfies $N_\mu = n_\mu$ from~(\ref{eqn:nmuumu}). In this case, one can check that~(\ref{eqn:degreedelta}) vanishes, implying that $\delta$ is a constant, whereas for $N_\mu \ne n_\mu$ the total degree of $\delta$ is negative, requiring $\delta = 0$.

To fix the constant $\delta$ in the case where $N_\mu = n_\mu$, note that we can fix $a_I \equiv I-1 \pmod r$ up to a permutation of the labels. Comparing the coefficients of $\prod_I x_I^{I-1}$ on both sides, we conclude that $\delta_{0,\ldots,N-1} = 1$, whereas $\sigma$ is the identity permutation. Generalizing by permuting the labels gives $\delta_{a_1, \ldots, a_N}=(-1)^\sigma$. Since moreover $\sigma$ is a permutation on $[0,N]$ iff $N_\mu = n_\mu$, the result~(\ref{eqn:VanProjection}) is proven.
\end{proof}

In fact, Theorem~\ref{thm:VanProjection} encompasses our earlier results Theorems~\ref{thm:genZFormula}--\ref{thm:genQanalog}. For instance, the pure-phase partition function can be derived by rewriting
\be \label{eqn:SNfix1}
\frac{1}{N!} \left(\det_{N\times N} x_i^{N-j}\right)^2 \cong \prod_i x_i^{N-i} \prod_{i<j} (x_i - x_j) \,,
\ee
up to symmetrization in the variables $x_i$. The integral over $C_{r,a}$ imposes the projection $P_{a-1,\ldots,a-1}$ on the integrand, hence $P_{a-N,\ldots, a-1}$ on the Vandermonde $\prod_{i<j} (x_i - x_j)$. By Theorem~\ref{thm:VanProjection}, this splits the Vandermonde into $r$ non-interacting blocks, and we recover Theorem~\ref{thm:genZFormula}. Likewise, using the definition~(\ref{eqn:Schurdef}) we find
\be \label{eqn:SNfix2}
\frac{1}{N!} s_{\lambda}(X) \left(\det_{N\times N} x_i^{N-j}\right)^2 \cong \prod_i x_i^{N+\lambda_i-i} \prod_{i<j} (x_i - x_j) \,,
\ee
up to symmetrization. As before, the projection implied by the contour splits the Vandermonde into $r$ non-interacting blocks, and we recover Theorem~\ref{thm:genSchurFormula}.

Theorem~\ref{thm:VanProjection} can also be used to answer several important questions that we will not discuss in detail in the present paper, but which deserve further attention in future work. Firstly, using this identity we can compute the unnormalized insertions $Z_N^{(r,a)}[s_{\lambda}(X)]$ in cases where $Z_N^{(r,a)}[1] = 0$ ($N \not\equiv 0, a \pmod r$). More importantly, Theorem~\ref{thm:VanProjection} can be used to derive a factorized structure for \emph{mixed phases}, as follows.

After fixing the $S_N$ permutation symmetry as in~(\ref{eqn:SNfix1}) or~(\ref{eqn:SNfix2}), a mixed phase integrated on the contour $\prod_{a=0}^{r-1} C_{r,a}^{N_a}$ ($\sum_a N_a = N$) becomes an integral over the symmetrization of the contour $\prod_a C_{r,a}^{N_a}$. This symmetrization can be broken up into ordered components $\prod_{i=1}^N C_{r,a_i}$, such that $x_i$ is integrated along the contour $C_{r,a_i}$. This is a finer basis of contours than that discussed in~\S\ref{sec:puremixed} because in a mixed phase each symmetrized contour has multiple ordered components $\prod_{i=1}^N C_{r,a_i}$. In particular, for $p$ eigenvalue contours there are $p^N$ ordered contours but only $\binom{N+p-1}{p-1}$ symmetrized contours. The larger basis of contours does not lead to further ambiguities in the loop equations (recall~\S\ref{sec:contourdependence}) because the weights of ordered contours related by permutations are constrained to be equal. Nonetheless, the basis of ordered contours is useful because on each ordered contour the eigenvalue interactions~(\ref{eqn:SNfix1}) and~(\ref{eqn:SNfix2}) factor into the product of $r$ subblocks using Theorem~\ref{thm:VanProjection}, much as in Theorems~\ref{thm:genZFormula} and~\ref{thm:genSchurFormula}.

This means, for instance, that the partition function $Z$ of the reflection-positive quartic matrix model discussed in~\S\ref{sec:quartic} can be written explicitly as a sum of $2^N$ terms (since $(C_{4,1}+C_{4,3})/2$ equals the real line), as can $Z[s_{\lambda}(X)]$. While the number of terms grows rapidly with $N$, the growth is much slower than the $N!$ terms which appear in the determinant~(\ref{eqn:quarticdet}).

Moreover, this approach gives a straightforward solution to any given ``symmetrized'' mixed phase as a finite sum over ``ordered'' mixed phases, where the ordered mixed phases are given by $r$-fold products of the parent theory. Whether this leads to further insights into the mixed phases (and by extension, strongly-interacting reflection-positive models) is a question for the future.

\section{Applications and Future Directions} \label{sec:future}

In this section we discuss various generalizations and applications of our results. Several open questions still remain here and clarifying those questions is a promising direction of future research.

\subsection{Two-Schur correlators} \label{subsec:2schur}

Perhaps the most puzzling and not fully understood phenomenon, observed in a wide range of matrix models, is a possibility to write an exact formula for a \emph{two-Schur correlator}: an average of a product of two Schur polynomials. While existence of  a formula for the average of a single Schur polynomial is not surprising and relates to, e.g., the determinant structure of Schur polynomials, an equally simple reason for the exact solvability of two-Schur correlators is---to the best of our knowledge---unknown.

In this section, we present such a formula for two-Schur averages in the general case of the $r$-fold $q$-deformed logarithmic model. We have tested this formula with numerous computer experiments.\footnote{The necessary MAPLE code may be found at \url{http://math.harvard.edu/~shakirov/}.}
\begin{conjecture} \label{conj:twoschur}
For a pair of partitions $\lambda, \mu$ let $Z_{\lambda \mu}$ be the following product,
\begin{equation}
\begin{split}
Z_{\lambda \mu} &= \prod\limits_{1 \leq i < j \leq N} \dfrac{\lfloor \lambda_i - \lambda_j + j - i; \ 0 \rfloor_{r,0}}{\lfloor j - i; \ 0 \rfloor_{r,0}} \ \prod\limits_{1 \leq i < j \leq N} \dfrac{\lfloor \mu_i - \mu_j + j - i; \ 0 \rfloor_{r,0}}{\lfloor j - i; \ 0 \rfloor_{r,0}} \\
&\mathrel{\phantom{=}} \prod\limits_{1 \leq i,j \leq N} \dfrac{\lfloor 2 N + 1 - i - j; \ u + v \rfloor_{r,a}}{\lfloor 2 N + 1 + \lambda_i + \mu_j - i - j; \ u+v \rfloor_{r,a}} \\
&\mathrel{\phantom{=}} \prod\limits_{(i,j) \in \lambda} \dfrac{\lfloor N + j - i; \ u \rfloor_{r,a}}{ \lfloor N + j - i; \ u + v \rfloor_{r,a} } \prod\limits_{(i,j) \in \mu} \dfrac{\lfloor N + j - i; \ v \rfloor_{r,0}}{ \lfloor N + j - i; \ 0 \rfloor_{r,0}}\,,
\label{2Schur}
\end{split}
\end{equation}
\smallskip\\
with the linear factors given by
\begin{equation}
\lfloor x; y \rfloor_{r,a} =
\begin{cases}
\Lambda \ \dfrac{q^{(x+ry)/2} - q^{-(x+ry)/2}}{q^{1/2} - q^{-1/2}}, &(x+a) \bmod r = 0\,, \\
\dfrac{\omega_r^{x/2} - \omega_r^{-x/2}}{\omega_r^{1/2} - \omega_r^{-1/2}}, &(x+a) \bmod r \neq 0\,.
\end{cases}
\end{equation}
Then,
\begin{align}
\label{eq:cong}
\left\langle \ \chi_{\lambda}\big( X \big) \ \chi_{\mu}\left( \Tr X^k \ \mapsto \ \Tr X^k + rv \ \delta_{r|k} \right) \ \right\rangle =
\begin{cases}
Z_{\lambda \mu}, & \mbox{if} \ \deg_{\Lambda} Z_{\lambda \mu} = 0\,, \\
0, & \mbox{otherwise.}
\end{cases}
\end{align}
\end{conjecture}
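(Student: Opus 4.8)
The plan is to reduce the $r>1$ two-Schur correlator to the $r=1$ ($q$-Selberg) case through the orbifold factorization, taking the known $r=1$ two-Schur formula \cite{Kadell:1993,SelbergReview,Mironov:2010pi,Alba:2010qc} as input. Concretely, I would first recast the average~\eqref{eq:cong} as an eigenvalue integral and fix the $S_N$ symmetry exactly as in~\eqref{eqn:SNfix2}, but now with two insertions: writing the shifted $s_\mu$ through its determinant definition~\eqref{eqn:Schurdef} and pairing one of its two determinants with the Vandermonde supplied by the measure. The shift $\Tr X^k \mapsto \Tr X^k + rv\,\delta_{r|k}$ is precisely the power-sum shift that reinstates the Selberg $(1-x^r)^v$ weight on the $\mu$-factor, so after this step the integrand should take the form of a monomial-weighted product of an alternating polynomial and a single Vandermonde.

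The heart of the argument is to extend the Vandermonde refactorization of Theorem~\ref{thm:VanProjection} from one insertion to two. For a single Schur the integrand collapses to $\prod_i x_i^{N+\lambda_i-i}\prod_{i<j}(x_i-x_j)$, a clean monomial $\times$ Vandermonde, and the $\mathbb{Z}_r$ projection imposed by $C_{r,a}$ splits it into $r$ non-interacting blocks. With two distinct Schurs one instead faces a product of two inequivalent determinants, and the ``square of a Vandermonde'' structure on which Theorem~\ref{thm:VanProjection} relies is lost. I would restore it by linearizing one factor---either by expanding $s_\lambda s_\mu$ in the Littlewood--Richardson basis and applying the single-Schur factorization of Theorem~\ref{thm:genSchurFormula} to each resulting $s_\nu$, or, more promisingly, by using the characteristic-polynomial expansion~\eqref{eqn:detexpansion} to represent the second Schur through auxiliary variables $z_j$, so that the projection $P_{a_1,\dots,a_N}$ once again acts on a single alternating polynomial. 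Either route should express the two-Schur average as a product over $r$ sheets $\rho=0,\dots,r-1$ of parent two-Schur averages at $q\to q^r$, with the partitions on sheet $\rho$ given by the components $\lambda/r^{(\rho)}$ and $\mu/r^{(\rho)}$ of the respective $r$-quotients.

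Granting this factorization, I would substitute the $r=1$ $q$-Selberg two-Schur formula into each of the $r$ factors and reassemble, matching $Z_{\lambda\mu}$ term by term. Here the two cases of the linear factor $\lfloor x;y\rfloor_{r,a}$ should sort themselves out naturally: the ``divisible'' brackets (those carrying $\Lambda$ and a genuine $q^r$-number) collect into the parent Selberg $q^r$-gamma data, while the ``non-divisible'' brackets (built from roots of unity $\omega_r$) assemble the combinatorial signs and $r$-signature contributions generated by the orbifold projection, in exact parallel with the appearance of $\delta_r(\lambda)$ in~\eqref{eqn:log1schur}. Finally, the selection rule---$Z_{\lambda\mu}$ when $\deg_\Lambda Z_{\lambda\mu}=0$ and $0$ otherwise---should be identified with $\mathbb{Z}_r$-charge conservation: the net $\Lambda$-degree measures the imbalance between divisible and non-divisible factors, which is forced to vanish for any nonzero orbifold-projected average, generalizing the triviality-of-the-$r$-core condition.

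The main obstacle is precisely the two-insertion extension of Theorem~\ref{thm:VanProjection}. For one Schur the refactorization works because the insertion merely shifts the exponents of a single monomial prefactor, whereas two distinct Schurs genuinely couple the two determinants, and it is not a priori clear that the $\mathbb{Z}_r$ projection preserves any factorized form. The difficulty is compounded by the fact that even the parent $r=1$ two-Schur formula lacks a conceptual derivation, so the orbifold lift can at best inherit---not bypass---that difficulty. Controlling the auxiliary-variable dependence after projection, and verifying that the $r$-quotients of $\lambda$ and $\mu$ pair consistently across the $r$ sheets (rather than through some nontrivial matching), is where I expect the real work to concentrate.
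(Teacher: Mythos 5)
This statement is presented in the paper as a \emph{conjecture}, supported only by ``numerous computer experiments''; the paper contains no proof of it, and explicitly states that a conceptual reason for the solvability of two-Schur correlators is unknown even at $r=1$ (the $r=1$, $v\neq 0$ case is itself only a conjecture of~\cite{Mironov:2010pi,Alba:2010qc,Mironov:2011dk}, reducing to a theorem of Kadell~\cite{Kadell:1993} only at $v=0$). So there is no proof in the paper to compare against, and your proposal must stand on its own --- and as written it does not close. The central step, a two-insertion extension of Theorem~\ref{thm:VanProjection} that would factor the correlator into $r$ parent two-Schur averages indexed by the $r$-quotients, is asserted rather than established, as you yourself acknowledge. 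The obstruction is concrete: after gauge-fixing, the integrand is $\prod_i x_i^{N+\lambda_i-i}\,\det_{j,k}\, x_j^{N+\mu_k-k}$ rather than a monomial times a single Vandermonde, and the $\mathbb{Z}_r$ projection then selects permutations $\sigma$ in the determinant satisfying $\lambda_i-i+\mu_{\sigma(i)}-\sigma(i)\equiv \mathrm{const} \pmod r$, which couples the quotient structures of $\lambda$ and $\mu$ in a way Theorem~\ref{thm:VanProjection} does not address. The conjectured formula itself signals this coupling through the cross product $\prod_{i,j}\lfloor 2N+1+\lambda_i+\mu_j-i-j;\,u+v\rfloor_{r,a}$ over \emph{all} pairs $(i,j)$ and through the selection rule $\deg_\Lambda Z_{\lambda\mu}=0$, which is a balance condition on the pair $(\lambda,\mu)$ rather than the triviality of both $r$-cores; whether the surviving terms reassemble into a diagonal pairing of quotient components (sheet $\rho$ of $\lambda$ with sheet $\rho$ of $\mu$) is precisely the open combinatorial question, not a consequence of anything proved in the paper.

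Even granting that factorization, your argument would only reduce Conjecture~\ref{conj:twoschur} to the $r=1$, $v\neq 0$ input, which is itself unproven, so the best achievable outcome of this route is a conditional reduction, not a proof. That said, the strategy is sensible and aligned with how the paper motivates the conjecture (the orbifold factorization of~\S\ref{subsec:refactorization} and the $q\to\omega_r$ limit of~\S\ref{subsec:qanalogs}); formulating and proving the two-insertion analogue of Theorem~\ref{thm:VanProjection} would be a genuine contribution beyond what the paper contains, but it is the hard part, and it is missing here.
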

 Here the notation $s_{\lambda}( \Tr X^k \mapsto \Tr X^k + rv \delta_{r|k})$ means that we first write $s_\lambda$ as a linear combination of multitrace operators and then replace $\Tr X^k \mapsto \Tr X^k + rv \delta_{r|k}$.
Note that this formula does not fully survive the polynomial limit: when $x \mapsto x / v$ and $v \rightarrow \infty$,  the $v$-shift in the second Schur polynomial implies that the $\mu$-dependence goes away, and the formula reduces to a single Schur correlator.  Thus, there is no analogous two-Schur average formula in the monomial matrix models considered in~\S\ref{sec:exactsolutions}.

The $r=1$ case of Conjecture~\ref{conj:twoschur} was given in \cite{Mironov:2010pi, Alba:2010qc,Mironov:2011dk} and used to formulate a proof of the AGT conjecture. For $v = 0$ these conjectures reduce to a theorem proven by Kadell~\cite{Kadell:1993}. The space of conjectured solvable two-Schur correlators is diagrammed in Figure~\ref{fig:2schur}. Here we include the possibility of an elliptic generalization, see, e.g.,~\cite{SelbergReview}.
\begin{figure}
  \begin{center}
    \includegraphics[width=0.4\textwidth]{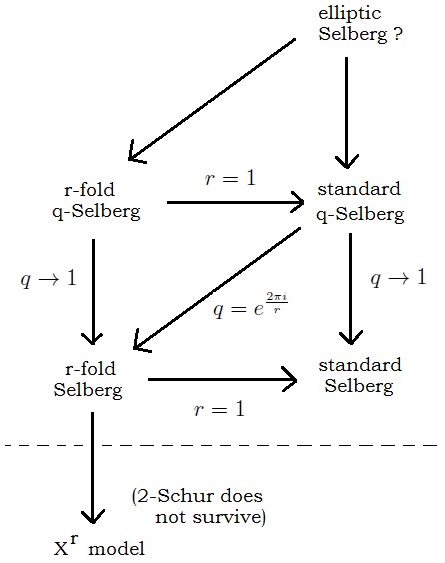}
      \caption{The landscape of two-Schur correlators\label{fig:2schur}}
  \end{center}
\end{figure}

As illustrated in Figure~\ref{fig:2schur}, the $r$-fold $q$-deformed model discussed in~\S\ref{subsec:qanalogs} generalizes all of the known solvable two-Schur correlators, with the possible exception of an elliptic version of the formula. For $r=1$ one recovers the correlators associated to the $q$-Selberg model from~\cite{Mironov:2011dk}, of which the other existing conjectures are special cases. An additional relationship in the diagram arises from the $q \to \omega_r$ limit of the $r=1$ $q$-deformed formula, which gives the $r$-fold formula without $q$-deformation, see~\S\ref{subsec:qanalogs}. If an elliptic generalization exists for $r=1$, we speculate that the $r$-fold $q$-deformed formula may be some limit of it.

In the next subsection we outline a possible physical interpretation for the existence of the factorization formula, Conjecture~\ref{conj:twoschur}.

\subsection{Five-dimensional partition functions} \label{subsec:instanton}

The AGT conjecture~\cite{Alday:2009aq,Wyllard:2009hg}, by now proven by several different approaches \cite{Dijkgraaf:2009pc,Nekrasov:2010ka,Mironov:2010pi,Alfimov:2011ju,Alba:2010qc,Yagi:2012xa,Maulik:2012wi,Tan:2013tq,Vartanov:2013ima,Aganagic:2013tta,Mironov:2015thk,Nekrasov:2015wsu,Cordova:2016cmu}, states a correspondence between conformal blocks of $q$-deformed Toda CFTs with $W$-algebra symmetry and partition functions of $\mathcal{N} = 1$ 5d gauge theories. Since conformal blocks of ($q$-deformed) Toda CFTs can be represented in matrix model form, this is really a relation between three quantities, i.e., a ``threesome'' \cite{Schiappa:2009cc} or ``triality'' \cite{Aganagic:2013tta}.

In this paper we only need the simplest example of an AGT correspondence: equality between a 3-point conformal block of $W_{q,t}( sl_2 )$---i.e., the $q$-Virasoro algebra---and the partition function of a 5d $T_2$ gauge theory. In particular, in the free-field formalism the 3-point conformal block is exactly represented by the $q$-Selberg integral. This gives a simple 5d argument for why the $q$-Selberg integral is solvable (i.e., is fully factorized as a product of Gamma-functions with linear arguments), because the partition function of a 5d $T_2$ gauge theory is fully perturbative, and does not have a non-trivial instanton part.

An important special case that has attracted significant attention is when the $q$-deformation parameter is an $r$th root of unity. In this case, as shown in \cite{Kimura:2011zf, Itoyama:2014pca}, the 5d partition function reduces to a 4d partition function on an ALE space, the ${\mathbb C}^2 / {\mathbb Z}_r$ orbifold. At the same time the matrix model ($q$-Selberg integral) reduces to precisely the integrals that we study in the current paper, see~\S\ref{subsec:qanalogs}.\footnote{It should be noted that a different matrix model was derived for the same partition function earlier in \cite{Kimura:2011zf}. This other matrix model was obtained using the methods of \cite{Sulkowski:2009br} and is not the usual AGT description. The relation between AGT-type matrix models and Sulkowski-type matrix models is not a simple change of variables; rather, it should involve a spectral duality transformation \cite{Mironov:2016cyq}.} This provides a 5d physical perspective for why the non-$q$-deformed model with $r > 1$ is solvable.

At the same time this argument also hints at a physical explanation of why the $q$-deformed model with $r > 1$ is solvable. Namely, instead of taking the root of unity limit, one can consider directly the 5-dimensional $T_2$ theory on an orbifold ${\mathbb C}^2 / {\mathbb Z}_r$. The resulting theory would depend on both $q$ and $r$ and -- because of the relation to orbifold $T_2$ theory -- should be given by a product of Gamma-functions with linear arguments. This could also shed light on why the two Schur average is factorized for these models.

To connect this answer to our integrals, however, there has to be an orbifold version of the 5d AGT correspondence. Orbifold generalizations of AGT have been considered in~\cite{Bonelli:2009zp,Belavin:2011pp,Nishioka:2011jk,Bonelli:2011jx,Bonelli:2011kv}, but we are unaware of an exact relationship to the matrix models considered in our paper. Thus, the physical interpretation of the topmost nodes in Figure~\ref{fig:2schur} remains an open problem.

This argument also allows us to make a connection to mathematics in the enumerative geometry of symplectic resolutions. The $T_2$ theory in five dimensions is known to be equivalent to a $U(1)$ theory with two fundamental hypermultiplets. The partition function of this theory is mathematically a K-theoretic integral (push-forward) of a sheaf on a Hilbert scheme of points on $\mathbb{C}^2$ that corresponds to fundamental matter. The results of the current paper correspond to considering instead a Hilbert scheme of points on $\mathbb{C}^2 / {\mathbb Z}_r$.

\subsection{Superconformal indices and surface defects}
\label{subsec:index}

Another way to explain the solvability of the $q$-Selberg integral is to note that it is a special case of the elliptic Selberg integral~\cite{vanDiejen:2001,Forrester:2007}, which is itself equal to the superconformal index of a 4d $\mathcal{N}=1$ gauge theory with gauge group $Sp(2N)$, one chiral multiplet in the antisymmetric tensor representation, and six fundamental chiral multiplets. This gauge theory is s-confining~\cite{Seiberg:1994bz,Seiberg:1994pq,Csaki:1996sm,Csaki:1996zb}, hence it admits an infrared description in terms of chiral multiplets interacting via an irrelevant superpotential. The absence of a gauge group in this description implies that the index factors into a product of elliptic gamma functions, which in turn implies the factorization of the $q$-Selberg integral.

Recall that the superconformal index is the partition function of the theory on $S^3 \times S^1$. Replacing $S^3$ by a Lens space $L(r,p)$ leads to a refined index which has received some attention in the literature~\cite{Benini:2011nc,Razamat:2013opa,Spiridonov:2016uae}. Since, of course, $L(r,n)$ is a $\mathbb{Z}_r$ orbifold of $S^3$, this suggests that perhaps the $r$-fold $q$-Selberg integral considered in~\S\ref{subsec:qanalogs} is a special case of the superconformal index of the same $Sp(2N)$ gauge theory on $L(r,n) \times S^1$ (for some $n$, e.g., $n=1$). More generally, other $r$-fold matrix models of the kind considered in~\S\ref{sec:genconstruction} may relate to the index of the same CFT on a Lens space as generates the parent matrix model on $S^3$.

An intriguing feature of this potential connection to indices is that it could explain the physics behind the factorization of Schur averages. It is natural to conjecture that the insertion of a Schur function (or the appropriate elliptic analog) into a superconformal index is related to the insertion of a surface defect in the partition function of the gauge theory, see for instance~\cite{Razamat:2013jxa}. If so, the factorization of Schur averages in the $r$-fold matrix models considered in this paper could relate to the physics of surface defects on Lens spaces.

Nonetheless, at the time of writing we have been unable to make the connection between $r$-fold matrix models and Lens space indices explicit, and it remains an interesting open problem for future research.

\section*{Acknowledgments}
We thank G.~Dunne, A.~Morozov and M.~\"Unsal for helpful discussions. C.C. is supported by a Martin and Helen Chooljian membership at the Institute for Advanced Study and DOE grant DE-SC0009988.
B.H. is supported by Perimeter Institute for Theoretical Physics, and was funded by the Fundamental Laws Initiative of the Harvard Center for the Fundamental Laws of Nature for part of this research. Research at Perimeter Institute is supported by the Government of Canada through Industry Canada and by the Province of Ontario through the Ministry of Economic Development and Innovation.  A.P. is supported in part by the NWO Vici grant. The work of S.S. is supported in part by grants RFBR 16-02-01021 and 15-31-20832-mol-a-ved.

\appendix

\section{Divisibility and Quotients of Partitions} \label{app:schurdivisibility}

In this appendix we review the concept of dividing a partition by a natural number as it relates to our work. (see, e.g., \cite[pp.\ 12--14]{Macdonald:2008zz} for a textbook treatment).

\subsection{Determinants and $p$-quotients} \label{sec:determinants}

To motivate the concepts of divisibility and quotients for partitions, consider the determinant
\be \label{eqn:exDet}
\underset{1\le I,J \le N}{\det} \delta_{p|(\lambda_I + J - I)}\, f\biggl(\frac{\lambda_I + J - I}{p}\biggr)\,, 
\ee
for natural numbers $p, N \in \mathbb{N}$, a partition $\lambda_1 \ge \ldots \ge \lambda_N \ge 0$ and any function $f(n)$. 
Writing the determinant as a sum $\det M_{I J} = \sum_{\sigma} (-1)^\sigma \prod_I M_{I \sigma(I)}$, the permutations which contribute
  satisfy
\begin{equation}
  \sigma(I) \equiv I - \lambda_I \pmod p \,.
\end{equation}
This specifies $\sigma$ up to the ordering of pairs $I, I'$ for which
$I - \lambda_I \equiv I' - \lambda_{I'} \pmod p$.
This ordering can be specified by $p$ permutations $\sigma_\mu$, $0 \leqslant \mu <
p$, constructed so that
\begin{equation}
  \sigma (I_{\mu, i}) > \sigma (I_{\mu, i'}) \qquad \mbox{iff} \qquad
  \sigma_{\mu} (i) > \sigma_{\mu} (i') \,,
\end{equation}
where $I_{\mu, i}$ is the $i$th value of $I$ for which $I-\lambda_I-1
\equiv \mu \pmod p$. We have explicitly
\begin{equation}
  \sigma (I_{\mu, i}) = 1+p (\sigma_{\mu} (i)-1) + \mu \,.
\end{equation}

Recall that a permutation $\sigma_S$ on a set $S$ is \emph{finitary} (\emph{infinitary}) if it acts non-trivially on a finite (infinite) subset of $S$. As an infinitary permutation cannot be written as a finite product of transpositions, it is convenient to define its signature as $(-1)^\sigma \equiv 0$. With this definition, it is straightforward to check that
\be
(-1)^\sigma = \delta_p(\lambda) \prod_{\mu =0}^{p-1} (-1)^{\sigma_\mu} \,,
\ee
where the ``$p$-signature'' $\delta_p(\lambda)$ is the signature of the permutation $\widehat{\sigma}(I_{\mu,i}) = 1+p (i-1) + \mu$  (c.f.\ \cite[p.\ 50]{Macdonald:2008zz}).

Since the permutations which appear in the determinant are finitary, (\ref{eqn:exDet}) vanishes unless $\delta_p(\lambda) \ne 0$. For reasons which will soon become clear, we call a partition $\lambda$ with $\delta_p(\lambda) \ne 0$ ``$p$-divisible.'' In general, we have
\be
I_{\mu, i} = 1+p (i-i_\mu-1) + \mu\,, \qquad I_{\mu, i} > \lambda^\T_1\,,
\ee
where $\widehat{\sigma}$ is finitary iff $i_\mu = 0$ for all $0 \le \mu < p$. The $p$ integers $i_\mu$, constrained by $\sum_\mu i_\mu = 0$, can be thought of as a remainder. Conventionally, this is expressed in terms of the ``$p$-core'', $\lambda \bmod p$, which is the smallest partition $\chi$ such that $i_\mu^{(\chi)} = i_\mu^{(\lambda)}$.\footnote{We later show that $\chi$ is unique, which is not obvious at present.} We use the notation $(\lambda \bmod p)_\mu \equiv i_\mu^{(\lambda)}$ henceforward.

By expressing $\sigma$ in terms of $\sigma_\mu$, the determinant~(\ref{eqn:exDet}) can be rewritten as
\be
\underset{1\le I,J \le N}{\det} \delta_{p|(\lambda_I + J - I)}\, f\biggl(\frac{\lambda_I + J - I}{p}\biggr)= \delta_p(\lambda) \prod_{\mu =0}^{p-1} \underset{1\le i,j \le n_\mu}{\det} f\biggl(j-1+\frac{\lambda_{I_{\mu,i}}- I_{\mu,i} +\mu+1}{p}\biggr) \,,
\ee
where $n_\mu \equiv \floor{\frac{N-\mu-1}{p}}+1$. Notice that
\be \label{eqn:pquotient}
(\lambda/p^{(\mu)})_i \equiv i-(\lambda \bmod p)_\mu-1 + \frac{\lambda_{I_{\mu,i}}- I_{\mu,i} +\mu+1}{p} \,,
\ee
defines a partition $\lambda/p^{(\mu)}$, since $(\lambda/p^{(\mu)})_{i+1} \le (\lambda/p^{(\mu)})_i$ and $(\lambda/p^{(\mu)})_i = 0$ for $I_{\mu,i}>N$. We find
\be \label{eqn:exDetFormula}
\boxed {\underset{1\le I,J \le N}{\det} \delta_{p|(\lambda_I + J - I)}\, f\biggl(\frac{\lambda_I + J - I}{p}\biggr)= \delta_p(\lambda) \prod_{\mu =0}^{p-1} \underset{1\le i,j \le n_\mu}{\det} f\bigl((\lambda/p^{(\mu)})_i +j- i\bigr) \,,}
\ee
since $i_\mu = 0$ when $\delta_p(\lambda) \ne 0$.

The $p$ partitions $\lambda/p^{(\mu)}$, $0\le \mu < p$, defined in~(\ref{eqn:pquotient}) are known as the ``$p$-quotient'' of $\lambda$. As we have seen, these play an important role in evaluating determinants of the form~(\ref{eqn:exDet}). More generally, the determinant of an arbitrary matrix $M_{I J}$ satisfying $M_{I J} = 0$ for $(\lambda_I + J - I) \bmod p \ne 0$ can be found using the formula
\begin{equation} \label{eqn:genDetFormula}
\boxed{
\begin{aligned}
\underset{1\le I,J \le N}{\det} \delta_{p|(\lambda_I+ J - I)} \, f_{(J-1) \bmod p}&\biggl(\floor{\frac{N+\lambda_I-I}{p}},\floor{\frac{N-J}{p}}\biggr) \\
&= \delta_p(\lambda) \prod_{\mu=0}^{p-1} \underset{1\le i,j \le n_\mu}{\det} f_\mu \bigl(n_\mu + (\lambda/p^{(\mu)})_i - i,\,n_\mu - j\bigr) \,,
\end{aligned}}
\end{equation}
for $p$ functions $f_\mu(x,y)$, where~(\ref{eqn:exDetFormula}) is a special case.

\subsection{The abacus diagram}

The $p$-core, $p$-quotient, and $p$-signature have a graphical interpretation, as follows. Consider a plot consisting of the points
\be \label{eqn:abacus1}
\biggl\{ \biggl((i-\lambda_i-1) \bmod p, \floor{\frac{\lambda_i - i}{p}} \biggr) \biggm| i\in\mathbb{N} \biggr\} \,,
\ee
in $\mathbb{Z}_p \times \mathbb{Z}$, which is the same as the plot
\be \label{eqn:abacus2}
\biggl\{ \biggl(\mu, (\lambda \bmod p)_\mu +(\lambda/p^{(\mu)})_i - i \biggr) \biggm| i \in \mathbb{N}, 0\le \mu < p\biggr\}\,.
\ee
This is known as the ``abacus diagram'', and can be constructed from the Young diagram associated to $\lambda$ as follows.

We assign a binary digit $1$ ($0$) to each vertical (horizontal) line segment along the lower-right boundary of the Young diagram. Reading off the resulting sequence from upper-right to lower-left and padding the beginning (end) with an infinite number of $0$s ($1$s), we obtain an infinite binary sequence which encodes the Young diagram, see Figure~\ref{sfig:sequence}.
\begin{figure}
  \centering
  \begin{subfigure}{0.60\textwidth}
    \centering
    \includegraphics[height=2 in]{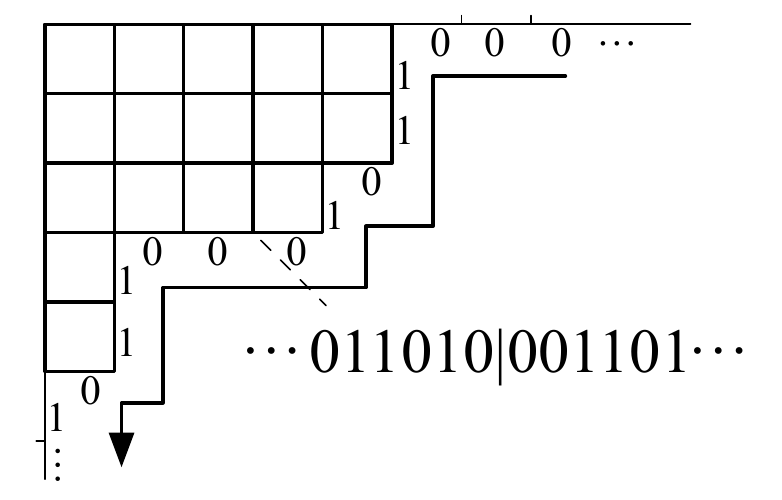}
    \caption{A Young diagram determines a binary sequence}
    \label{sfig:sequence}
  \end{subfigure}
   \begin{subfigure}{0.38\textwidth}
    \centering
    \includegraphics[height=2 in]{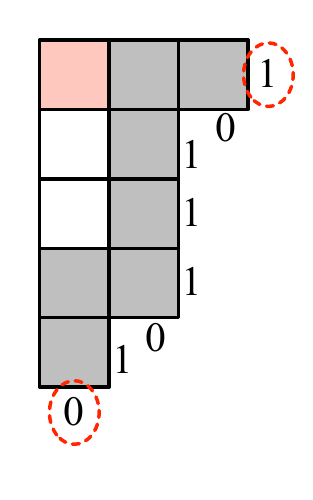}
    \caption{An inversion and its rim hook}
    \label{sfig:rimhook}
  \end{subfigure}
  \caption{\subref{sfig:sequence}~The lower-right boundary of a Young diagram determines a binary sequence, where we indicate the centerline of the sequence (the point for which the number of 1's to the left and 0's to the right are equal) with a bar. \subref{sfig:rimhook}~For each inversion in the sequence ($1$ before a $0$), there is a corresponding box in the Young diagram (red) and an associated rim hook (grey). Removing the inversion (exchanging the circled digits) corresponds to removing the rim hook.}
\end{figure}
This sequence has a natural centerline where it crosses the diagonal of the Young diagram (the point where the number of $1$s preceding is equal to the number of $0s$ following). For each ``inversion''---a pair of digits $1$, $0$, where the $1$ precedes the $0$---there is a corresponding box in the Young diagram with hook length equal to the distance between the two digits. Swapping to the two digits corresponds to removing the associated rim hook from the Young diagram, see Figure~\ref{sfig:rimhook}.

To construct the abacus diagram, we print out the binary sequence left-to-right, top-to-bottom, in $p$ columns, arranged so that the centerline falls on a line break. Each $1$ in the sequence corresponds to the end of a row in the Young diagram, whereas in the abacus diagram the $1$s coincide with the points of the plot~(\ref{eqn:abacus1}), henceforward referred to as ``beads''. Conversely, using~(\ref{eqn:abacus2}) we see that the binary sequence in the column $\mu$, $0\le \mu < p$, corresponds to the $p$-quotient $\lambda/p^{(\mu)}$, where the centerline (determined by equating the $1$s preceding and the $0$s following, as before) is a height $(\lambda \bmod p)_\mu$ above the centerline of the abacus diagram.

The remainder $(\lambda \bmod p)_\mu$ is unaffected by sliding beads vertically in the abacus diagram (adding or removing a rim-hook of length $p$), hence the $p$-core is determined uniquely by sliding all the beads downwards as far as they can go. This reproduces the standard definition of the $p$-core: the Young diagram which results from removing length $p$ rim-hooks until none remain. The relation between the Young diagram, the abacus diagram, and the $p$-quotient and $p$-core is summarized in Figure~\ref{fig:corequotient}.
\begin{figure}
  \centering
  \begin{subfigure}{0.25\textwidth}
    \centering
    \includegraphics[height=1.65 in]{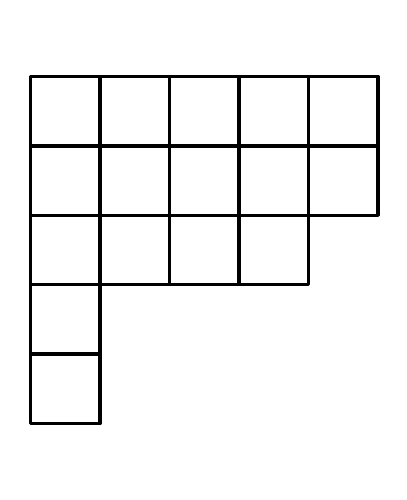}
    \caption{Young diagram}
    \label{sfig:Ydiagram}
  \end{subfigure}
   \begin{subfigure}{0.25\textwidth}
    \centering
    \includegraphics[height=1.65 in]{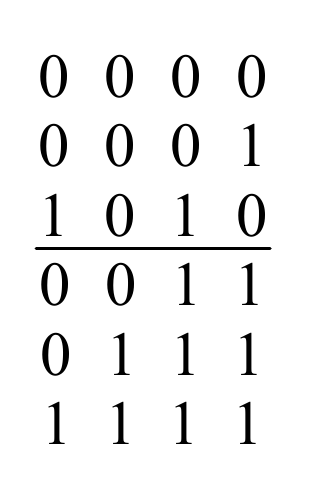}
    \caption{Abacus diagram}
    \label{sfig:abacusdiagram}
  \end{subfigure}
  \begin{subfigure}{0.45\textwidth}
    \centering
    \includegraphics[height=1.65 in]{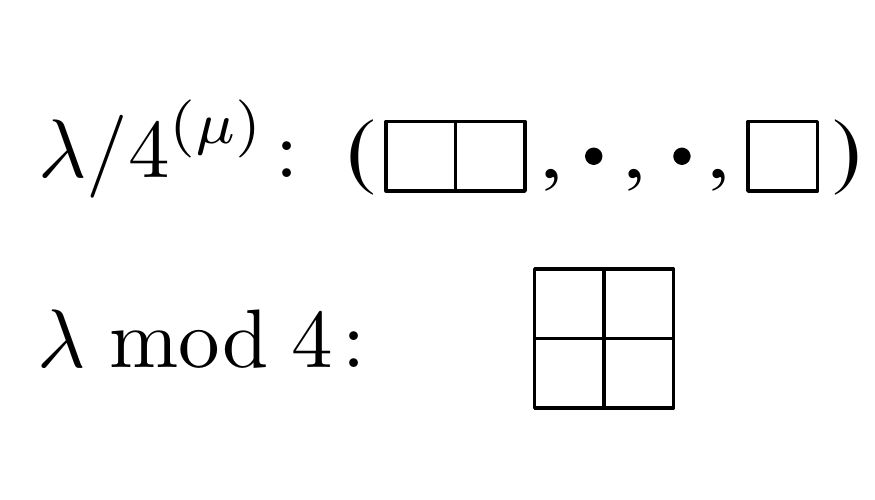}
    \caption{Core and quotient}
    \label{sfig:corequotient}
  \end{subfigure}
  \caption{Translating between~\subref{sfig:Ydiagram} a Young diagram, \subref{sfig:abacusdiagram} the corresponding abacus diagram, and~\subref{sfig:corequotient} the $p$-quotient and $p$-core, in the case $p=4$ for the Young diagram in Figure~\ref{sfig:sequence}.\label{fig:corequotient}}
\end{figure}

The $p$-signature can also be given a diagrammatic interpretation, as follows. Reading top to bottom in the $k$th column, label the $j$th bead with $k+(j-1)p$. Reading left to right, top to bottom, $\widehat{\sigma}(i)$ is the label associated to the $i$th bead, and $\delta_p(\lambda)$ is the signature of $\widehat{\sigma}$. Using this construction it is straightforward to show that the effect of adding a rim-hook spanning the rows $i, i+1, \ldots, j$ is to map
\begin{equation} \label{eqn:sigmahook}
\widehat{\sigma} \to \widehat{\sigma} \circ (j, j-1, \ldots, i)\,,
\end{equation}
in cycle notation, so that the group algebra is related to sliding beads in the abacus diagram. This observation provides another method (easier to apply by hand) for computing the $p$-signature. Removing length $p$ rim-hooks successively until no boxes remain, the $p$-signature of $\lambda$ is equal to $(-1)^{k}$ when there are $k$ rim-hooks spanning an even number of rows.

\subsection{Basic theorems}

We now catalog a few basic properties of $p$-cores, $p$-quotients, and $p$-signatures. For the sake of brevity, we omit most proofs, leaving them as an exercise for the interested reader.

To state these properties concisely, we define the functions
\begin{align}
  \mathcal{H}_{p,i}(\lambda) &= \left| \left\{ x
  \in \lambda | h_{\lambda} (x) \equiv i \pmod p \right\} \right| , & \mathcal{C}_{p,i}(\lambda) &= \left| \left\{ x
  \in \lambda | c_{\lambda} (x) \equiv i \pmod p \right\} \right| ,
\end{align}
which count the number of boxes in $\lambda$ with hook-length and contents congruent to $i \pmod p$, respectively.

We have
\begin{theorem} \label{thm:rdivisible}
  For any partition $\lambda$ and $p \in \mathbb{N}$:
  \begin{enumerate}
   \item
 $ |\lambda| = p \sum_\mu |\lambda/p^{(\mu)}| + |\lambda \bmod p|$\,.
  \item
  $\lambda$ is a $p$-core iff $\mathcal{H}_{p,0}(\lambda) = 0$.
  \item
  $\mathcal{H}_{p,0}(\lambda) \le | \lambda | / p$ and $\mathcal{H}_{p,0}(\lambda) \le \mathcal{C}_{p,0}(\lambda)$.
  \item
  The following are equivalent
  \begin{enumerate}[i.]
  \item
  $\lambda$ is $p$-divisible,
  \item
  $\mathcal{H}_{p,0}(\lambda) = | \lambda | / p$,
  \item
  $\mathcal{H}_{p,0}(\lambda) = \mathcal{C}_{p,0}(\lambda)$,
  \item
  $\forall i \in \mathbb{Z}, \mathcal{C}_{p,i}(\lambda) = |\lambda|/p$.
  \end{enumerate}
  \item
  If $\lambda$ is $p$-divisible, then $\forall i \in \mathbb{Z}$, $\mathcal{H}_{p,i}(\lambda) + \mathcal{H}_{p,p-i}(\lambda) = 2 |\lambda|/p$.
  \end{enumerate}
\end{theorem}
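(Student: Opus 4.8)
The plan is to read every assertion off the abacus/binary-sequence dictionary already set up in this appendix: each box of $\lambda$ is an inversion (a $1$ preceding a $0$) in the binary sequence, with hook length equal to the distance between the two digits; the quotient part $\lambda/p^{(\mu)}$ is the sequence read along column $\mu$ of the abacus; and the $p$-core is the ``gravity'' configuration. The only extra geometric input I need is that a rim hook of length $p$ occupies $p$ consecutive diagonals, so its boxes carry $p$ consecutive contents, exactly one in each class mod $p$.

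For Part 1 I would induct on the number of length-$p$ rim hooks removed: each removal deletes $p$ boxes and slides one bead up one step within a single column, hence deletes exactly one box from the corresponding quotient component, terminating at the $p$-core (empty quotient) of size $|\lambda \bmod p|$. The central lemma is that the hook lengths of $\lambda$ divisible by $p$ are exactly $p$ times the hook lengths of the quotient components, since an inversion at distance divisible by $p$ is precisely a same-column inversion; this gives $\mathcal{H}_{p,0}(\lambda)=\sum_\mu|\lambda/p^{(\mu)}|$, and with Part 1, $p\,\mathcal{H}_{p,0}(\lambda)=|\lambda|-|\lambda\bmod p|$. Part 2 is then immediate ($\lambda$ is a $p$-core iff its quotient is empty iff $\mathcal{H}_{p,0}=0$), as are the first inequality of Part 3 and the equivalence of (a) and (b). For the content statements I would use that each rim-hook removal drops one box of every residue, so $\mathcal{C}_{p,i}(\lambda)=\mathcal{C}_{p,i}(\lambda\bmod p)+\mathcal{H}_{p,0}(\lambda)$ for all $i$ (the number of removals being $\mathcal{H}_{p,0}(\lambda)$). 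The second inequality of Part 3 follows since $\mathcal{C}_{p,0}(\lambda\bmod p)\ge0$; (a)$\Leftrightarrow$(c) follows because $\mathcal{C}_{p,0}(\lambda\bmod p)=0$ forces the core to have no content-$0$ box, hence to be empty (the box $(1,1)$ always has content $0$); and (a)$\Rightarrow$(d) follows because a trivial core has every $\mathcal{C}_{p,i}=0$, leaving $\mathcal{C}_{p,i}(\lambda)=\mathcal{H}_{p,0}(\lambda)=|\lambda|/p$.

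For Part 5 and the converse (d)$\Rightarrow$(a) I would pass to generating functions and evaluate at the nontrivial $p$th roots of unity $\zeta^j$, $\zeta=e^{2\pi\sqrt{-1}/p}$, via the roots-of-unity filter. Writing $P_\lambda(t)=\sum_x t^{h_\lambda(x)}$, the filter shows that $\mathcal{H}_{p,i}+\mathcal{H}_{p,p-i}=2|\lambda|/p$ for all $i$ reduces to $\operatorname{Re}P_\lambda(\zeta^j)=0$ for $j\not\equiv0$. Decomposing each inversion by the runners $(\rho,\tau)$ carrying its two endpoints gives $P_\lambda(\zeta^j)=|\lambda|/p+\sum_{\rho\ne\tau}\zeta^{j(\rho-\tau)}C_{\rho\tau}$, where $C_{\rho\tau}$ counts the cross-runner bead--gap inversions. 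The key lemma is that, when $\lambda$ is $p$-divisible, $C_{\rho\tau}+C_{\tau\rho}=|\lambda/p^{(\rho)}|+|\lambda/p^{(\tau)}|$; granting it, the character sum $\sum_{k}\cos(2\pi jk/p)=0$ collapses the cross terms to $-|\lambda|/p$ and yields $\operatorname{Re}P_\lambda(\zeta^j)=0$. For (d)$\Rightarrow$(a), the same filter applied to the row-by-row content generating function $\hat c_\lambda(t)=(t-1)^{-1}\sum_{i=1}^N(t^{\lambda_i-i+1}-t^{1-i})$ (any $N\ge\ell(\lambda)$) reduces (d) to $\sum_i\zeta^{j(\lambda_i-i)}=\sum_i\zeta^{-ji}$ for all $j\ne0$, i.e.\ to equidistribution of the residues $(\lambda_i-i)\bmod p$; by the abacus description this is exactly equal bead counts on all runners, which is triviality of the $p$-core.

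The hard part will be the pairing identity $C_{\rho\tau}+C_{\tau\rho}=|\lambda/p^{(\rho)}|+|\lambda/p^{(\tau)}|$: the two cross counts must be read off the $\beta$-sets of the quotient parts, and the identity holds only because the trivial-core hypothesis equalizes the bead counts per runner, forcing the boundary contributions to cancel; establishing it cleanly needs a careful bijective or telescoping argument rather than a manipulation of formally divergent sums. A secondary bookkeeping hazard is aligning the several sign and offset conventions---the abacus bead labelling, the definition of $\delta_p$, and the runner/quotient indexing---so that ``equal bead counts per runner'' is literally the condition $\lambda\bmod p=\emptyset$. Once these conventions are pinned down, Part 4 closes by combining (a)$\Leftrightarrow$(b)$\Leftrightarrow$(c), (a)$\Rightarrow$(d), and the filter argument for (d)$\Rightarrow$(a).
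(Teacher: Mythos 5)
Your plan is correct, and for Parts 1--4 it is essentially the argument the paper gestures at: the paper omits the proof and only hints that one should ``construct $\lambda$ from its $p$-core by sliding beads in the abacus diagram, keeping track of both sides,'' which is exactly your rim-hook induction combined with the standard dictionary (boxes = inversions, hook lengths divisible by $p$ = same-runner inversions = boxes of the $p$-quotient, each length-$p$ rim hook meeting every content class exactly once). Where you genuinely diverge is Part 5 and the implication (d)$\Rightarrow$(a): instead of tracking $\mathcal{H}_{p,i}$ directly under bead slides --- which is delicate, since adding a rim hook perturbs the hook lengths of every box in the rows and columns it touches, not just the added boxes --- you pass to generating functions evaluated at roots of unity. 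This buys a clean reduction: Part 5 becomes $\Re P_\lambda(\zeta^j)=0$ for $j\not\equiv 0$, and all the combinatorial difficulty is isolated in the single cross-runner identity $C_{\rho\tau}+C_{\tau\rho}=|\lambda/p^{(\rho)}|+|\lambda/p^{(\tau)}|$. That identity is true under the trivial-core hypothesis, and the ``telescoping argument'' you anticipate does work: fix the runner-$\rho$ bead set and move one bead of the runner-$\tau$ set up by one step; the count with the tie-breaking convention $b\ge g$ changes by $[\,g_0\in B_\rho\,]$ while the count with $b>g$ changes by $1-[\,g_0\in B_\rho\,]$, so the sum increases by exactly $1=\Delta(|\lambda/p^{(\tau)}|)$, and the base case $\rho$-set equal to $\tau$-set gives $2|\nu|$. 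The charge-zero (trivial core) normalization of both bead sets is what anchors the base case, confirming your remark that the hypothesis is essential. Your Fourier argument for (d)$\Rightarrow$(a) is likewise sound: vanishing of the content generating function at all nontrivial $p$th roots of unity is equidistribution of the residues $(\lambda_i-i)\bmod p$, which on the abacus is precisely $i_\mu=0$ for all $\mu$. The trade-off is that the paper's bead-sliding induction, if carried out, would be more uniform across all five parts, whereas your route is more modular and makes Part 5 --- the one statement for which naive bookkeeping is genuinely awkward --- mechanical once the pairing lemma is established.
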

\noindent One way to prove these results is to construct $\lambda$ from its $p$-core by sliding beads in the abacus diagram, keeping track of both sides of the relevant (in)equality.

\begin{theorem} \label{thm:psignatureformula}
If $\lambda$ is $p$-divisible then its $p$-signature is given by the product over boxes
\be
\delta_p(\lambda) = (-1)^{\frac{|\lambda|}{p}}\prod_{x\in\lambda} (-1)^{\floor{\frac{c_\lambda(x)}{p}}+\floor{\frac{h_\lambda(x)}{p}}} \,.
\ee
\end{theorem}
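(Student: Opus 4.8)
The plan is to induct on the number of length-$p$ rim hooks needed to assemble $\lambda$. Since $\lambda$ is $p$-divisible, the permutation $\widehat{\sigma}$ of the abacus discussion is finitary and the $p$-core of $\lambda$ is trivial, so $\lambda$ can be built from the empty partition by successively adding $|\lambda|/p$ rim hooks of length $p$. The abacus discussion around~(\ref{eqn:sigmahook}) already shows that each such rim hook $R$ spanning $s$ rows multiplies the signature by the sign of a cycle of length $s$, namely $(-1)^{s-1}=(-1)^{\operatorname{ht}(R)}$ with $\operatorname{ht}(R)=s-1$; starting from $\delta_p(\emptyset)=1$ this gives $\delta_p(\lambda)=\prod_R(-1)^{\operatorname{ht}(R)}$, independent of the order of assembly. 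It therefore suffices to prove that the right-hand side, which I denote $\Phi(\lambda)=(-1)^{|\lambda|/p}\prod_{x\in\lambda}(-1)^{\floor{c_\lambda(x)/p}+\floor{h_\lambda(x)/p}}$, obeys the same recursion: $\Phi(\emptyset)=1$ and $\Phi(\nu\sqcup R)=(-1)^{\operatorname{ht}(R)}\,\Phi(\nu)$ for a single added rim hook.

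I would organize the inductive step by isolating three contributions to $\Phi(\nu\sqcup R)/\Phi(\nu)$. The prefactor contributes a single $-1$, since $|\lambda|/p=|\nu|/p+1$. The content factor is clean because contents are intrinsic box coordinates that do not change when cells are added, so it equals $\prod_{x\in R}(-1)^{\floor{c_\lambda(x)/p}}$. Here I would use the standard fact that the cells of a length-$p$ border strip carry $p$ consecutive contents $c_0,c_0+1,\dots,c_0+p-1$; a direct count gives $\sum_{x\in R}\floor{c_\lambda(x)/p}=c_0$ exactly, so this factor is $(-1)^{c_0}$. In terms of the beta-number description (the bead positions $\lambda_i-i$ of the abacus), adding $R$ corresponds to sliding one bead from a position $\beta$ to $\beta+p$, and one checks that $c_0=\beta+1$.

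The remaining, and genuinely hard, contribution is the hook factor $(-1)^\Delta$ with $\Delta\equiv\sum_{x\in\lambda}\floor{h_\lambda(x)/p}-\sum_{x\in\nu}\floor{h_\nu(x)/p}\pmod 2$; this is the main obstacle, because adding a single rim hook changes the hook lengths of many pre-existing boxes. I would control it again through beta-numbers: the hook lengths of a partition are exactly the differences $b-a$ over pairs consisting of a bead position $b$ and a gap position $a<b$, so $\sum_x\floor{h/p}$ is a sum of $\floor{(b-a)/p}$ over such pairs. Passing from $\nu$ to $\nu\sqcup R$ replaces the bead at $\beta$ by one at $\beta+p$; taking the symmetric difference of the two sets of (bead, gap) pairs and summing the values $\floor{(b-a)/p}$ over it collapses, after the elementary cancellations for pairs lying below $\beta$ or above $\beta+p$, to $\Delta\equiv D+1+T\pmod 2$, where $D$ is the number of gaps below $\beta$ and $T$ the number of beads above $\beta+p$.

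Assembling the three pieces, $\Phi(\nu\sqcup R)/\Phi(\nu)=(-1)^{1+c_0+\Delta}=(-1)^{(\beta+1)+D+T}$, so it remains to match this with $(-1)^{\operatorname{ht}(R)}$. Writing $m=\operatorname{ht}(R)$ for the number of beads strictly between $\beta$ and $\beta+p$, I would finish with the exact integer identity $D=\beta+1+m+T$, which follows from the conservation of the total bead count above any sufficiently low cutoff (counting the beads in $[-M,\beta+p)$ in two ways). This gives $(\beta+1)+D+T=2D-m\equiv m\pmod 2$, closing the induction and hence the theorem. The only delicate input is the parity computation of $\Delta$; the content bookkeeping and the final bead-counting identity are elementary.
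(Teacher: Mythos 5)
Your proof is correct and follows precisely the strategy the paper itself sketches for this theorem: build $\lambda$ from its empty $p$-core one length-$p$ rim hook at a time, use~(\ref{eqn:sigmahook}) to see that each hook multiplies $\delta_p$ by $(-1)^{\operatorname{ht}(R)}$, and verify that the right-hand side obeys the same recursion. I checked the three contributions in your inductive step --- the prefactor, the content sum $\sum_{x\in R}\floor{\frac{c_\lambda(x)}{p}}=c_0=\beta+1$, and the hook-length bookkeeping via bead--gap pairs giving $\Delta=D+T+1$ --- as well as the closing bead-counting identity $D=\beta+1+m+T$; all are sound, so you have simply supplied the details the paper leaves as an exercise.
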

\noindent As above, this formula can be proven by constructing $\lambda$ from its (empty) $p$-core and keeping track of both sides, e.g., using~(\ref{eqn:sigmahook}) to track the changes in $p$-signature.

\begin{theorem} \label{thm:transpose}
The $p$-core, $p$-quotient, and $p$-signature transform as follows under a transposition of the Young diagram
\begin{enumerate}
\item
$\lambda^\T \bmod p = (\lambda \bmod p)^\T$ \,,
\item
$\lambda^\T/p^{(\mu)} = (\lambda/p^{(p-1-\mu)})^\T$ \,,
\item
$\delta_p(\lambda^\T) = (-1)^{|\lambda|+\frac{|\lambda|}{p}} \delta_p(\lambda)$ \,.
\end{enumerate}
\end{theorem}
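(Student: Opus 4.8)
The plan is to derive all three statements from the single fact that transposition acts on the infinite binary sequence (Maya diagram) of~\S\ref{sec:determinants} by the orientation-reversing bit-flip $b(m) \mapsto 1 - b(-1-m)$. Concretely, the positions of the $1$'s (the ``beads'') are the integers $\{\lambda_i - i : i \ge 1\}$, and the standard duality of Young diagrams sends this set to $\{-1-m : b(m) = 0\}$, which is exactly the bead set of $\lambda^\T$. I would first record this reflection formula and check it against the boundary-path description in Figure~\ref{sfig:sequence}: transposing the diagram reflects its lower-right boundary across the main diagonal and exchanges vertical with horizontal steps, which is precisely $m \mapsto -1-m$ together with $0 \leftrightarrow 1$.

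For Parts 1 and 2, I would feed this involution through the abacus construction. By~(\ref{eqn:abacus1}) the bead at Maya-position $m$ sits in column $(-m-1) \bmod p$, so the reflection $m \mapsto -1-m$ carries column $\mu$ to column $p-1-\mu$, while the simultaneous bit-flip replaces the sub-sequence recorded in each column by its complement. Reading off~(\ref{eqn:abacus2}), a complemented-and-reflected column sequence is exactly the encoding of the transposed sub-partition, which gives $\lambda^\T/p^{(\mu)} = (\lambda/p^{(p-1-\mu)})^\T$ immediately (Part 2). The same analysis applied to the column offsets $(\lambda \bmod p)_\mu$ in place of the column sub-partitions reflects and reverses those offsets and hence transposes the $p$-core, establishing Part 1; in particular $\lambda$ is $p$-divisible iff $\lambda^\T$ is, which I use below.

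The genuinely arithmetic step is Part 3, which I would settle through the closed formula of Theorem~\ref{thm:psignatureformula} rather than by tracking the permutation $\widehat{\sigma}$ by hand. If $\lambda$ is not $p$-divisible then both sides vanish by Part 1, so assume $\lambda$ is $p$-divisible. Under transposition contents negate, $c_{\lambda^\T}(x^\T) = -c_\lambda(x)$, while hook lengths are invariant, $h_{\lambda^\T}(x^\T) = h_\lambda(x)$. Substituting into Theorem~\ref{thm:psignatureformula} and using the elementary floor identity $(-1)^{\floor{-c/p}} = (-1)^{\floor{c/p}}\,(-1)^{[p\,\nmid\, c]}$ for integer $c$, the hook factors are unchanged and the content factors produce an extra sign $(-1)^{\#\{x \in \lambda\,:\, p\,\nmid\, c_\lambda(x)\}} = (-1)^{|\lambda| - \mathcal{C}_{p,0}(\lambda)}$. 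The equivalence (i)$\,\Leftrightarrow\,$(iv) of Theorem~\ref{thm:rdivisible} gives $\mathcal{C}_{p,0}(\lambda) = |\lambda|/p$ for $p$-divisible $\lambda$, so this sign equals $(-1)^{|\lambda| - |\lambda|/p} = (-1)^{|\lambda| + |\lambda|/p}$, which is the claimed factor.

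I expect the main obstacle to be bookkeeping rather than ideas: pinning down the centering convention so that the reflection $m \mapsto -1-m$ maps column $\mu$ cleanly onto column $p-1-\mu$ (and checking the empty-partition base case fixes the centerline consistently) is where off-by-one and orientation errors are most likely to arise. Once the reflection is correctly transported through~(\ref{eqn:abacus1})--(\ref{eqn:abacus2}), Parts 1 and 2 are forced, and Part 3 reduces to the short floor-function computation above.
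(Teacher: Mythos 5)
Your proposal is correct and follows essentially the same route as the paper: Parts 1 and 2 via the reflection-and-complementation action of transposition on the binary (Maya) sequence and hence on the abacus diagram, and Part 3 via Theorem~\ref{thm:psignatureformula} combined with the fact that $p$-divisibility forces $\mathcal{C}_{p,0}(\lambda)=|\lambda|/p$ (the paper packages this as $\sum_{x\in\lambda}\bigl(\ceil{\tfrac{c_\lambda(x)}{p}}-\floor{\tfrac{c_\lambda(x)}{p}}\bigr)=(p-1)\tfrac{|\lambda|}{p}$, which is the same count of boxes with $p\nmid c_\lambda(x)$ that your floor identity produces). Your write-up simply supplies the bookkeeping the paper leaves as ``obvious from the abacus diagram.''
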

\noindent The first two parts are obvious from the abacus diagram, whereas the third follows from Theorem~\ref{thm:psignatureformula} as well as
\be
 \sum_{x \in \lambda} \left(\ceil{\frac{c_\lambda(x)}{p}} - \floor{\frac{c_\lambda(x)}{p}}\right) = (p-1)\frac{|\lambda|}{p} \,,
\ee
for $\lambda$ $p$-divisible, which is a direct consequence of property 4.iv of Theorem~\ref{thm:rdivisible}.

Using Theorem~\ref{thm:transpose}, we write down a useful variant of~(\ref{eqn:genDetFormula}):
\begin{equation} \label{eqn:genDetFormulaTranspose}
\boxed{
\begin{aligned}
(-1)^{|\lambda|} & \underset{1\le I,J \le K}{\det} \delta_{p|(\lambda_J^\T+ I - J)} \, f_{(-I) \bmod p}\biggl(\floor{\frac{N+I-1}{p}},\floor{\frac{N+J-\lambda_J^\T-1}{p}}\biggr) \\
&= \delta_p(\lambda) \prod_{\mu=0}^{p-1} (-1)^{|\lambda/p^{(\mu)}|} \underset{1\le i,j \le k_\mu}{\det} f_{\mu} \bigl(n_\mu + i-1,\, n_\mu +j- (\lambda/p^{(\mu)})_j^\T-1\bigr) \,.
\end{aligned}}
\end{equation}
for $K\ge \lambda_1$, where $k_\mu = \floor{\frac{K+\mu}{p}}$ and $N+K \equiv 0 \pmod p$.

\bibliography{NonGaussian}{}
\bibliographystyle{utphys}

\end{document}